\newtheorem{Theorem}{Theorem}
\newtheorem*{theorem*}{Theorem}
\newtheorem*{Theorem*}{Theorem}
\newtheorem{definition}[Theorem]{Definition}
\newtheorem{proposition}[Theorem]{Proposition}
\newtheorem*{proposition*}{Proposition}
\newtheorem{observation}[Theorem]{Remark}
\newtheorem{Example}[Theorem]{Example}
\newtheorem{lemma}[Theorem]{Lemma}
\newcommand{\blue}[1]{{\color{Violet}#1}}
\newcommand{\new}[1]{{\color{black}#1}} 
\begin{document}

\title{Characterising memory in quantum channel discrimination via constrained separability problems}

\author{Ties-A. Ohst}
\affiliation{Naturwissenschaftlich-Technische Fakult{\"a}t, Universit{\"a}t Siegen, Siegen 57068, Germany}

\author{Shijun Zhang}
\affiliation{Sorbonne Universit\'e, CNRS, LIP6, F-75005 Paris, France}

\author{Hai Chau Nguyen}
\affiliation{Naturwissenschaftlich-Technische Fakult{\"a}t, Universit{\"a}t Siegen, Siegen 57068, Germany}

\author{Martin Pl\'{a}vala}
\affiliation{Naturwissenschaftlich-Technische Fakult{\"a}t, Universit{\"a}t Siegen, Siegen 57068, Germany}
\affiliation{Institut f\"{u}r Theoretische Physik, Leibniz Universit\"{a}t Hannover, Hannover, Germany}

\author{Marco Túlio Quintino}
\affiliation{Sorbonne Universit\'e, CNRS, LIP6, F-75005 Paris, France}

\maketitle

\begin{abstract}
Quantum memories are a crucial precondition in many protocols for processing quantum information. A fundamental problem that illustrates this statement is given by the task of channel discrimination, in which an unknown channel drawn from a known random ensemble should be determined by applying it for a single time. In this paper, we characterise the quality of channel discrimination protocols when the quantum memory, quantified by the auxiliary dimension, is limited. This is achieved by formulating the problem in terms of separable quantum states with additional affine constraints that all of their factors in each separable decomposition obey. \new{Practically, the formulation allows us to introduce computational methods that provide a sequence of rigorous upper and lower bounds on the performance of channel discrimination tasks with restricted memory resources. The practical aspect of our methods is illustrated in single-copy scenarios, and the flexibility of our approach allows us to systematically characterise quantum and classical memories in scenarios with multiple copies, which may involve adaptive or non-adaptive protocols.
In concrete terms, our methods enable the identification of channel discrimination scenarios where classical or quantum memory is required, and to formally understand the role of quantum information transfer in adaptive channel discrimination protocols.
} 
\end{abstract}

{
  \hypersetup{linkcolor=ForestGreen}
  \tableofcontents
}

\section{Introduction}
Memory describes the preservation of information over time and is a fundamental concept for the description of stochastic processes \cite{vanKampen92,Breuer2007book}. 
In quantum theory, memory phenomena are particularly interesting because information is described through quantum states, which can be entangled across different subsystems.

Exactly the interplay between the preservation of quantum information and the potential entanglement of the system with the others plays an important role in the understanding of open quantum dynamics \cite{Li2018physrep,deVega2017RMP,Milz2012PRXQ} and temporal correlations in quantum processes \cite{Taranto2019PRL,Pollock2018PRL,Vieira2022Quantum}. Considering that multiple approaches exist for formalising quantum and classical memory in quantum theory \cite{Kretschmann2005PRA, Taranto2020IJQI, Breuer201RMP, Vieira2024, rosset18, Konig2005, Milz2020PRX}, here we adopt the approach consistent with Refs.~\cite{Giarmatzi2021, Nery2021, Taranto2024characterising}, which defines quantum memory as an auxiliary quantum system to preserve states over time. The size of a quantum memory is then identified with the Hilbert space dimension of this auxiliary system. Analogously, classical memory is quantified by the number of different classical symbols that can be stored in an auxiliary space.

From the practical perspective, quantum memories are known as an important precondition for various information processing tasks such as the superdense coding protocol \cite{PhysRevLett.69.2881}; the latter describes the communication of two classical bits by the implementation of a single qubit operation. This is possible due to a perfect discrimination of four particular qubit channels by only a single call. The channel acts on a part of a maximally entangled state whose counterpart is stored in a quantum memory. The superdense coding protocol can be seen as a particular instance of a protocol to discriminate a given set of quantum channels. Such discrimination problems are of great relevance in the fields of quantum computing and quantum metrology. 

Despite the great potential of quantum memories in numerous applications \cite{Heshami2016}, their physical implementation is difficult and especially the storage of parts of high-dimensional entangled states is prone to noise. \new{In earlier works \cite{PhysRevLett.101.180501, PhysRevA.82.032302, Puzzuoli17, PhysRevLett.102.250501, jenvcova2016conditions}, it has been pointed out that the presence of quantum memory is in general necessary to obtain the optimal performance in a channel discrimination problem.}
This \new{fact} motivates the question of how the amount of quantum memory affects the quality of a quantum information protocol, such as the general channel discrimination task.

\new{
Our main contribution in this article is the introduction of a method to quantitatively determine the optimal success probability in the setting of memory restrictions for arbitrary channel discrimination tasks. This concrete contribution is formalised in Thm.~\ref{prop:symmetric_ext_tester} as an idea which is later generalised in the multi-copy regime in the Sections ~\ref{sec:multi_memory_const} and \ref{sec:class_adp} by characterising different instances of adaptive and non-adaptive protocols.}

The key insight is that the underlying problem can be formulated as a maximisation of a linear functional over separable quantum states whose factors in each separable decomposition obey additional affine linear constraints. Recently, in Ref.~\cite{Berta2021}, a hierarchy of semidefinite programs has been proven to provide a solution as a decreasing and convergent sequence of upper bounds for this class of problems.   
Building on this, we formulate a general method to systematically quantify memory constraints in channel discrimination tasks. As such, it is applicable for general instances of channel discrimination tasks and does not depend on additional structures such as symmetries of the problem. For the computation of lower bounds, we apply the seesaw method, that has been used in the context of memory constrained quantum metrology \cite{kurdzialek2024}. To apply the seesaw algorithm in a rigorous manner, we derive, using the polytope approximation technique~\cite{Cavalcanti2015LHV,Hirsch16,Nguyen19,Nery2021,Ohst24, steinberg2023}, a general bound on the distance from the seesaw value to the true optimum. 

Turning to the case where multiple copies of the unknown channel are accessible, it has been shown that adaptive strategies can improve the quality of the channel discrimination protocol \cite{Harrow2010, Bavaresco21PRL,Bavaresco22JMP, Wilde2020, Salek2022, Li2022}. Also in regard of memory effects, the observed temporal correlations in multi-time processes have been used in Ref.~\cite{Vieira2024} to detect the presence of quantum memory.   

In the case of adaptive strategies for channel discrimination, the role of quantum memory effects is subtle due to the possible presence of classical memory effects. The latter has been investigated recently for multi-time quantum processes \cite{Taranto2024characterising, Bäcker24}. 
Using the framework of constrained separability, here we develop a systematic characterisation of adaptive strategies for channel discrimination in which the amount of quantum and classical memory is limited. \new{Apart from the characterisation of the role of memory, we arrive at new insights into the characterisation of multi-copy strategies. In particular, we show that adaptive strategies, in which the transmission of quantum information between the channel applications is prohibited, have no hierarchical relationship to parallel strategies.}

The manuscript is structured as follows. After fixing the notation in Section \ref{sec:notation}, we start with elaborating the problem of memory constraints in single-copy channel discrimination protocols and the connection to separability problems in Section \ref{sec:single_copy}. We proceed by reviewing multi-copy strategies for channel discrimination in Section \ref{sec:multi_without_const} before we discuss their corresponding memory constraints in Section \ref{sec:multi_memory_const}. In Section \ref{sec:class_adp}, we introduce the class of classically adaptive discrimination schemes. 

\section{Notation and preliminaries}
\label{sec:notation}
We begin by fixing the notation that is used in this paper. \blue{Quantum systems}, labelled by capital letters $\rm{A}$, are associated to finite-dimensional Hilbert spaces $\mathcal{H}_{\rm{A}} \cong \mathbb{C}^{\rm d_A}$ of dimension $d_{\rm A}$. Given a \blue{linear operator} $X$, the system on which it acts on is indicated by a subscript, e.g., the notation $X_{\rm A}$ indicates that $X_{\rm A} \in \mathcal{L}(\mathcal{H}_{\rm A})$. A \blue{state} $\rho_{\rm A}$ on a system $\rm A$ is a positive semidefinite operator, indicated as $\rho_{\rm A} \geq 0$, such that $\Tr(\rho_{\rm A}) = 1$ where $\Tr(\cdot)$ denotes the trace. A \blue{measurement} with $N$ outcomes on a system $\rm{A}$ is a positive operator-valued measure (POVM), i.e., a collection of operators $\{M_{\rm A}^{i}\}_{i=1}^{N}$ such that $M_{\rm A}^{i} \geq 0$ and $\sum_{i=1}^{N} M_{\rm A}^{i} = \mathds{1}_{\rm A}$. \blue{Composite systems} containing subsystems like $\rm{A}$ and $\rm{B}$ are denoted by the string $\rm{AB}$ and their associated Hilbert space is the tensor product $\mathcal{H}_{\rm A} \otimes \mathcal{H}_{\rm B}$. Given a system $\rm{A}$, $\rm{A}^{k}$ denotes the system comprising $k$ copies of $\rm{A}$ and the associated Hilbert space is given $\mathcal{H}_{\rm A}^{\otimes k}$. $\textnormal{Sym}(\rm{A},k)$ denotes the system associated to the \blue{symmetric subspace} of $\mathcal{H}_{\rm A}^{\otimes k}$, i.e., the space spanned by vectors that are invariant under all permutations of the individual copies of $\rm{A}$ \cite{harrow2013}.

A \blue{map} from a system $\rm{A}$ to a system $\rm{B}$, denoted as $\mathcal{D}_{\rm A \rightarrow B}$, is an element from $\mathcal{L}(\mathcal{L}(\mathcal{H}_{\rm A}), \mathcal{L}(\mathcal{H}_{\rm B}))$, i.e., a linear operator from one space of linear operators to another. In case that $\rm{A}=\rm{B}$, we write $\mathcal{D}_{\rm A}$ to abbreviate $\mathcal{D}_{\rm A \rightarrow A}$. Simple but important examples of maps are the transpose $X_{\rm A} \mapsto (X_{\rm A})^T$ and the partial transpose $X_{\rm AB} \mapsto (X_{\rm AB})^{T_{\rm B}}$. A map $\mathcal{D}_{\rm A \rightarrow B}$ is called \blue{positive} if  $\mathcal{D}_{\rm A \rightarrow B}(X_{\rm A})_{\rm B}$ is positive semidefinite for every positive semidefinite $X_{\rm A}$. It is called \blue{completely positive} if $(\textnormal{id}_{\rm E} \otimes \mathcal{D}_{\rm A \rightarrow B})_{\rm EA\rightarrow EB}$ is a positive map for all systems $\rm{E}$. A completely positive map that is trace-preserving is referred to as a \blue{quantum channel}. Simple and important examples for quantum channels are the identity on $\mathcal{L}(\mathcal{H}_{\rm A})$ written as $\textnormal{id}_{\rm A}$ and the partial trace denoted by $\Tr_{\rm A}(\cdot)$. Furthermore, the following channels are important for our discussion. 
\begin{definition}
\label{def:important_channels}
    The map $\mathcal{U}_{\rm A}(X_{\rm A}) = U_{\rm A}X_{\rm A}U_{\rm A}^{\dagger}$ is a quantum channel for every \blue{unitary operator} $U_{\rm A}$. One particular set of unitary operators on a $d$-dimensional system are the \blue{clock-shift operators} $X_d^i Z_d^j$ for $i,j\in \{0,\dots,d-1\}$, 
    where $X_d:=\sum_{l=0}^{d-1} \ketbra{l \oplus 1}{l}, Z_d:=\sum_{l=0}^{d-1} \omega^{l}\ketbra{l}{l}$ and $\omega:= e^{\sqrt{-1} \frac{2\pi}{d}}$. In particular for $d=2$, they correspond to the \blue{Pauli matrices} $\sigma_0 = X_{2}^{0}Z_{2}^{0}$, $\sigma_x = X_{2}^{1}Z_{2}^{0}$, $\sigma_y = i X_{2}^{1}Z_{2}^{1}$ and $\sigma_z = X_{2}^{0}Z_{2}^{1}$. 

   Other important non-unitary qubit channels are the families of \blue{bit-flip channels} $\mathcal{C}_{\textnormal{bf}, p}$ and \blue{amplitude damping channels} $\mathcal{C}_{\textnormal{ad}, p}$ with $p\in [0,1]$ given by 
    \begin{align}
        \mathcal{C}_{\textnormal{bf}, p}(\rho) &:= p \rho +  (1-p) \, \sigma_x \rho \sigma_x \\
        \mathcal{C}_{\textnormal{ad}, p}(\rho) &:= B_0 \rho B_0^{\dagger}+ B_1 \rho B_1^{\dagger}
    \end{align}
    with the operators 
    \begin{equation}
        B_0 = \begin{pmatrix}
            1 & 0 \\ 0 & \sqrt{1-p}
        \end{pmatrix}, \; B_1 = \begin{pmatrix}
            0 & \sqrt{p} \\ 0 & 0
        \end{pmatrix}.
    \end{equation}
\end{definition}
In particular, the unitary transformations given by clock-shift operators will be crucial to demonstrate a strong memory dependence of channel discrimination protocol, see Theorem ~\ref{prop:clock_shift_memory_dependence}.

A \blue{quantum instrument} is a collection of completely-positive maps $\{\mathcal{K}_{\rm A \rightarrow B}^{i}\}_{i=1}^{L}$ that sum up to a quantum channel $\mathcal{K}_{\rm A \rightarrow B}$, i.e., $\mathcal{K}_{\rm A \rightarrow B} = \sum_{i=1}^{L} \mathcal{K}_{\rm A \rightarrow B}^{i}$. Physically, quantum instruments describe the non-deterministic quantum evolutions of states in measurements if the measurement outcome is kept track of.  

The space of maps from $\rm{A}$ to $\rm{B}$ is isomorphic to the space of linear operators on the joint system $\rm{AB}$ that is  $\mathcal{L}(\mathcal{H}_{\rm A} \otimes \mathcal{H}_{\rm B})$.  One particularly important isomorphism is the \blue{Choi–Jamiołkowski isomorphism} \cite{Choi1972, Jamiokowski1972} that transforms a map $\mathcal{D}_{\rm A \rightarrow B}$ into an operator
\begin{align}
    D_{\rm AB} := \sum_{ij}\ketbra{i}{j}_{\rm A} \otimes \mathcal{D}_{\rm A \rightarrow B}(\ketbra{i}{j}_{\rm A})
\end{align}
where $\{\ket{i}\}_{i=0}^{d_{\rm A}-1}$ is an orthonormal basis of $\mathcal{H}_{\rm A}$. We call the operator $D_{\rm AB}$ the \blue{Choi matrix} of the super-operator $\mathcal{D}_{\rm A \rightarrow B}$. An important property of the Choi–Jamiołkowski isomorphism is that completely positive maps are mapped into positive semidefinite Choi-matrices. In particular, the Choi-matrices of quantum channels $\mathcal{C}_{\rm A \rightarrow B}$ can exactly be identified with positive semidefinite operators $C_{\rm AB}$ that satisfy $\Tr_{\rm B} (C_{\rm AB}) = \mathds{1}_{\rm A}$.

A practical tool for composing maps using the Choi matrix is the link product \cite{PhysRevA.80.022339, PhysRevA.77.062112}.
\begin{definition}[\blue{Link product} \cite{PhysRevA.80.022339, PhysRevA.77.062112}]
\label{def:link_product}
    Let $\rm{A}, \rm{E}$ and $\rm{B}$ be quantum systems and let $X_{\rm AE} \in \mathcal{L}(\mathcal{H}_{\rm A} \otimes \mathcal{H}_{\rm E})$ and $Y_{\rm EB} \in \mathcal{L}(\mathcal{H}_{\rm E} \otimes \mathcal{H}_{\rm B})$ be linear operators. The link product $Z_{\rm AB} = X_{\rm AE}*Y_{\rm EB}$ is defined by 
\begin{equation}
    X_{\rm AE}*Y_{\rm EB} := \Tr_{\rm E}\left[((X_{\rm AE})^{T_{\rm E}} \otimes \mathds{1}_{\rm B})(\mathds{1}_{\rm A} \otimes Y_{\rm EB})\right].
\end{equation}
\end{definition}
The link product has a clear operational meaning. The Choi matrix of a composition of two-operators is exactly the link product of the individual Choi matrices. If $\mathcal{X}_{\rm A\rightarrow E}$ and $\mathcal{Y}_{\rm E\rightarrow B}$ are maps, the Choi operator of the composition $(\mathcal{Y} \circ \mathcal{X})_{\rm A \rightarrow B}$ is given
by the link product $X_{\rm AE}*Y_{\rm EB}$ \cite{PhysRevA.80.022339, PhysRevA.77.062112}. Sometimes it is convenient to use a different order of the systems when using the link product. System permutations are then implicitly assumed and suppressed in the notation. 

\section{Single-copy channel discrimination}
\label{sec:single_copy}
We consider the problem of \blue{minimum-error channel discrimination}. One assumes that an unknown quantum channel is drawn from a given ensemble $\{q_i, \mathcal{C}_{\rm I \rightarrow O}^{i}\}_{i=1}^{N}$ consisting of known quantum channels $\mathcal{C}_{\rm I \rightarrow O}^{i}$ and known probabilities $q_i$ of how likely the corresponding channel is drawn. The labels $\rm{I}$ and $\rm{O}$ denote the input and output systems of the channels and the task is to determine the correct channel label $i$ assuming that the unknown channel $\mathcal{C}_{\rm I \rightarrow O}^{i}$ can be applied for a single time.

The general strategy that could be used for this problem exploits the presence of entanglement and a quantum memory and can be described as follows.

First, some quantum state $\rho_{\rm IE}$, potentially entangled with an auxiliary quantum system $\rm{E}$ that acts as quantum memory, is prepared. In the next step, the unknown channel acts on one part of this state resulting in $[\mathcal{C}_{\rm I \rightarrow O}^{i}\otimes \text{id}_{\rm E}] (\rho_{\rm IE})$. Finally, a measurement $\{M_{\rm EO}^{i}\}_{i=1}^{N}$ on the joint system of $\rm{O}$ and $\rm{E}$ with outcomes $i = 1,\dots,N$ is performed and an observed outcome $j$ results in guessing the associated channel $\mathcal{C}_{\rm I \rightarrow O}^{j}$. 

Given a state and measurement as above, the \blue{success probability} $p$ of guessing $j$ correctly using this strategy can be computed as
\begin{align}
        p &= \sum_{i=1}^{N} q_{i} \Tr(M_{\rm OE}^{i} \, [\mathcal{C}_{\rm I \rightarrow O}^{i}\otimes \text{id}_{\rm E}](\rho_{\rm IE})) \label{eq:success_prob_non_linear}\\
    &= \sum_{i=1}^{N} q_{i} \Tr(T_{\rm IO}^{i}  C_{\rm IO}^{i}) \label{eq:single_copy_success}
\end{align} 
where, $C_{\rm IO}^{i}$ is the Choi matrix of the channel $\mathcal{C}_{\rm I \rightarrow O}^{i}$ and $T_{\rm IO}^{i} := \rho_{\rm IE} * {(M_{\rm EO}^{i})}^{T}$, where $*$ is the link product, see Definition \ref{def:link_product}.
The operators $\{T_{\rm IO}^{i}\}_{i=1}^{N}$, which at the same time contain information about the state preparation $\rho_{\rm IE}$ and the measurement $\{M_{\rm EO}^{i}\}_{i=1}^{N}$, will turn out very handy in the following, both from the practical perspective in optimisation problems and from the perspective of generalising the scenario to more copies.
The collection of operators $\{T_{\rm IO}^{i}\}_{i=1}^{N}$ is denoted as a \blue{tester} and testers have been introduced at first in Ref.~\cite{PhysRevA.80.022339, PhysRevA.77.062112}. \new{Conceptually, testers in quantum channel discrimination can be seen as the generalisation of measurements in quantum state discrimination, and will here be used to concisely describe various types of different discrimination protocols. From a broader perspective, testers provide a linearisation of the success probability Eq.~\eqref{eq:success_prob_non_linear} that is otherwise a bi-linear function in the preparation and measurement, respectively.}
\begin{definition}[\blue{Single-copy tester}  \cite{PhysRevA.80.022339, PhysRevA.77.062112}]
\label{def:sc_tester}
    A collection of positive operators $\{T_{\rm IO}^{i}\}_{i=1}^{N} \subset \mathcal{L}(\mathcal{H}_{\rm I} \otimes \mathcal{H}_{\rm O})$ is a single-copy tester if there is a quantum system $\rm{E}$ such that 
    \begin{align}\label{eq:singly_copy_limited_tester}
        &T_{\rm IO}^{i} = \rho_{\rm IE} * (M_{\rm EO}^{i})^T  
    \end{align}
    where $\rho_{\rm IE}$ is a state and $\{M_{\rm EO}^{i}\}_{i=1}^{N}$ is a measurement. The tester can be depicted by the following circuit, where $\mathbb{C}^{\infty}$ indicates that the memory system $\rm{E}$ has no dimension restriction. 
    \begin{center}
\resizebox{0.3\textwidth}{!}{\begin{tikzpicture}[x=1cm,y=1cm]

\filldraw [color=blue!60, fill=blue!5, very thick] 
(0,1.8) arc [start angle=90, end angle=270, x radius=.75, y radius=.9] 
node [pos=0.5,xshift=0.1cm,right,black] {\Large{$\rho$}};
\draw[very thick,color=blue!60](0,0) -- (0,1.8) ;

\draw[->,thick](0,1.4) --(3,1.4) node [pos=.5, above]{$\rm{E} \cong \mathbb{C}^{\infty}$};
\draw[->,thick](0,.4)--(1,.4) node [pos=.5, below]{$\rm{I}$};

\filldraw [color=red!60, fill=red!5, very thick] 
(1,0) rectangle (2,.8) node [pos=0.5,black] {$\mathcal{C}$};

\draw[->,thick](2,.4) --(3,.4) node [pos=.5, below]{$\rm{O}$};

\filldraw [color=blue!60, fill=blue!5, very thick] 
(3,0) arc [start angle=-90, end angle=90, x radius=.75, y radius=.9];
\draw[very thick,color=blue!60]
(3,1.8) -- node [right, black] {$M^{i}$}(3,0) ;

\end{tikzpicture}}
\end{center}
    
\end{definition}
Testers can be seen as generalisation of usual measurements that are described as POVMs. This is due to the fact that testers linearly assign probability distributions to a channel $\mathcal{C}_{\rm I \rightarrow O}$ via $p(i|T,C) = \Tr(T_{\rm IO}^{i} C_{\rm IO})$. In this way, testers play an analogous role to POVMs in the generalised version of Born's rule that assign probabilities to channels instead of states. 
\begin{Example}[Superdense coding]
    To give a concrete example, we examine the superdense coding protocol \cite{PhysRevLett.69.2881}. The input system $\rm{I}$ and output system $\rm{O}$ are both two-dimensional and the channels $\mathcal{C}_{\rm I \rightarrow O}^{j}$ are the Pauli rotations, i.e,  $\mathcal{C}_{\rm I \rightarrow O}^{j}(\cdot) = \sigma_j \cdot \sigma_j$ for $j \in \{0,1,2,3\}$. Its corresponding Choi matrices are the Bell states with $C_{\rm IO}^{0} = 2 \ketbra{\phi^{+}}, C_{\rm IO}^{1} = 2\ketbra{\psi^{+}}, C_{\rm IO}^{2} = 2\ketbra{\psi^{-}}$ and $C_{\rm IO}^{3} = 2\ketbra{\phi^{-}}$. The well known optimal strategy to discriminate Pauli channels involves a qubit memory system $\rm{E}$ and preparing the maximally entangled state $\ket{\phi^{+}}_{\rm IE}$. Then, the unknown Pauli operator acts on the input qubit $\rm{I}$ before a Bell state measurement on the joint system of memory and output $\rm{EO}$ is performed in the end. The corresponding testers $T_{\rm IO}^i$ are given by  
    \begin{align}
        T_{\rm IO}^{0} &= \ketbra{\phi^{+}}_{\rm IE}  * \ketbra{\phi^{+}}_{\rm EO}, \\
        T_{\rm IO}^{1} &= \ketbra{\phi^{+}}_{\rm IE}  * \ketbra{\psi^{+}}_{\rm EO}, \\
        T_{\rm IO}^{2} &= \ketbra{\phi^{+}}_{\rm IE} * \ketbra{\psi^{-}}_{\rm EO}, \\
        T_{\rm IO}^{3} &= \ketbra{\phi^{+}}_{\rm IE} * \ketbra{\phi^{-}}_{\rm EO},
    \end{align}
    and a straightforward calculation reveals $\Tr(T_{\rm IO}^{i} C_{\rm IO}^{j}) = \delta_{ij}$.
\end{Example}

Given a channel ensemble, finding the optimal discrimination strategy corresponds to an optimisation problem over testers. We consider the situation in that the set of available testers is limited by quantum memory quantified by the dimension of the auxiliary system $\rm{E}$. Before that, we start with the discussion of channel discrimination with unlimited quantum memory. 

\subsection{Unlimited memory case} 

 In the case of unlimited quantum memory, one makes the strong assumption that states $\rho_{\rm IE}$ entangled with a system $\rm{E}$ of arbitrarily high dimension can be prepared, and that the coherence is preserved by a perfect identity channel $\textnormal{id}_{\rm E}$ during the channel application on the input system $\rm{I}$. Also, it is assumed that every measurement $\{M_{\rm EO}^{i}\}_{i=1}^{N}$ on the joint system $\rm{EO}$ can be performed. 
 
In this case, the set of possible testers $\{T_{\rm IO}^{i}\}_{i=1}^{N}$ is completely determined in terms of positive semidefinite matrices and affine constraints \cite{PhysRevA.77.062112, PhysRevA.80.022339} leading to the following Theorem.
 \begin{proposition}[Ref. \cite{PhysRevA.77.062112, PhysRevA.80.022339}]
 \label{prop:single_copy_phys_char}
     A collection of positive operators $\{T_{\rm IO}\}_{i=1}^{N}$ is  a single-copy tester if and only if there exists a state $\sigma_{\rm I}$ such that
 \begin{align}\label{eq:single_copy_tester_def}
    &\sum_{i=1}^{N} T_{\rm IO}^{i} = \sigma_{\rm I} \otimes \mathds{1}_{\rm O}.
\end{align}
Furthermore, the dimension $d_{\rm E}$ of the memory system $\rm{E}$ in the Definition \ref{def:sc_tester} of a single-copy tester can always be chosen equal to $d_{\rm I}$, i.e., equal to the dimension of the input system.
\end{proposition}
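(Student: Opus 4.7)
My plan splits into three parts: a direct calculation for \emph{necessity}, an explicit construction for \emph{sufficiency} when $\sigma_{\rm I}$ is full rank, and a reduction of the rank-deficient case. For necessity I will use bilinearity of the link product to collapse $\sum_i T^i_{\rm IO} = \rho_{\rm IE} * \big(\sum_i M^i_{\rm EO}\big)^T = \rho_{\rm IE} * \mathds{1}_{\rm EO}$; unfolding Definition~\ref{def:link_product} this equals $\Tr_{\rm E}[\rho_{\rm IE}^{T_{\rm E}}] \otimes \mathds{1}_{\rm O}$, and a short index calculation gives $\Tr_{\rm E}[X^{T_{\rm E}}] = \Tr_{\rm E}[X]$ for any $X_{\rm IE}$. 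Hence $\sigma_{\rm I} := \Tr_{\rm E}\rho_{\rm IE}$ is a state and $\sum_i T^i_{\rm IO} = \sigma_{\rm I} \otimes \mathds{1}_{\rm O}$.

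For sufficiency, I assume first that $\sigma_{\rm I}$ is invertible and take $\rm E$ with $d_{\rm E} = d_{\rm I}$. With the (unnormalised) maximally entangled vector $\ket{\Omega}_{\rm IE} := \sum_j \ket{j}_{\rm I}\ket{j}_{\rm E}$, I will choose
\begin{equation*}
\rho_{\rm IE} := (\sqrt{\sigma_{\rm I}} \otimes \mathds{1}_{\rm E})\, \ketbra{\Omega}{\Omega}_{\rm IE}\, (\sqrt{\sigma_{\rm I}} \otimes \mathds{1}_{\rm E}),
\end{equation*}
a normalised purification of $\sigma_{\rm I}$, together with the POVM candidates (after identifying $\rm I$ with $\rm E$)
\begin{equation*}
M^i_{\rm EO} := \Big[(\sigma_{\rm I}^{-1/2} \otimes \mathds{1}_{\rm O})\, T^i_{\rm IO}\, (\sigma_{\rm I}^{-1/2} \otimes \mathds{1}_{\rm O})\Big]^T.
\end{equation*}
Positivity $M^i \geq 0$ is preserved by conjugation and transposition, and summing using $\sum_i T^i = \sigma_{\rm I} \otimes \mathds{1}_{\rm O}$ yields $\sum_i M^i = \mathds{1}_{\rm EO}$, so $\{M^i\}$ is a valid POVM.

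The main calculation that remains is the verification $\rho_{\rm IE} * (M^i_{\rm EO})^T = T^i_{\rm IO}$. Using $\ketbra{\Omega}{\Omega}^{T_{\rm E}} = F_{\rm IE}$ (the swap), the intertwining relation $F_{\rm IE}(A_{\rm I}\otimes \mathds{1}_{\rm E}) = (\mathds{1}_{\rm I}\otimes A_{\rm E})F_{\rm IE}$, and the partial-trace identity $\Tr_{\rm E}[F_{\rm IE}(A_{\rm I}\otimes B_{\rm E})] = BA$ (with $\rm I \simeq \rm E$), I can move the $\sqrt{\sigma_{\rm I}}$ factors outside the partial trace to obtain
\begin{equation*}
\rho_{\rm IE} * (M^i_{\rm EO})^T = (\sqrt{\sigma_{\rm I}}\otimes \mathds{1}_{\rm O})\, (M^i_{\rm IO})^T\, (\sqrt{\sigma_{\rm I}}\otimes \mathds{1}_{\rm O}),
\end{equation*}
which equals $T^i_{\rm IO}$ by construction of $M^i$. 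For rank-deficient $\sigma_{\rm I}$, the relation $\sum_i T^i = \sigma_{\rm I} \otimes \mathds{1}_{\rm O}$ forces each $T^i$ to lie inside $\mathrm{supp}(\sigma_{\rm I}) \otimes \mathcal{H}_{\rm O}$, so I will run the full-rank construction on this support (memory dimension $\mathrm{rank}(\sigma_{\rm I}) \leq d_{\rm I}$) and then pad trivially to attain $d_{\rm E} = d_{\rm I}$. The main obstacle I expect is the bookkeeping of partial transposes and the swap identity in the link-product calculation; everything else is routine linear algebra.
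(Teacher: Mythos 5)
Your proposal is correct and follows essentially the same route as the paper's Appendix~\ref{app:phys_impl}: your state $(\sqrt{\sigma_{\rm I}}\otimes\mathds{1}_{\rm E})\ketbra{\Omega}{\Omega}(\sqrt{\sigma_{\rm I}}\otimes\mathds{1}_{\rm E})$ is exactly the paper's purification $\ketbra{\sqrt{\sigma}^{T}}_{\rm IE}$, and your POVM is the same $\sigma^{-1/2}$-conjugation of the $T^i$ (you place the transpose explicitly, which is if anything slightly more careful), with the rank-deficient case handled by restricting to $\mathrm{supp}(\sigma_{\rm I})$ and padding rather than by the paper's Moore--Penrose inverse plus completion operators $G^i_{\rm EO}$ --- two equivalent formulations of the same fix. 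No gaps.
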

Single-copy testers are thus characterised in terms of positive semidefinite matrices and affine constraints. Consequently, the maximal success probability $p_{\text{opt}}(\mathcal{E})$ for the discrimination of a channel ensemble $\mathcal{E} = \{q_i, \mathcal{C}_{\rm I \rightarrow O}^{i}\}_{i=1}^{N}$ can be efficiently computed by the semidefinite program
\begin{align} \label{eq:success_prob_optimisation}
   p_{\text{opt}}(\mathcal{E}) &= \max_{T_{\rm IO}^{i}} \sum_{i=1}^{N} q_{i} \, \Tr(T_{\rm IO}^{i}  C_{\rm IO}^{i}), 
\end{align}
where $\{T_{\rm IO}^{i}\}_{i=1}^{N}$ is a single-copy tester. 
For sake of convenience, we describe a proof of Proposition~\ref{prop:single_copy_phys_char} in Appendix \ref{app:phys_impl}. 

In the scenario in that quantum memory, quantified by the auxiliary dimension $d_{\rm E}$, is restricted, the characterisation of available tester elements is more involved.   

\subsection{Limited memory case}
In the case of limited memory, the auxiliary dimension $d_{\rm E}$ is assumed to have some value that is strictly smaller than $d_{\rm I}$. We see from Proposition~\ref{prop:single_copy_phys_char} that the case $d_{\rm I} = d_{\rm E}$ means no restriction on the memory system. This motivates the following definition of memory-restricted single-copy testers.
\begin{definition}[\blue{Memory-$d_{\rm E}$ single-copy tester}]
    A collection of positive operators $\{T_{\rm IO}\}_{i=1}^{N} \subset \mathcal{L}(\mathcal{H}_{\rm I} \otimes \mathcal{H}_{\rm O})$ is a memory-$d_{\rm E}$ single-copy tester if there is a quantum system $\rm{E}$ of dimension $d_{\rm E}$ such that 
    \begin{align}
        &T_{\rm IO}^{i} = \rho_{\rm IE} * (M_{\rm EO}^{i})^{T}  
    \end{align}
    where $\rho_{\rm IE}$ is a state and $\{M_{\rm EO}^{i}\}_{i=1}^{N}$ is a measurement. The corresponding circuit is:
\begin{center}
\resizebox{0.3\textwidth}{!}{\begin{tikzpicture}[x=1cm,y=1cm]

\filldraw [color=blue!60, fill=blue!5, very thick] 
(0,1.8) arc [start angle=90, end angle=270, x radius=.75, y radius=.9] 
node [pos=0.5,xshift=0.1cm,right,black] {\Large{$\rho$}};
\draw[very thick,color=blue!60](0,0) -- (0,1.8) ;

\draw[->,thick](0,1.4) --(3,1.4) node [pos=.5, above]{$\rm{E} \cong \mathbb{C}^{d_{\rm E}}$};
\draw[->,thick](0,.4)--(1,.4) node [pos=.5, below]{$\rm{I}$};

\filldraw [color=red!60, fill=red!5, very thick] 
(1,0) rectangle (2,.8) node [pos=0.5,black] {$C$};

\draw[->,thick](2,.4) --(3,.4) node [pos=.5, below]{$\rm{O}$};

\filldraw [color=blue!60, fill=blue!5, very thick] 
(3,0) arc [start angle=-90, end angle=90, x radius=.75, y radius=.9];
\draw[very thick,color=blue!60]
(3.0,1.8) -- node [right, black] {$M^{i}$}(3.0,0) ;

\end{tikzpicture}}
\end{center}
\end{definition}
By Proposition~\ref{prop:single_copy_phys_char},  every memory-$d_{\rm E}$ single-copy tester is also a single-copy tester as in Definition \ref{def:sc_tester}, but the opposite does not hold if $d_{\rm E} < d_{\rm I}$. Given a channel ensemble $\{q_i, \mathcal{C}_{\rm I \rightarrow O}^{i}\}_{i=1}^{N}$, we want to solve the optimisation problem
\begin{align} \label{eq:success_prob_optimisation_limited}
   p_{\text{opt},d_{\rm E}}(\mathcal{E}) &= \max_{T_{\rm IO}^{i}} \sum_{i=1}^{N} q_{i} \, \Tr(T_{\rm IO}^{i}  C_{\rm IO}^{i}), 
\end{align}
where the optimisation runs over all memory-$d_{\rm E}$ single-copy testers $T_{\rm IO}^{i}$.
Of particular interest are memory-less testers for which $d_{\rm E} = 1$. Here, the link product simply reduces to the usual tensor product so that tester elements $T_{\rm IO}^{i}$ are given by
\begin{equation}
\label{eq:memory_less_tester_element}
    T_{\rm IO}^{i} = \rho_{\rm I} \otimes (M^{i}_{\rm O})^T.
\end{equation} 

The amount of quantum memory quantified by the admissible environment dimension $d_{\rm E}$ is in general very important for the performance of a discrimination protocol. To illustrate, we consider the problem of discriminating the $d^2$ unitary evolutions provided by the clock-shift operators in dimension $d$, see Definition~\ref{def:important_channels}. We obtain the following result, see Appendix ~\ref{prop:heisenberg_weyl} for the proof.
\begin{Theorem}
\label{prop:clock_shift_memory_dependence}
    The maximal success probability $p_{\textnormal{opt}, d_{\rm E}}^{(d)}$ in discrimination of the uniform ensemble given by the $d$-dimensional clock-shift operators 
    using memory-$d_{\rm E}$ single-copy testers is given by 
    \begin{equation}
        p_{\textnormal{opt}, d_{\rm E}}^{(d)} = \min \left\{1,  \frac{d_{\rm E}}{d} \right\}.
    \end{equation}
\end{Theorem}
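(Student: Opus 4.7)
The plan is to prove the upper bound $p_{\textnormal{opt},d_{\rm E}}^{(d)} \leq \min\{1, d_{\rm E}/d\}$ and to exhibit a memory-$d_{\rm E}$ tester that saturates it. The clock-shift structure enters only through the achievability argument; the upper bound is a surprisingly short state-counting observation.

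For the upper bound, I would start from the success probability written in its operational form
\begin{equation*}
p = \frac{1}{d^2}\sum_{i,j=0}^{d-1} \Tr\!\left[M_{\rm OE}^{(i,j)}\, \sigma_{\rm OE}^{(i,j)}\right], \quad \sigma_{\rm OE}^{(i,j)} = (U_{ij}\otimes\mathds{1}_{\rm E})\,\rho_{\rm IE}\,(U_{ij}^{\dagger}\otimes\mathds{1}_{\rm E}),
\end{equation*}
where each $\sigma_{\rm OE}^{(i,j)}$ is a density matrix on the $(d\cdot d_{\rm E})$-dimensional space $\mathcal{H}_{\rm O}\otimes\mathcal{H}_{\rm E}$. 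Since any density matrix satisfies $\sigma \leq \mathds{1}$ and the POVM elements are positive, each summand obeys $\Tr(M^{(i,j)}\sigma^{(i,j)}) \leq \Tr(M^{(i,j)})$. Summing and using the POVM completeness $\sum_{i,j} M^{(i,j)} = \mathds{1}_{\rm OE}$ would then give $p \leq \Tr(\mathds{1}_{\rm OE})/d^2 = d_{\rm E}/d$; combined with the trivial $p\leq 1$, this establishes the upper bound.

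For matching achievability when $d_{\rm E}\geq d$, I would simply invoke the standard superdense-coding tester from the Example, which attains $p=1$ using only a $d$-dimensional memory. For $d_{\rm E} < d$, my ansatz is the partially entangled state
\begin{equation*}
\ket{\psi}_{\rm IE} = \frac{1}{\sqrt{d_{\rm E}}} \sum_{k=0}^{d_{\rm E}-1} \ket{k}_{\rm I}\ket{k}_{\rm E},
\end{equation*}
chosen so that the post-channel vectors $\ket{\psi_{ij}} := (U_{ij}\otimes\mathds{1}_{\rm E})\ket{\psi}$ live, for each fixed $i$, inside the $d_{\rm E}$-dimensional subspace $\mathcal{K}_i = \textnormal{span}\{\ket{i\oplus k}_{\rm O}\ket{k}_{\rm E}\}_{k=0}^{d_{\rm E}-1}$. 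These subspaces are pairwise orthogonal and partition $\mathcal{H}_{\rm O}\otimes\mathcal{H}_{\rm E}$, and within each $\mathcal{K}_i$ the $d$ vectors $\{\ket{\psi_{ij}}\}_{j}$ become a Fourier-type frame $\frac{1}{\sqrt{d_{\rm E}}}\sum_k \omega^{jk}\ket{e_k^i}$ in an orthonormal basis of $\mathcal{K}_i$. I would then propose the rank-one measurement $M^{(i,j)} = \frac{d_{\rm E}}{d}\ketbra{\psi_{ij}}{\psi_{ij}}$.

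The main technical obstacle will be verifying POVM completeness $\sum_{i,j} M^{(i,j)} = \mathds{1}_{\rm OE}$. Summing first over $j$ inside a fixed block $\mathcal{K}_i$ produces off-diagonal coefficients proportional to $\sum_{j=0}^{d-1}\omega^{j(k-k')}$, which collapses to $d\,\delta_{kk'}$ precisely because $|k-k'| < d_{\rm E} \leq d$ ensures $k-k'\not\equiv 0\pmod{d}$ unless $k=k'$; after the $d_{\rm E}/d$ rescaling this gives $\sum_j M^{(i,j)} = P_{\mathcal{K}_i}$, and the orthogonal decomposition $\bigoplus_i \mathcal{K}_i = \mathcal{H}_{\rm O}\otimes\mathcal{H}_{\rm E}$ delivers the identity. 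The remaining per-outcome computation $\Tr(M^{(i,j)}\ketbra{\psi_{ij}}{\psi_{ij}}) = d_{\rm E}/d$ is direct, so averaging over the uniform ensemble yields $p = d_{\rm E}/d$ and closes the gap with the upper bound.
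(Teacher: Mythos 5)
Your proof is correct, but it follows a genuinely different route from the paper's. The paper proves this theorem by invoking the group-theoretic result of Chiribella et al.\ for discriminating a uniform ensemble of unitaries forming a projective representation: it verifies via Schur orthogonality that the clock-shift operators act irreducibly on $\mathbb{C}^d$, then uses the reduction of memory-$d_{\rm E}$ discrimination to memoryless discrimination of $\{X_d^iZ_d^j\otimes \mathds{1}_{d_{\rm E}}\}$ (Remark~\ref{obs:memory-less-reduction}), which is a single irrep of dimension $d$ with multiplicity $d_{\rm E}$, yielding $\frac{1}{d^2}\,d\min\{d,d_{\rm E}\}$ directly. You instead give a self-contained elementary argument: your upper bound $\Tr(M^{(i,j)}\sigma^{(i,j)})\le \Tr(M^{(i,j)})$ followed by POVM completeness is sound (and is essentially the same counting bound the paper deploys later, via Cauchy--Schwarz, in the proof of Theorem~\ref{prop:adpt_gen_bound}), and it has the virtue of applying to \emph{any} uniform ensemble of $N$ channels, not only group-covariant ones. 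Your achievability construction with $\ket{\psi}=\frac{1}{\sqrt{d_{\rm E}}}\sum_{k<d_{\rm E}}\ket{kk}$ and $M^{(i,j)}=\frac{d_{\rm E}}{d}\ketbra{\psi_{ij}}{\psi_{ij}}$ checks out: the blocks $\mathcal{K}_i$ are orthogonal and exhaust $\mathcal{H}_{\rm O}\otimes\mathcal{H}_{\rm E}$, and the geometric sum $\sum_{j=0}^{d-1}\omega^{j(k-k')}=d\,\delta_{kk'}$ indeed collapses correctly because $|k-k'|<d_{\rm E}\le d$. What the paper's route buys is immediate generality to any finite group of unitaries once its irrep content in $\mathbb{C}^d$ is known; what your route buys is independence from the representation-theoretic machinery, an explicit optimal tester (which the abstract formula does not hand you), and an upper-bound technique that transfers verbatim to non-covariant ensembles.
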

Clock-shift operators give rise to a remarkable dependence on quantum memory. For any fixed value of $d_{\rm E}$ and large $d$, the success-probability goes to zero, that is, $\lim_{d\to\infty} p_{\textnormal{opt}, d_{\rm E}}^{(d)} =0$. However, when $d_{\rm E}=d$, it holds that $p_{\textnormal{opt}, d_{E}}^{(d)} =1$ for any $d$. This shows that the performance ratio between unlimited and limited memory cases may be arbitrarily large. Hence, this problem instance demonstrates that the size of memory plays a major role in quantum channel discrimination.

Other examples where memory offers great advantage for channel discrimination may be found in \cite{Puzzuoli17}, where it is shown that for any dimension $d$, the two Werner-Holevo channels \cite{Werner2002}, i.e., channels whose Choi matrices are proportional to the projection onto the symmetric or antisymmetric subspace, is given by  $p_{d, d_{\rm E}} = \frac{1}{2} + \frac{\min \left\{d_{\rm E},d\right\}}{2d}$.

In the proof of Theorem~\ref{prop:clock_shift_memory_dependence}, we exploited the fact that the problem of discriminating uniformly distributed quantum channels with a group structure is a considerably simpler problem than general channel discrimination. 
Solving the problem of finding the optimal strategy in the setting of bounded memory for a general channel ensemble does not admit such a simple solution.

In the following, we provide a systematic technique for solving the problem \eqref{eq:success_prob_optimisation_limited} in general. It is first a simple but important observation that every channel discrimination problem with testers of limited memory is equivalent to an associated discrimination problem with memory-less testers.
\begin{observation}
\label{obs:memory-less-reduction}
    The discrimination problem of a channel ensemble $\{q_i, \mathcal{C}_{\rm I \rightarrow O}^{i}\}_{i=1}^{N}$ with memory-$d_{\rm E}$ single-copy testers is equivalent to the discrimination problem of the ensemble $\{q_i, \mathcal{C}_{\rm I \rightarrow O}^{i} \otimes \textnormal{id}_{\rm E}\}_{i=1}^{N}$ with memory-less single-copy testers. 
\end{observation}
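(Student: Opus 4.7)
The plan is to turn the equivalence into a literal identification of tester elements that makes the two optimisation objectives coincide term-by-term. I would first unpack the success probability of a memory-$d_{\rm E}$ tester $T^i_{\rm IO}=\rho_{\rm IE}*(M^i_{\rm EO})^T$ by using the operational interpretation of the link product (the same manipulation that takes Eq.~\eqref{eq:single_copy_success} from its first to its second line), obtaining the expression
\begin{equation}
\sum_{i=1}^{N} q_i\,\Tr\!\left(M^i_{\rm EO}\,[\mathcal{C}^i_{\rm I\to O}\otimes\textnormal{id}_{\rm E}](\rho_{\rm IE})\right).
\end{equation}

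Next I would view the augmented ensemble $\{q_i,\mathcal{C}^i_{\rm I\to O}\otimes\textnormal{id}_{\rm E}\}$ as a discrimination problem whose input system is $\rm{IE}$ and whose output system is $\rm{OE}$. A memory-less tester for this augmented problem has, by Eq.~\eqref{eq:memory_less_tester_element}, the form $\tilde T^i_{\rm IE,OE}=\rho_{\rm IE}\otimes (M^i_{\rm OE})^T$, with $\rho_{\rm IE}$ a state on $\rm{IE}$ and $\{M^i_{\rm OE}\}$ a POVM on $\rm{OE}$. Applying the same link-product/Choi identity to the augmented Choi matrices reveals that its success probability is again $\sum_i q_i\,\Tr(M^i_{\rm OE}\,[\mathcal{C}^i_{\rm I\to O}\otimes\textnormal{id}_{\rm E}](\rho_{\rm IE}))$, precisely the expression obtained in the previous step.

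Finally I would note that the pairs $(\rho_{\rm IE},\{M^i_{\rm EO}\})$ parametrising memory-$d_{\rm E}$ testers of the original problem are in bijection, after relabelling $\rm{EO}$ as $\rm{OE}$, with the pairs parametrising memory-less testers of the augmented problem: in both cases $\rho_{\rm IE}$ ranges over states on a $d_{\rm I}d_{\rm E}$-dimensional space and $\{M^i\}$ ranges over $N$-outcome POVMs on a $d_{\rm E}d_{\rm O}$-dimensional space. Since the objectives coincide under this identification, taking the supremum on either side yields the same value, proving the equivalence. The only genuinely delicate step is purely notational: one has to keep track of the two copies of $\rm{E}$ related by $\textnormal{id}_{\rm E}$, so that the measurement system in the original memory-$d_{\rm E}$ picture is correctly identified with the output system in the augmented memory-less picture. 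No substantive technical obstacle arises beyond this bookkeeping.
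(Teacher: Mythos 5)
Your argument is correct and is precisely the justification the paper intends: the paper states this Remark without proof, treating it as evident that the memory wire $\rm E$ can be absorbed into the channel as an extra $\textnormal{id}_{\rm E}$ factor, which is exactly the identification you carry out via the link-product identity behind Eq.~\eqref{eq:single_copy_success}. The bijection between pairs $(\rho_{\rm IE},\{M^i_{\rm EO}\})$ and the data of memory-less testers for the augmented problem, together with the term-by-term equality of the objectives, is all that is needed, and you have handled the one genuinely delicate point (tracking the input and output copies of $\rm E$ related by $\textnormal{id}_{\rm E}$) correctly.
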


Hence, every method to tackle the channel discrimination problem with memory-less testers can directly be applied to the case of limited memory.

The key idea of our approach comes from the fact that memory-less tester elements $T_{\rm IO}^{i}$, as a set of positive operators that can be decomposed as in Eq.~\eqref{eq:memory_less_tester_element}, can be identified with unnormalised quantum states $\underline{T}_{\rm NIO} := \bigoplus_{i=1}^{N}T_{\rm IO}^{i} = \sum_{i=1}^{N} \ketbra{i}_{\rm N} \otimes T_{\rm IO}^i$ where the states $\ket{i}$, often referred to as flag states, denote the standard basis of a system $\rm{N}$ with $d_{\rm N} = N$. We have
\begin{equation}
    \underline{T}_{\rm NIO} = P_{\rm I\leftrightarrow O} \, (\underline{M}_{\rm NO} \otimes \rho_{\rm I}) \, P_{\rm I \leftrightarrow O}^{\dagger}
\end{equation}
 where $P_{\rm I \leftrightarrow O}$ is the system permutation that exchanges input system $\rm{I}$ and output system $\rm{O}$, $\rho_{\rm I}$ is a state and $\underline{M}_{\rm NO} :=\bigoplus_{i=1}^{N} M_{\rm O}^i$ is the block-diagonal operator of measurement effects of $\{M_{\rm O}^{i}\}_{i=1}^{N}$. In this work, we use the convention of adding an underline to denote operators that have a direct sum structure, such as $\underline{M}_{\rm NO} $ and $\underline{T}_{\rm NIO}$.
 The operator $\underline{M}_{\rm NO}$ can be (after normalisation) seen as a quantum state that is subject to the affine constraint
 \begin{equation}
 \label{eq:single_copy_measurement_constraint}
     \Phi(\underline{M}_{\rm NO}) := \Tr_{\rm N}(\underline{M}_{\rm NO}) = \mathds{1}_{\rm O}.
 \end{equation}  
 The optimisation \eqref{eq:success_prob_optimisation_limited} over memory-less testers $p_{\text{opt},1}(\mathcal{\rm E})$ can then be explicitly written as 
 \begin{align}
     &\max_{\underline{T}} \, \Tr(\underline{C}_{\rm NIO} \; \underline{T}_{\rm NIO}) \label{eq:objective_function}\\ 
     &\text{s.t. } \underline{T}_{\rm NIO} = P_{\rm I \leftrightarrow O} \, (\underline{M}_{\rm NO} \otimes \rho_{\rm I}) \, P_{\rm I \leftrightarrow O}^{\dagger} \\
     & \quad\;\;\; \rho_{\rm I}, \underline{M}_{\rm NO} \geq 0 \\
     & \quad\;\; \Tr(\rho_{\rm I}) = 1, \Phi(\underline{M}_{\rm NO}) = \mathds{1}_{\rm O} \label{eq:extra_constraints_single_copy}
 \end{align}
 where $\underline{C}_{\rm NIO} :=\bigoplus_{i=1}^{N} q_{i} \, C_{\rm IO}^{i}$. Notice that the objective function \eqref{eq:objective_function} is linear, which implies that the optimisation may be equivalently performed over convex combinations of the operators $\underline{T}_{\rm NIO}$. 
As it will be discussed in Section \ref{sec:constraine_sep_main}, the problem corresponds to a so-called constrained separability problem that is maximisation of a linear functional over separable states whose local factors obey further affine constraints.

This insight allows us to state the following characterisation theorem of memory-less testers, which is our main result for the single-copy scenario. An explicit proof of Theorem~\ref{prop:symmetric_ext_tester} is presented in Appendix \ref{sec:mem_test_con_sep}.

\begin{Theorem} \label{prop:symmetric_ext_tester}
    A single-copy tester $\{T_{\rm IO}^{i}\}_{i=1}^{N} \subset \mathcal{L}(\mathcal{H}_{\rm I} \otimes \mathcal{H}_{\rm O})$ 
    is a convex combination of memory-less testers if and only if for all $k \in \mathbb{N}$ there exists (not necessarily memory-less) single-copy tester $\{T_{\textnormal{Sym}(\rm{I},k) \rm{O}}^{i}\}_{i=1}^{N} \subset \mathcal{L}(\mathcal{H}_{\rm I}^{\otimes k} \otimes \mathcal{H}_{\rm O})$ such that 
    \begin{equation}
        T_{\rm IO}^{i} = \Tr_{\rm{I}^{k-1}}(T_{\textnormal{Sym}(\rm{I},k) \rm{O}}^{i})
    \end{equation}
     where $\textnormal{Sym}(\rm{I},k)$ is the system that corresponds to the symmetric subspace of the Hilbert space $\mathcal{H}_{\rm I}^{\otimes k}$ of $k$ copies of the input system $\rm{I}$.
\end{Theorem}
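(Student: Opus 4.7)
The statement is a DPS-style separability characterisation tailored to testers, so my plan is a direct construction for the easy direction and a quantum de Finetti argument combined with the constrained-separability machinery for the converse.

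For the "only if" direction I would take any convex decomposition $T_{\rm IO}^{i} = \sum_j p_j \, \rho_{\rm I}^{(j)} \otimes (M_{\rm O}^{(j),i})^T$ with states $\rho_{\rm I}^{(j)}$ and POVMs $\{M_{\rm O}^{(j),i}\}_i$, and define
\begin{equation*}
T_{\textnormal{Sym}({\rm I},k){\rm O}}^{i} := \sum_j p_j \bigl(\rho_{\rm I}^{(j)}\bigr)^{\otimes k} \otimes \bigl(M_{\rm O}^{(j),i}\bigr)^{T}.
\end{equation*}
Since each $\rho^{\otimes k}$ is supported on the symmetric subspace, so is every $T_{\textnormal{Sym}({\rm I},k){\rm O}}^{i}$. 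Using $\sum_i M_{\rm O}^{(j),i} = \mathds{1}_{\rm O}$, summing over $i$ yields $\sigma_{{\rm I}^k} \otimes \mathds{1}_{\rm O}$ with the state $\sigma_{{\rm I}^k} := \sum_j p_j (\rho_{\rm I}^{(j)})^{\otimes k}$; by Proposition~\ref{prop:single_copy_phys_char} this confirms it is a single-copy tester, and tracing out $k-1$ copies of ${\rm I}$ recovers $T_{\rm IO}^{i}$.

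For the converse I would assemble each level into the positive operator $\underline{T}_{{\rm N I}^k{\rm O}}^{(k)} := \bigoplus_i T_{\textnormal{Sym}({\rm I},k){\rm O}}^{i}$, which is invariant under permutations of the ${\rm I}$ factors and satisfies the tester constraint $\sum_i T_{\textnormal{Sym}({\rm I},k){\rm O}}^{i} = \sigma_{{\rm I}^k}^{(k)} \otimes \mathds{1}_{\rm O}$; its reduction by $\Tr_{{\rm I}^{k-1}}$ equals the fixed operator $\underline{T}_{\rm NIO} = \bigoplus_i T_{\rm IO}^{i}$ regardless of $k$. I would apply the approximate quantum de Finetti theorem (Christandl–König–Mitchison–Renner), with ${\rm N}$ and ${\rm O}$ as spectator systems, to obtain at each level an $O(d_{\rm I}^2/k)$-approximation of $\underline{T}_{\rm NIO}$ by a convex combination $\sum_\alpha p_\alpha^{(k)} \, \rho_{\rm I}^{(k),\alpha} \otimes \omega_{\rm NO}^{(k),\alpha}$. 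Twirling each $\omega_{\rm NO}^{(k),\alpha}$ against the dephasing channel on ${\rm N}$ (which leaves $\underline{T}_{\rm NIO}$ invariant, since it is block-diagonal in ${\rm N}$) forces the form $\omega_{\rm NO}^{(k),\alpha} = \bigoplus_i M_{\rm O}^{(k,\alpha),i}$, and then compactness of the set of such product operators together with the limit $k\to\infty$ extracts a convergent subsequence producing an exact separable decomposition $\underline{T}_{\rm NIO} = \sum_\alpha p_\alpha \, \rho_{\rm I}^\alpha \otimes \underline{M}_{\rm NO}^\alpha$.

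The main obstacle is that the bare de Finetti argument only ensures ${\rm I}|{\rm NO}$ separability of $\underline{T}_{\rm NIO}$: it does not immediately guarantee the individual POVM normalisation $\Tr_{\rm N} \underline{M}_{\rm NO}^\alpha = \mathds{1}_{\rm O}$, only that $\sum_\alpha p_\alpha \Tr_{\rm N} \underline{M}_{\rm NO}^\alpha = \mathds{1}_{\rm O}$. This is precisely the setting addressed by the constrained-separability framework of \cite{Berta2021}, whose convergent SDP hierarchy is constructed so that affine constraints on each factor (here the tester condition $\Tr_{\rm N} \underline{M}_{\rm NO} = \mathds{1}_{\rm O}$) are imposed at every finite level \emph{and} descend to each summand in the limit. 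Making this final descent rigorous---so that each component of the separable decomposition is an individual memory-less tester rather than merely a positive product operator---is the delicate step, and I expect it to be handled by specialising the Berta (2021) machinery rather than by strengthening the bare de Finetti argument.
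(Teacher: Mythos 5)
Your forward direction is correct and coincides with the paper's: lift each extreme point $\rho_{\rm I}^{(j)}\otimes (M_{\rm O}^{(j),i})^{T}$ to $(\rho_{\rm I}^{(j)})^{\otimes k}\otimes (M_{\rm O}^{(j),i})^{T}$, check the normalisation via Proposition~\ref{prop:single_copy_phys_char}, and trace back down. Your converse, however, takes a genuinely different route from the paper's, and the ``main obstacle'' you flag is real but is exactly the point where the two routes diverge. You work directly with the block operator $\underline{T}_{\rm NIO}$, so the de Finetti limit only gives $\mathrm{I}|\mathrm{NO}$ separability with the POVM normalisation $\Tr_{\rm N}\underline{M}_{\rm NO}^{\alpha}=\mathds{1}_{\rm O}$ holding on average; pushing it onto each summand requires the \emph{constrained} de Finetti theorem. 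This does close the gap: you have already verified the needed finite-level constraint, namely $\Tr_{\rm N}(\underline{T}^{(k)}_{\rm N I^{k} O})=\sigma^{(k)}_{\rm I^{k}}\otimes\mathds{1}_{\rm O}$ for every $k$, which (after normalisation, with $\Psi_{\rm B}=\Tr_{\rm N}$ and $b=\mathds{1}_{\rm O}/d_{\rm O}$, and with block-diagonality in $\rm N$ either twirled in as you do or added as a further affine constraint) is precisely the hypothesis of Theorem~\ref{thm:constrained_dps_convergence}, proven in the paper and in Ref.~\cite{Berta2021}. So your argument is the ``direct consequence of Theorem~\ref{thm:constrained_dps_convergence}'' route that the paper mentions but does not spell out. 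The paper's actual proof avoids the constrained machinery altogether via Proposition~\ref{prop:sep_state_prep}: a convex combination of memory-less testers is characterised as $T^{i}_{\rm IO}=\rho_{\rm IE}*M^{i}_{\rm EO}$ with $\rho_{\rm IE}$ a \emph{separable state} and $\{M^{i}_{\rm EO}\}$ a single \emph{global} POVM, because conditioning a global POVM on a local state $\rho_{\rm E}^{\lambda}$ automatically yields a normalised POVM for each $\lambda$. The constraint-descent problem then never arises, and only the standard unconstrained extension theorem (Theorem~\ref{thm:dps_standard}) is needed, applied to $\rho_{\rm IE}$ rather than to the tester. In short: your approach is viable and buys generality (it is the template that extends to the adaptive, multi-constraint testers later in the paper), while the paper's reformulation buys a more elementary proof that sidesteps the very step you correctly identified as delicate.
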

 Theorem \ref{prop:symmetric_ext_tester} gives rise to a converging hierarchy of outer approximations to the set of memory-less single-copy testers, where each level is indexed by the number $k$ of copies of the input system $\rm{I}$. Later in Sec.~\ref{sec:constraine_sep_main}, we will see that Theorem~\ref{prop:symmetric_ext_tester} describes a special instance of a constrained symmetric extension hierarchy that provides a general method to find bounds to constrained separability problems. For any fixed $k$, the outer approximations only have affine and positive definite constraints, hence the problem of finding the optimal probability of success in the set of an outer approximation with parameter $k$ is a \blue{Semidefinite Program (SDP)}. SDPs describe a class of convex optimisation problems that is well studied, and in addition of respecting various properties, they can be numerically solved by efficient algorithms \cite{boyd2004convex}. Hence, in pragmatic terms, Theorem~\ref{prop:symmetric_ext_tester} provides a practical tool to compute upper bounds to the optimal success probability \eqref{eq:success_prob_optimisation_limited} of a memory-constrained channel discrimination problem. 
 
 \new{
 It is practical in the sense that sufficiently low levels $k$ of the hierarchy may be efficiently evaluated by an SDP implementation on a standard personal computer. 
 By counting the matrices and using the dimension
 of the symmetric subspace, the size of the SDP in the $k$-th level and memory dimension $d_{\rm E}$ amounts to $N$ positive semidefinite matrices that each have
 \begin{equation}
   \ell =  \underbrace{\binom{d_{\rm I}d_{\rm E} + k - 1}{k}}_{d_{\rm{Sym}(IE,k)}} * d_{\rm O} d_{\rm E}
 \end{equation}
 rows and columns where the left factor corresponds to the dimension of the symmetric subspace of $(\mathbb{C}^{d_{\rm I}d_{\rm E}})^{\otimes k}$ \cite{harrow2013}.
 By evaluating the binomial coefficient in terms of factorials, it follows that the number of scalar variables grows polynomially in $k$ with leading exponent $2(d_{\rm I} d_{\rm E}-1)$. 
 In particular, in the case of memory-less qubit testers, i.e., $d_{\rm I} = 2$ and $d_{\rm E} = 1$, the size of the SDP grows quadratically in $k$ allowing to compute the level $k=8$ using a personal computer within less than a minute.

 In a nutshell, the scaling of the computational complexity of our methods will behave as the scaling of the $k$-symmetric extension hierarchy for detecting entanglement (also known as DPS hierarchy)~\cite{PhysRevA.69.022308}. While our methods have strictly more constraints (we also impose the validity of the quantum tester), these additional constraints do not change the scaling of computational complexity. 
 }

 As an application of the hierarchy stated in Theorem~\ref{prop:symmetric_ext_tester}, we consider the discrimination of the channels given by the unitary square roots of three-dimensional clock-shift operators $\{\sqrt{X_{3}^{i} Z_{3}^{j}}\}_{i,j=0}^{2}$ defined in Definition~\ref{def:important_channels} with limited quantum memory, see Table \ref{tab:square_root}.
 With our methods, we are able to compute the optimal strategies with a gap of less than $10^{-2}$ between lower and upper bounds for all memory levels. For the computation of lower bounds, we applied the seesaw technique, see Appendix \ref{sec:seesaw} for details. The small size of the gap between the bounds is remarkable, as the considered set of channels has no underlying group structure that would simplify the problem. The source code used for our computations is available on a public repository \cite{gitcode}.

From a mathematical perspective, it also has to be noted that the presence of an extra constraint \eqref{eq:extra_constraints_single_copy} in the separability problem for memory-less testers is important. 
\begin{observation}
    \label{ob:separable_testers}
    The set of separable testers, i.e., testers $\{T_{\rm IO}^{i}\}_{i=1}^{N}$ of the form
    \begin{equation}
        T_{\rm IO}^{i} = \sum_{\lambda} q_{i \lambda} \; \rho_{\rm I}^{i \lambda} \otimes M_{\rm O}^{i \lambda}
    \end{equation}
    with any positive operators $\rho_{\rm I}^{i \lambda}$ and $M_{\rm O}^{i \lambda}$ and positive numbers $q_{i \lambda}$ does \textbf{not} coincide with the set of convex combinations memory-less testers, for which $\{M_{\rm O}^{i \lambda}\}_{i=1}^{N}$ is forced to be a measurement for each $\lambda$.
\end{observation}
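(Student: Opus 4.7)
My strategy is to exhibit a concrete separable tester that provably is not a convex combination of memory-less testers. With $d_{\rm I} = d_{\rm O} = 2$ and $N = 4$, consider the product elements
\begin{align*}
T_{\rm IO}^0 &= \tfrac{1}{2}\ketbra{0} \otimes \ketbra{0}, \\
T_{\rm IO}^1 &= \tfrac{1}{2}\ketbra{+} \otimes \ketbra{1}, \\
T_{\rm IO}^2 &= \tfrac{1}{2}\ketbra{1} \otimes \ketbra{0}, \\
T_{\rm IO}^3 &= \tfrac{1}{2}\ketbra{-} \otimes \ketbra{1}.
\end{align*}
Completeness of the bases $\{\ketbra{0}, \ketbra{1}\}$ and $\{\ketbra{+}, \ketbra{-}\}$ gives $\sum_i T_{\rm IO}^i = \tfrac{1}{2}\mathds{1}_{\rm I} \otimes \mathds{1}_{\rm O}$, so this is a valid tester with marginal state $\sigma_{\rm I} = \mathds{1}_{\rm I}/2$, and each element is a product of pure states and is hence separable.

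To show the tester is not in the convex hull of memory-less testers, I would argue by contradiction: suppose $T_{\rm IO}^i = \sum_\lambda p_\lambda \rho_{\rm I}^\lambda \otimes (M_{\rm O}^{i,\lambda})^T$ for some weights $p_\lambda$, states $\rho_{\rm I}^\lambda$, and POVMs $\{M_{\rm O}^{i,\lambda}\}_i$. The I-marginal is $\Tr_{\rm O}(T_{\rm IO}^i) = \tfrac{1}{2}\ketbra{\phi_i}$ where $\ket{\phi_0}, \ket{\phi_1}, \ket{\phi_2}, \ket{\phi_3}$ equal $\ket{0}, \ket{+}, \ket{1}, \ket{-}$ respectively. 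Since $\sum_\lambda p_\lambda \Tr(M_{\rm O}^{i,\lambda}) \rho_{\rm I}^\lambda = \tfrac{1}{2}\ketbra{\phi_i}$ expresses a rank-one positive operator as a sum of positive semidefinite operators, each non-vanishing summand's range must lie inside the one-dimensional range of $\ketbra{\phi_i}$, forcing every $\rho_{\rm I}^\lambda$ that contributes non-trivially to $T_{\rm IO}^i$ to equal $\ketbra{\phi_i}$.

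Because the four states $\ket{\phi_i}$ are pairwise distinct, any given $\rho_{\rm I}^\lambda$ can contribute non-trivially to at most one outcome $T_{\rm IO}^i$, so for each $\lambda$ the effects $M_{\rm O}^{i,\lambda}$ are non-zero for at most one index $i$; the POVM constraint $\sum_i M_{\rm O}^{i,\lambda} = \mathds{1}_{\rm O}$ then forces the surviving effect to equal $\mathds{1}_{\rm O}$. Substituting back into the assumed decomposition yields $T_{\rm IO}^i = \beta_i \ketbra{\phi_i} \otimes \mathds{1}_{\rm O}$ for some $\beta_i \geq 0$, contradicting the rank-one O-parts $\ketbra{0}$ and $\ketbra{1}$ prescribed by the construction. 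The delicate step is really the construction itself: each element must simultaneously be a rank-one product with a distinct I-direction and a non-identity O-direction while the tester summation $\sum_i T_{\rm IO}^i = \sigma_{\rm I} \otimes \mathds{1}_{\rm O}$ still closes, which is achieved by pairing the two mutually unbiased bases $\{\ket{0}, \ket{1}\}$ and $\{\ket{+}, \ket{-}\}$ on the I-side with a single basis on the O-side.
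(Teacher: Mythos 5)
Your proof is correct, but it takes a genuinely different route from the paper. The paper establishes the Remark operationally and numerically: it picks a concrete qubit channel ensemble (amplitude damping, bit-flip, identity), notes that for two qubits separability of the tester elements is equivalent to PPT, and then computes that the optimal success probability over convex combinations of memory-less testers ($0.556$) differs from that over separable/PPT testers ($0.562$), so the two sets cannot coincide. You instead exhibit an explicit separable tester and prove analytically that it lies outside the convex hull of memory-less testers: your example $T^i_{\rm IO}=\tfrac{1}{2}\ketbra{\phi_i}\otimes\ketbra{b_i}$ built from two mutually unbiased bases on $\rm I$ is a valid tester (the elements sum to $\tfrac{1}{2}\mathds{1}_{\rm I}\otimes\mathds{1}_{\rm O}$), and your support argument is sound --- since each $\rm I$-marginal $\Tr_{\rm O}(T^i)$ is rank one, every contributing local state must equal $\ketbra{\phi_i}$, the four $\ketbra{\phi_i}$ are pairwise distinct, so each $\lambda$ feeds at most one outcome, the normalisation $\sum_i M^{i,\lambda}_{\rm O}=\mathds{1}_{\rm O}$ then forces the surviving effect to be the identity, and this contradicts the rank-one $\rm O$-parts. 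What each approach buys: yours is elementary, self-contained, and computer-free, giving a clean structural witness that the two sets differ; the paper's version additionally shows that the gap is \emph{operationally relevant}, i.e., that replacing the memory-less constraint by separability (or PPT) strictly inflates the achievable success probability of an actual discrimination task, which is the point the surrounding discussion (on using PPT as a relaxation for memory-less qubit testers) needs. Your example would not by itself establish that second, quantitative claim, but it fully proves the Remark as stated.
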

 To see this, consider the discrimination of qubit to qubit channels with memory-less testers. The elements $T_{\rm IO}^{i}$ simply correspond to positive semidefinite operators on the system of two qubits. These are separable if and only if they have a positive partial transpose (PPT) \cite{Horodecki:1996nc}, see Appendix \ref{sec:ppt_criterion} for more details on the PPT criterion.

In a recent work in the context of Bayesian quantum metrology \cite{Bavaresco24}, the PPT condition has been used to described memory-less qubit testers. Although the PPT property is equivalent to separability in two-qubit systems, the following example demonstrates Remark \ref{ob:separable_testers} and shows the perhaps counter-intuitive fact that the PPT criterion is in general not enough for characterising memory-less qubit testers.  

 \begin{Example}
      \label{ex:sep_vs_consep}
      Consider the problem instance given by the uniform ensemble of an amplitude damping channel $\mathcal{C}_{\textnormal{ad}, \frac{2}{3}}$, a bit-flip channel $\mathcal{C}_{\textnormal{bf}, \frac{1}{3}}$, see Definition~\ref{def:important_channels}, and the identity $\textnormal{id}_{2}$. One observes that the maximal success probability with memory-less testers is given by $p = 0.556$. \new{This value has been obtained using the constrained symmetric extension hierarchy given by Theorem~\ref{prop:symmetric_ext_tester} up to level $k=8$, see Section \ref{sec:code}. The accuracy of this value has been verified with a seesaw optimisation.} 
      
      On the other hand, the relaxation of demanding that the tester elements are separable (equivalently PPT) leads to the outcome $p_{\textnormal{sep}}=0.562$ providing a clear separation of both optimisation problems. \new{This corresponds to the level $k=1$ in Theorem ~\ref{prop:symmetric_ext_tester}.}
 \end{Example}
 \new{
 The public repository \cite{gitcode} contains the script {\allowbreak ``amplitude\_damping\_bit\_flip\_identity\_success\\\_probability.jl''} that allows for a verification of the numerical results presented in Example \ref{ex:sep_vs_consep}}. 

\begin{table} 
    \centering
    \begin{tabular}{c|c c}
     & \multicolumn{2}{c}{optimal success probability} \\
    \cline{2-3}
     $d_{\rm E}$  & lower bound & upper bound\\
    \hline
     1 & 0.32627  & 0.32738 \;($k=4$)\\
     2 & 0.59416 & 0.60157 \;($k=1$)\\
     3 & 0.70126 & 0.70126
\end{tabular} 
\caption{Lower and upper bounds for the optimal success probabilities for discrimination of square roots of qutrit 
shift clock operators, see Definition~\ref{def:important_channels}, with $d_{\rm E}=1, 2, 3$. By Proposition~\ref{prop:single_copy_phys_char}, the value for $d_{\rm E} = 3$ can be exactly determined by a single SDP so that lower and upper bound coincide in this case. For $d_{\rm E} = 1,2$, lower bounds to the optimal success probability are obtained by a seesaw optimisation of the problem \eqref{eq:objective_function}, see Appendix \ref{sec:seesaw} for more details on seesaw optimisation. The upper bounds for $d_{\rm E}=1,2$ are obtained by applying the constrained symmetric extension hierarchy given by Theorem~\ref{prop:symmetric_ext_tester} to some finite order $k$ 
 combined with additional PPT constraints.}
 \label{tab:square_root}
\end{table}

\subsection{Constrained separability problems}
\label{sec:constraine_sep_main}
In the last section, we saw that the question of optimal channel discrimination with limited memory in the single-copy scenario can be described by the optimisation problem \eqref{eq:objective_function} over separable states with extra affine constraints that the factors in the separable decomposition have to obey. In the course of this work, we will see multiple appearances of variants of such problems since the multi-copy versions of the channel discrimination problem admit a similar mathematical description with properly defined generalised testers. 

As a general convex optimisation problem, \blue{constrained separability problems} have been introduced in Ref.~\cite{Berta2021} and they describe the optimisation of a linear functional $\Tr(F_{\rm AB}\, \cdot)$, specified by a hermitian operator $F \in \mathcal{L}(\mathcal{H}_{\rm A} \otimes \mathcal{H}_{\rm B})$,
that is of the form
\begin{align}
   &r_{\textnormal{opt}} =  \max_{\rho_{\rm AB}} \, \Tr(F_{\rm AB} \, \rho_{\rm AB}) \\
    &\textnormal{s.t.} \; \rho_{\rm AB} = \sum_{\lambda} p_{\lambda} \; \rho_{\rm A}^{\lambda} \otimes \rho_{\rm B}^{\lambda} \\
    & \forall \lambda: \; \rho_{\rm A}^{\lambda} \geq 0, \;  \Phi_{\rm A}(\rho_{\rm A}^{\lambda}) = a, \, \Tr(\rho_{\rm A}^{\lambda}) = 1 \label{eq:affine_a}\\
    & \forall \lambda: \; \rho_{\rm B}^{\lambda} \geq 0, \;  \Psi_{\rm B}(\rho_{\rm B}^{\lambda}) = b, \, \Tr(\rho_{\rm B}^{\lambda}) = 1 \label{eq:affine_b}\\
    & p_{\lambda} \geq 0, \; \sum_{\lambda} p_{\lambda} = 1. 
\end{align}

Here, $\Phi_{\rm A}:\mathcal{L}(\mathcal{H}_{\rm A})\rightarrow V_{\rm A}$ and $\Psi_{\rm B}:\mathcal{L}(\mathcal{H}_{\rm B})\rightarrow V_{\rm B}$ are fixed linear maps with vector spaces $V_{\rm{A}}$ and $V_{\rm{B}}$ and $a \in V_{\rm{A}}$ and $b \in V_{\rm{B}}$ are fixed vectors in the images of $\Phi_{\rm{A}}$ and $\Psi_{\rm{B}}$. The equations \eqref{eq:affine_a} and \eqref{eq:affine_b} describe affine constraints which make up the difference compared to the standard separability problem \cite{Guehne2009, Horodecki09}. The trace constraints are imposed to ensure that the outcome $r_{\rm opt}$ is finite. We saw an example of such an extra constraint already in Eq.~\eqref{eq:single_copy_measurement_constraint} which characterised the set of POVMs describing general quantum measurements. Notice that the problem, that the block matrices describing POVMs are not normalised as quantum states, can be easily taken care of by a suitable redefinition of the cost functional $F_{\rm AB}$.

Although separability problems are known to be generally hard to solve \cite{Gurvits2003}, lower and upper bounds can be obtained. A strong method for the computation of the latter comes from a converging hierarchy of semidefinite programs that is a generalisation of the converging hierarchy of symmetric extensions \cite{PhysRevA.69.022308} for the unconstrained case. Another important method for the computation of upper bounds is based on the positive partial transpose criterion (PPT) \cite{Horodecki:1996nc}, see Appendix \ref{sec:ppt_criterion} for details. The combination of both approaches leads to the \blue{PPT-constrained symmetric extension hierarchy}
\begin{align}
    &r_{k} = \max_{\rho_{\rm AB}} \, \Tr(F_{\rm AB} \, \rho_{\rm AB}) \label{eq:con_sym_opt_begin}\\
    &\textnormal{s.t.} \; \rho_{\rm AB} = \Tr_{{\rm{A}}^{k-1}}(\rho_{{\rm{A}}^{k}\rm{B}}) \\
    &\rho_{{\rm{A}}^{k}\rm{B}} \geq 0, \, \Tr(\rho_{\rm A^{k}B}) = 1 \\
    &\rho_{{\rm{A}}^{k}\rm{B}} = (U_{{\rm{A}}^{k}}^{\sigma} \otimes \mathds{1}_{\rm B}) \rho_{{\rm{A}}^{k}\rm{B}} ({U_{{\rm{A}}^{k}}^{\sigma}}^{\dagger} \otimes\mathds{1}_{\rm B}) \; \forall \sigma \\
    & \Tr_{\rm B}(\rho_{{\rm{A}}^{k}\rm{B}}) \otimes b = (\textnormal{id}_{{\rm{A}}^{k}} \otimes \Psi_{\rm B})\rho_{{\rm{A}}^{k}\rm{B}} \\
    &\Tr_{\rm AB}(\rho_{{\rm{A}}^{k}\rm{B}}) \otimes a = (\textnormal{id}_{{\rm{A}}^{k-1}} \otimes \Phi_{\rm A}) \Tr_{\rm B}(\rho_{{\rm{A}}^{k}\rm{B}}) \\
    &\forall l \in \{1,\dots,k\}: \;  {(\rho_{{\rm{A}}^{k}\rm{B}})}^{T_{{\rm{A}}^{l}}} \geq 0 \label{eq:con_sym_opt_end}
\end{align}
for $k\in \mathbb{N}$ where $\sigma \mapsto U_{{\rm{A}}^{k}}^{\sigma}$ is the standard unitary representation of the symmetric group $\mathcal{S}_{k}$ on $\mathcal{H}_{\rm A}^{\otimes k}$ \cite{harrow2013}. 
In the constrained setting, the hierarchy \eqref{eq:con_sym_opt_begin}-\eqref{eq:con_sym_opt_end} has been proven to converge~\cite{Berta2021} in the sense that $\lim_{k \rightarrow \infty} r_k = r_{\textnormal{opt}}$ 
based on the generalisations of the quantum De Finetti theorem~\cite{Berta2021,Caves2002, Christandl2007, aubrun2022monogamyentanglementcones,costa2024finettitheoremquantumcausal, Belzig2024}.
We review the problem of constrained separability in Appendix \ref{sec:con_sep} and give an elementary proof 
of the convergence of the constrained symmetric extensions that is based directly on the standard quantum De Finetti theorem \cite{Caves2002, PhysRevA.69.022308}.  

For the computation of lower bounds we use the seesaw technique in which initial local states are sampled, and local optimisations are performed in an alternating manner, see Appendix \ref{sec:seesaw} for more details. In the context of quantum metrology, such a technique has also been used in Ref.~\cite{kurdzialek2024}, where the authors make use of a tensor network representation to obtain non-trivial lower bounds for bounded memory scenarios in adaptive scenarios with up to $N=50$ copies of the given channel. Also, in Ref.~\cite{Bavaresco24}, the seesaw approach was used in the analysis of Bayesian parameter estimation in a memory-less scenario. 

The seesaw optimisation provides a valuable method for the computation of lower bounds. However, previously it has been difficult to determine the difference between the lower bound and the real optimum in the case that upper bounds cannot be computed or are relatively far away from the lower bound. To overcome this problem and to make quantitative statements about the quality of seesaw optimisation bounds, we apply the method of polytope approximations that has already been used for constructing local hidden variable models for entangled states \cite{Hirsch16,Cavalcanti2015LHV}, 
quantum steering \cite{Nguyen19, steinberg2023}, in the study of causal structures in quantum processes \cite{Nery2021} and for the standard quantum separability problem \cite{Ohst24}. The idea is to consider a fixed finite subset of operators $\{\widetilde{\rho}_{\rm A}^{\lambda}\}_{\lambda=1}^{m}$ of quantum states that satisfy $\Phi_{A}(\widetilde{\rho}_{\rm A}^{\lambda}) = a$ for all $\lambda$. The states $\{\widetilde{\rho}_{\rm A}^{\lambda}\}_{\lambda = 1}^{m}$ can be thought of as the set of initialisations chosen to start the seesaw algorithm. Geometrically, the convex hull $\mathcal{V}_{\rm A} = \textnormal{conv}(\{\widetilde{\rho}_{\rm A}^{\lambda}\}_{\lambda=1}^{m})$ is a polytope that approximates the constrained state space 
\begin{equation}
    Y_{\rm A} := \{\rho_{\rm A} \geq 0, \Tr(\rho_{\rm A}) = 1, \Phi_{\rm A}(\rho_{\rm A}) = a\}
\end{equation}
from inside, i.e., $\mathcal{V}_{\rm A} \subset Y_{\rm A}$. Given the polytope $\mathcal{V}_{\rm A}$, we consider \new{the optimisation problem}
\begin{align}
     r_{\mathcal{V}_{\rm A}} = \max_{\lambda} &\max_{\rho_{\rm B}}  \Tr(F_{\rm AB} \; \widetilde{\rho}_{\rm A}^{\lambda} \otimes \rho_{\rm B}) \\
    & \textnormal{s.t} \; \rho_{\rm B} \geq 0 \\ 
    &\Tr(\rho_{\rm B}) = 1, \Psi_{\rm B}(\rho_{\rm B}^{\lambda}) =  b
\end{align}
\new{that may be solved by a maximisation over the outcomes of finitely many semidefinite programs, out of which each contains a positive semidefinite variable matrix with $d_{\rm B}$ rows and columns.}
Notice that $r_{\mathcal{V}_{\rm A}} \leq r_{\rm opt}$ providing a lower bound to the optimisation problem. By considering ``outer polytopes'' that are given by a finite set of operators whose convex hull contains the constrained state space $Y_{\rm A}$, also upper bounds to $r_{\rm opt}$ can be computed in an analogous way~\cite{Hirsch16,Nguyen19,Nery2021,Ohst24, steinberg2023}. 

\begin{figure}
    \centering
    \includegraphics[width=0.9\linewidth]{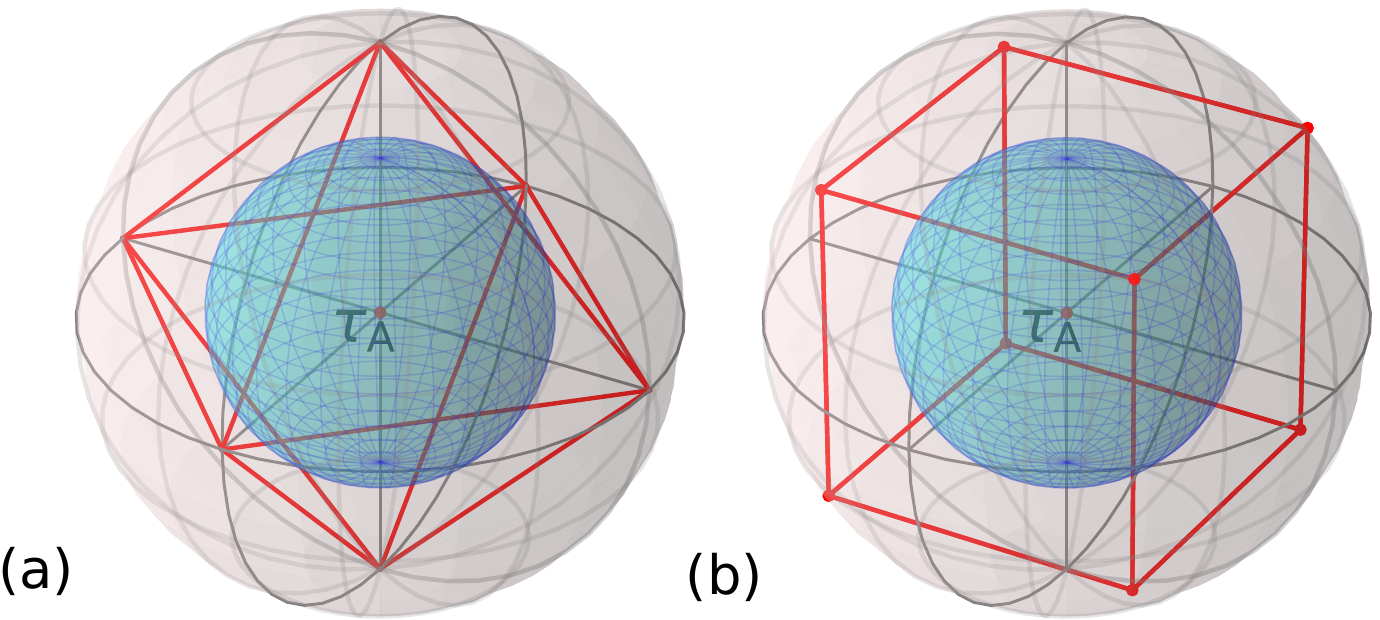}
    \caption{Simple inner polytopes $\mathcal{V}_{\rm A}$ of the unconstrained qubit state space (Bloch sphere) with reference state $\tau_{\rm A} = \frac{\mathds{1}_{2}}{2}$. (a) Polytope spanned by the $6$ eigenvectors of the three Pauli matrices. (b) Polytope spanned by the $8$ Bloch vectors $(\pm \frac{1}{\sqrt{3}}, \pm \frac{1}{\sqrt{3}}, \pm \frac{1}{\sqrt{3}})$. Although having a different number of vertices, both polytopes have the same approximation radius $l_{\tau_{\rm A}}(\mathcal{V}_{\rm A}) = \frac{1}{\sqrt{3}} \approx 0.577$ with respect to $\tau_{\rm A}$.}
    \label{fig:bloch_polytopes}
\end{figure}

In fact, an outer polytope can be obtained from the inner polytope $\mathcal{V}$, which allows for assessing the quality of the inner polytope itself. 
To this end, let $\tau_{A} \in \textnormal{int}(\mathcal{V}_{\rm A})$ be a fixed reference point in the interior of $\mathcal{V}_{\rm A}$ and let us define the approximation radius $l_{\tau_{\rm A}}(\mathcal{V}_{\rm A}) \in [0, 1]$ by 
\begin{align}
    l_{\tau_{\rm A}}(\mathcal{V}_{\rm A}) &:= \max_{t}  t \\
    &\textnormal{s.t} \; t \rho_{\rm A} + (1-t) \tau_{\rm A} \in \mathcal{V}_{\rm A} \; \forall \rho_{\rm A} \in Y_{\rm A}
\end{align}
Intuitively, the bigger $l_{\tau_{\rm A}}(\mathcal{V}_{\rm A})$ is, the better is the quality of the polytope approximation. The number $l_{\tau_{\rm A}}(\mathcal{V}_{\rm A})$ can be computed by a facet enumeration of the polytope $\mathcal{V}_{\rm A}$ and subsequent semidefinite programs that have to be computed for each facet\new{, see Appendix \ref{sec:seesaw}}. \new{Each of the SDPs contains a positive semidefinite variable matrix with $d_{\rm B}$ rows and columns.}  For instance in the case of the qubit state space, geometrically described by the Bloch sphere, and with the reference state given by the maximally mixed state, the approximation radius is just given by the largest sphere (centred at the origin) that is contained in the polytope, see Figure \ref{fig:bloch_polytopes}.

Further, let us define the value $f_{\tau_{\rm A}}$ as 
\begin{equation}
\label{eq:f_optimisation}
    f_{\tau_{\rm A}} := \min_{\rho_{\rm B} \in Y_{\rm B}} \Tr \left[\Tr_{\rm A}(F_{\rm AB} \, \tau_{\rm A} \otimes \mathds{1}_{\rm B}) \rho_{\rm B}\right]
\end{equation}
where the optimisation runs over all states $\rho_{\rm B}$ such that $\Psi_{\rm B}(\rho_{\rm B}) = b$ and $\Tr(\rho_{\rm B}) = 1$, that is, the constrained state space $Y_{\rm B}$. The value $f_{\tau_{\rm A}}$ only depends on the reference state $\tau_{A}$, especially it does not depend on the specific polytope $\mathcal{V}_{\rm A}$, and can be computed by a single SDP \new{involving a positive semidefinite matrix variable with $d_{\rm B}$ rows and columns.}

We can now state the following theorem that gives rise to a quantitative bound on the error in seesaw optimisation. The proof is presented in Appendix \ref{sec:seesaw} based on the construction of an outer polytope from an inner one.

\begin{Theorem}
    \label{thm:see_saw_bound}
    Let $(F_{\rm AB}, \Phi_{\rm A}, a, \Psi_{\rm B}, b)$ describe a constrained separability problem and let $\mathcal{V}_{\rm A}$ be an inner polytope of the constrained state space $Y_{\rm A}$ that has a non-zero approximation radius $l_{\tau_{\rm A}}(\mathcal{V}_{\rm A})$. Then for the polytope approximation $r_{\mathcal{V}_{\rm A}}$, we have the inequalities  
    \begin{equation}
        r_{\mathcal{V}_{\rm A}} \leq r_{\rm opt} \leq \frac{1}{l_{\tau_{\rm A}}(\mathcal{V}_{\rm A})} r_{\mathcal{V}_{\rm A}} + \frac{l_{\tau_{\rm A}}(\mathcal{V}_{\rm A})-1}{l_{\tau_{\rm A}}(\mathcal{V}_{\rm A})} f_{\tau_{\rm A}}
    \end{equation}
    In particular, the right-hand side of the inequality goes to $r_{\mathcal{V}_{\rm A}}$ if the approximation radius $l_{\tau_{\rm A}}(\mathcal{V}_{\rm A})$ goes to $1$.
\end{Theorem}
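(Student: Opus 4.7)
The plan is to turn the inner polytope $\mathcal{V}_{\rm A}$, together with the reference state $\tau_{\rm A}$, into an affine outer description of the constrained state space $Y_{\rm A}$, and to propagate this representation through any separable decomposition of a candidate state. The lower bound $r_{\mathcal{V}_{\rm A}} \leq r_{\rm opt}$ is immediate because $\mathcal{V}_{\rm A} \subset Y_{\rm A}$, so only the upper bound requires real work.

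First I would observe that by the very definition of the approximation radius $l := l_{\tau_{\rm A}}(\mathcal{V}_{\rm A})$, for every $\rho_{\rm A} \in Y_{\rm A}$ the mixture $\widetilde{\rho}_{\rm A} := l\,\rho_{\rm A} + (1-l)\,\tau_{\rm A}$ lies in $\mathcal{V}_{\rm A}$. Inverting this identity gives the affine representation
\begin{equation*}
\rho_{\rm A} = \tfrac{1}{l}\,\widetilde{\rho}_{\rm A} - \tfrac{1-l}{l}\,\tau_{\rm A},
\end{equation*}
whose coefficients sum to $1$ but have opposite signs. Applying this rewriting to each local factor $\rho_{\rm A}^{\lambda}$ in a decomposition $\rho_{\rm AB} = \sum_\lambda p_\lambda\, \rho_{\rm A}^{\lambda} \otimes \rho_{\rm B}^{\lambda}$ of any feasible $\rho_{\rm AB}$ and using linearity of $\Tr(F_{\rm AB}\,\cdot\,)$ yields
\begin{equation*}
\Tr(F_{\rm AB}\rho_{\rm AB}) = \tfrac{1}{l}\sum_\lambda p_\lambda \Tr(F_{\rm AB}\,\widetilde{\rho}_{\rm A}^{\lambda}\otimes\rho_{\rm B}^{\lambda}) - \tfrac{1-l}{l}\Tr(F_{\rm AB}\,\tau_{\rm A}\otimes\bar{\rho}_{\rm B}),
\end{equation*}
where $\bar{\rho}_{\rm B} := \sum_\lambda p_\lambda\,\rho_{\rm B}^{\lambda}$.

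I would then bound the two pieces separately. The first sum is a convex combination of values $\Tr(F_{\rm AB}\,\widetilde{\rho}_{\rm A}^\lambda\otimes\rho_{\rm B}^\lambda)$ with $\widetilde{\rho}_{\rm A}^\lambda \in \mathcal{V}_{\rm A}$ and $\rho_{\rm B}^\lambda \in Y_{\rm B}$; since the objective is linear in the $A$-factor, it attains its maximum at an extreme point of $\mathcal{V}_{\rm A}$, so each summand is at most $r_{\mathcal{V}_{\rm A}}$ and the convex combination inherits the bound. For the second term, convexity of $Y_{\rm B}$ (the intersection of the state space with the affine constraints $\Psi_{\rm B}(\cdot)=b$ and $\Tr(\cdot)=1$) implies $\bar{\rho}_{\rm B}\in Y_{\rm B}$. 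Rewriting $\Tr(F_{\rm AB}\,\tau_{\rm A}\otimes\bar{\rho}_{\rm B}) = \Tr[\Tr_{\rm A}(F_{\rm AB}(\tau_{\rm A}\otimes\mathds{1}_{\rm B}))\,\bar{\rho}_{\rm B}]$ and invoking the definition of $f_{\tau_{\rm A}}$ in Eq.~\eqref{eq:f_optimisation} yields $\Tr(F_{\rm AB}\,\tau_{\rm A}\otimes\bar{\rho}_{\rm B}) \geq f_{\tau_{\rm A}}$. Multiplication by the negative prefactor $-(1-l)/l$ flips the inequality into an upper bound, contributing at most $(l-1)f_{\tau_{\rm A}}/l$ to the total. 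Taking the supremum over all feasible $\rho_{\rm AB}$ assembles the claimed inequality.

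The main obstacle I anticipate is conceptual bookkeeping rather than hard analysis. Two points require care: one must track signs so that the minimum defining $f_{\tau_{\rm A}}$ correctly flips through the negative prefactor to produce an upper rather than a lower bound, and one must verify that the affine rewriting applies to \emph{every} $\rho_{\rm A} \in Y_{\rm A}$, which is precisely why the approximation radius is defined universally over $Y_{\rm A}$ rather than only over extreme points. The limit $l \to 1$ then serves as a sanity check: the prefactors $1/l$ and $(l-1)/l$ collapse to $1$ and $0$, so the right-hand side reduces to $r_{\mathcal{V}_{\rm A}}$, as expected for a perfect inner approximation.
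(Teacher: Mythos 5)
Your proof is correct and follows essentially the same route as the paper: both hinge on inverting the contraction $\rho_{\rm A}\mapsto l\,\rho_{\rm A}+(1-l)\,\tau_{\rm A}$ to write every element of $Y_{\rm A}$ as an affine combination with coefficients $1/l$ and $(l-1)/l$, bounding the positive part by $r_{\mathcal{V}_{\rm A}}$ and flipping the minimum $f_{\tau_{\rm A}}$ through the negative coefficient. The only difference is presentational — the paper packages this rewriting as an explicit outer polytope with vertices $\sigma_{\rm A}^{\lambda}=\frac{1}{l}(\widetilde{\rho}_{\rm A}^{\lambda}-\tau_{\rm A})+\tau_{\rm A}$, whereas you apply it directly to each factor of the separable decomposition, which is equivalent.
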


\section{Multi-copy channel discrimination without memory constraints}
\label{sec:multi_without_const}
In the multi-copy scenario, the channel discrimination protocol is strengthened by allowing to apply the unknown channel more than once within a single run. For simplicity, we focus on the situation in which two copies of the unknown channel are available, but all the following notions can be straightforwardly generalised to more copies. It is assumed that a channel is randomly drawn from the ensemble $\{ q_{i}, {(\mathcal{C}^i \otimes \mathcal{C}^i)}_{\rm \bf{I}\rightarrow \bf{O}} \}_{i=1}^{N}$ where $\bf{I}$ and $\bf{O}$ are the labels for the composite systems $\rm{I}_{1}\rm{I}_{2}$ and $\rm{O}_{1}\rm{O}_{2}$, respectively. With an appropriate generalisation of testers to the multi-copy scenario given by a set of positive operators $\{T_{\bf{IO}}^{i}\}_{i=1}^{N}$, we are interested in optimising the success probability 
\begin{align}
\label{eq:multi_suc_prob}
     p =  \sum_{i=1}^{N} q_{i} \Tr({(C^i \otimes C^i)}_{\bf{I}\bf{O}} T_{\bf{IO}}^{i})
\end{align}
where $C^{i}$ is again the Choi matrix of $\mathcal{C}^{i}$.  

In contrast to the single-copy scenario, there are multiple inequivalent strategies to perform the discrimination that lead to different definitions of the testers $\{T_{\bf{IO}}^{i}\}_{i=1}^{N}$. We will discuss these strategies first in the case of unlimited quantum memory. In the subsequent section, their respective memory restrictions will be discussed. 

\subsection{Parallel schemes without memory constraints}

The first possibility is to apply the two copies of the channel in parallel in the sense that they act as a product channel on a high-dimensional quantum state before a high-dimensional measurement is performed.

Mathematically, this means that one acts with the channel ${(\mathcal{C}^i \otimes \mathcal{C}^i)}_{\bf{I}\rightarrow \bf{O}} \otimes \text{id}_{\rm E}$ on a state $\rho_{\rm \mathbf{I}E}$ before some global measurement $M_{\rm E\bf{O}}^{i}$ is performed. Simply speaking, this strategy is the same as the single-copy scheme applied to the ensemble $\{\lambda_{i}, {(\mathcal{C}^i \otimes \mathcal{C}^i)}_{\rm \bf{I}\rightarrow \bf{O}} \}_{i=1}^{N}$ of product channels. Consequently, the associated tester elements are given by  $T_{\bf{IO}}^{i} = \rho_{\rm \textbf{I}E} * ({M_{\rm E\bf{O}}^{i}})^{T}$ which in analogy to single-copy testers \ref{def:sc_tester} justifies the following definition. 
\begin{definition}[Two-copy \blue{parallel tester} \cite{PhysRevA.80.022339}]
\label{def:unbounded_parallel}
     A collection of positive operators $\{T_{\bf{IO}}^{i}\}_{i=1}^{N} \subset \mathcal{L}(\mathcal{H}_{\rm I_{1}} \otimes\mathcal{H}_{\rm I_{2}} \otimes \mathcal{H}_{\rm O_{1}} \otimes \mathcal{H}_{\rm O_{2}})$ is called a two-copy parallel tester, if there is a quantum system $\rm{E}$, a state $\rho_{\rm \bf{I}E}$ and a measurement $M_{\rm E\bf{O}}^{i}$
     such that
     \begin{align}\label{eq:two_copy_parallel_tester_def}
     T_{\bf{IO}}^{i} = \rho_{\rm \mathbf{I}E} * {(M_{\rm E\bf{O}}^{i})}^{T}
\end{align}
The corresponding circuit looks as follows:
\begin{center}
    \resizebox{0.35\textwidth}{!}{\begin{tikzpicture}[x=1cm, y=1cm]
\clip (-1,-.2) rectangle (4,2.4);

\filldraw [color=blue!60, fill=blue!5, very thick] (0,2.2) arc [start angle=90, end angle=270, x radius=.75, y radius=1.1]
node [pos=.5,xshift=0.1cm, right, black] {\Large{$\rho$}};
\draw[very thick,color=blue!60](0,0) -- (0,2.2) ;

\draw[->,thick](0,.3) --(1,.3) node [pos=.5, below]{$\rm{I_2}$};
\draw[->,thick](0,1.1)--(1,1.1) node [pos=.5, above]{$\rm{I_1}$};
\draw[->,thick](0,1.9)--(3,1.9) node [pos=.5, above]{$\rm{E} \cong \mathbb{C}^{\infty}$};

\filldraw [color=red!60, fill=red!5, very thick]
(1,0) rectangle (2,.6) node [pos=0.5,black] {$\mathcal{C}$};
\filldraw [color=red!60, fill=red!5, very thick] 
(1,.8) rectangle (2,1.4) node [pos=0.5,black] {$\mathcal{C}$};

\draw[->,thick](2,.3) --(3,.3) node [pos=.5, below]{$\rm{O_2}$};
\draw[->,thick](2,1.1)--(3,1.1) node [pos=.5, above]{$\rm{O_1}$};

\filldraw [color=blue!60, fill=blue!5, very thick] (3,0) arc [start angle=-90, end angle=90, x radius=.75, y radius=1.1];
\draw[very thick,color=blue!60]
(3,0) -- node [right=0.01, black] {$M^{i}$}(3,2.2) ;

\end{tikzpicture}}
\end{center}
\end{definition}
Mathematically, it is fair to say that parallel testers are just single-copy testers as in Definition~\ref{def:sc_tester} and the only difference between them lies in the considered scenario in the sense that parallel testers allow the application on multiple copies of an unknown channel.
Consequently, similar to Proposition~\ref{prop:single_copy_phys_char}, parallel testers admit a characterisation in terms of positive semidefinite matrices and affine constraints via $\sum_{i=1}^{N} T_{\bf{IO}}^{i} = \sigma_{\bf{I}} \otimes \mathds{1}_{\bf{O}}$ where $\sigma_{\bf{I}}$ is a state.

\subsection{Adaptive schemes without memory constraints}
In adaptive schemes, one allows that one of the channels is used before the other one. This means that one admits quantum communication of the output of the first call to the input of the second call. In terms of tester elements, the adaptive scheme has been defined in Ref.~\cite{PhysRevA.80.022339}, see also \cite{Bavaresco21PRL}.  
\begin{definition}[Two-copy \blue{adaptive tester} \cite{PhysRevA.80.022339}]
\label{def:adaptive_tester}
     A collection of positive operators $\{T_{\bf{IO}}^{i}\}_{i=1}^{N} \subset \mathcal{L}(\mathcal{H}_{\rm I_{1}} \otimes\mathcal{H}_{\rm O_{1}} \otimes \mathcal{H}_{\rm I_{2}} \otimes \mathcal{H}_{\rm O_{2}})$  with $W_{\bf{IO}} = \sum_{i=1}^{N} T_{\bf{IO}}^{i}$ is called a two-copy adaptive tester, if there exist quantum systems $\rm{E}_1$ and $\rm{E}_2$ a state $\rho_{\rm I_{1}E_{1}}$, a channel $\mathcal{K}_{\rm E_{1}O_{1}\rightarrow I_{2}E_{2}}$ with Choi matrix $K_{\rm E_{1}O_{1}I_{2}E_{2}}$ and a measurement $\{M^{i}_{\rm E_{2}O_{2}}\}_{i=1}^{N}$ such that 
     \begin{align}\label{eq:two_copy_adaptive_tester_def}
    T_{\bf{IO}}^{i} = \rho_{\rm I_{1}E_{1}} * K_{\rm E_{1}O_{1}I_{2}E_{2}} * (M_{\rm E_{2}O_{2}}^{i})^{T}.
\end{align}
The corresponding circuit can be represented as follows:
\begin{center}
    \resizebox{0.45\textwidth}{!}{\begin{tikzpicture}[x=1cm,y=1cm]
\clip (-0.9,-.5) rectangle (7.7,2.5);

\filldraw [color=blue!60, fill=blue!5, very thick] 
(0,1.8) arc [start angle=90, end angle=270, x radius=.8, y radius=.9] 
node [pos=0.5,xshift=.1cm,right,black] {\Large{$\rho$}};
\draw[very thick,color=blue!60](0,0) -- (0,1.8) ;

\draw[->,thick](0,1.4) --(2.8,1.4) node [pos=.5, above]{$\rm{E_{1}} \cong \mathbb{C}^{\infty}$};
\draw[->,thick](0,.4)--(1,.4) node [pos=.5, below]{$\rm{I_1}$};

\filldraw [color=red!60, fill=red!5, very thick] 
(1,0) rectangle (1.8,.8) node [pos=0.5,black] {$\mathcal{C}$};

\draw[->,thick](1.8,.4) --(2.8,.4) node [pos=.5, below]{$\rm{O_1}$};

\filldraw [color=blue!60, fill=blue!5, very thick] 
(2.8,0) rectangle (3.8,1.8) node [pos=0.5,black] {$\mathcal{K}$};

\draw[->,thick](3.8,1.4) --(6.6,1.4) node [pos=.5, above]{$\rm{E_{2}} \cong \mathbb{C}^{\infty}$};
\draw[->,thick](3.8,.4)--(4.8,.4) node [pos=.5, below]{$\rm{I_2}$};

\filldraw [color=red!60, fill=red!5, very thick] 
(4.8,0) rectangle (5.6,.8) node [pos=0.5,black] {$\mathcal{C}$};

\draw[->,thick](5.6,.4) --(6.6,.4) node [pos=.5, below]{$\rm{O_2}$};

\filldraw [color=blue!60, fill=blue!5, very thick] 
(6.6,0) arc [start angle=-90, end angle=90, x radius=.8, y radius=.9];
\draw[very thick,color=blue!60]
(6.6,1.8) -- node [right, black] {$M^{i}$}(6.6,0) ;

\end{tikzpicture}}
\end{center}
\end{definition}
In physical terms, every two-copy adaptive scheme can be understood in the following way. The first copy of the unknown channel is applied on a part of potentially entangled state $\rho_{\rm I_{1}E_{1}}$ before a fixed global channel $\mathcal{K}_{\rm E_{1}O_{1}\rightarrow I_{2}E_{2}}$ transforms the resulting state.
After that, the second channel copy is applied on a subsystem of the output system of the channel $\mathcal{K}_{\rm E_{1}O_{1}\rightarrow I_{2}E_{2}}$ which is followed by a measurement $\{M^{i}_{\rm E_{2}O_{2}}\}_{i}$.

As for parallel testers, adaptive testers admit an SDP description that is practical from the point of view of optimisation tasks.
\begin{proposition}[\cite{PhysRevA.80.022339}]
\label{prop:adpative_charcterisation}
    A collection of positive operators $\{T_{\bf{IO}}^{i}\}_{i=1}^{N}$  with $W_{\bf{IO}} := \sum_{i=1}^{N} T_{\bf{IO}}^{i}$ is a two-copy adaptive tester if and only if 
    \begin{align}
    W &= R_{\rm \mathbf{I} O_1} \otimes \mathds{1}_{\rm O_{2}} \\
    \Tr_{\rm I_{2}}(R_{\rm \mathbf{I} O_1}) &= \sigma_{\rm I_{1}} \otimes \mathds{1}_{\rm O_1} 
\end{align}
where $R_{\rm \mathbf{I} O_1}$ is a positive operator and $\sigma_{\rm I_{1}}$ is a state.
\end{proposition}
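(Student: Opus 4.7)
The plan is to prove the two implications separately. For the ``only if'' direction, a direct computation suffices: starting from \eqref{eq:two_copy_adaptive_tester_def} and summing over $i$, the POVM condition $\sum_{i} M^{i}_{\rm E_{2}O_{2}} = \mathds{1}_{\rm E_{2}O_{2}}$ gives $W_{\bf{IO}} = \rho_{\rm I_{1}E_{1}} * K_{\rm E_{1}O_{1}I_{2}E_{2}} * \mathds{1}_{\rm E_{2}O_{2}}$. Since linking with the identity on the shared system $\rm E_{2}$ amounts to tracing out $\rm E_{2}$ and tensoring with $\mathds{1}_{\rm O_{2}}$, this simplifies to $W = R_{\rm \mathbf{I} O_{1}} \otimes \mathds{1}_{\rm O_{2}}$ with $R := \Tr_{\rm E_{2}}(\rho * K) \geq 0$. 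Taking $\Tr_{\rm I_{2}}$ and using the channel normalisation $\Tr_{\rm I_{2}E_{2}}K = \mathds{1}_{\rm E_{1}O_{1}}$ collapses the link product to $\rho * \mathds{1}_{\rm E_{1}O_{1}} = \Tr_{\rm E_{1}}(\rho) \otimes \mathds{1}_{\rm O_{1}} = \sigma_{\rm I_{1}} \otimes \mathds{1}_{\rm O_{1}}$ with $\sigma_{\rm I_{1}} := \Tr_{\rm E_{1}}(\rho)$ a state.

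For the ``if'' direction, I would build an explicit implementation from the given data. Set $\rm E_{1} \cong \rm I_{1}$ and let $\rho_{\rm I_{1}E_{1}} := |\psi\rangle\langle\psi|$ be a purification of $\sigma_{\rm I_{1}}$, written as $|\psi\rangle = (\sqrt{\sigma_{\rm I_{1}}} \otimes \mathds{1}_{\rm E_{1}})|\phi^{+}\rangle_{\rm I_{1}E_{1}}$ with $|\phi^{+}\rangle = \sum_{l}|l\rangle|l\rangle$. A direct expansion of the link product then shows $\rho_{\rm I_{1}E_{1}} * K_{\rm E_{1}O_{1}I_{2}} = (\sqrt{\sigma_{\rm I_{1}}} \otimes \mathds{1}_{\rm O_{1}I_{2}}) K_{\rm I_{1}O_{1}I_{2}} (\sqrt{\sigma_{\rm I_{1}}} \otimes \mathds{1}_{\rm O_{1}I_{2}})$, where the right-hand side uses the relabelling $\rm E_{1} \to \rm I_{1}$. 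Defining $K_{\rm I_{1}O_{1}I_{2}} := (\sigma_{\rm I_{1}}^{-1/2} \otimes \mathds{1})R(\sigma_{\rm I_{1}}^{-1/2} \otimes \mathds{1})$ via the Moore--Penrose pseudoinverse, and extending on the null space of $\sigma_{\rm I_{1}}$ so that $\Tr_{\rm I_{2}}K = \mathds{1}_{\rm E_{1}O_{1}}$, produces a valid channel Choi matrix $K$ with $\rho * K = R$. To realise the individual outcomes $\{T^{i}\}$ rather than just their sum $W$, I would further dilate by enlarging $\rm E_{2}$ and invoking Proposition~\ref{prop:single_copy_phys_char}: the residual information encoding $\{T^{i}\}$ relative to $R$ forms a single-copy-tester-type problem whose physical realisation yields both a dilation of $K$ into $K_{\rm E_{1}O_{1}I_{2}E_{2}}$ and a POVM $\{M^{i}_{\rm E_{2}O_{2}}\}$ with $T^{i} = \rho_{\rm I_{1}E_{1}} * K_{\rm E_{1}O_{1}I_{2}E_{2}} * (M^{i}_{\rm E_{2}O_{2}})^{T}$.

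The main obstacle I expect is the rank-deficient case where $\sigma_{\rm I_{1}}$ is not of full rank. Here the Moore--Penrose pseudoinverse construction and the extensions of $K$ and $\{M^{i}\}$ on the orthogonal complements must be chosen so as to satisfy $\Tr_{\rm I_{2}E_{2}}K = \mathds{1}_{\rm E_{1}O_{1}}$ and $\sum_{i}M^{i} = \mathds{1}_{\rm E_{2}O_{2}}$ without altering the link products $\rho * K * (M^{i})^{T}$. This consistency is ensured by the marginal constraint, which forces $R$ and each $T^{i}$ to be supported within the support of $\sigma_{\rm I_{1}} \otimes \mathds{1}$ in the $\rm I_{1}$ factor, so that any extension beyond this support is invisible to the link product with $\rho$.
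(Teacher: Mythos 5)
Your proposal is correct in outline and follows essentially the same route as the paper: the ``only if'' direction by summing over $i$ and using the POVM and channel normalisations (which the paper does not even write out, so spelling it out is fine), and the ``if'' direction by purifying $\sigma_{\rm I_1}$ and conjugating with Moore--Penrose inverses. The one place where your sketch and the paper genuinely diverge in emphasis is the dilation into $\rm E_2$, and that is precisely where the work lies. Your intermediate object $K_{\rm I_1O_1I_2} = (\sigma^{-1/2}\otimes\mathds{1})R(\sigma^{-1/2}\otimes\mathds{1})$ with trivial $\rm E_2$ is a detour: it reproduces $W$ but destroys all information needed to recover the individual $T^i$, and one cannot first fix a memoryless $\mathcal{K}$ and dilate it afterwards, since the choice of dilation is what determines which testers are realisable. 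The paper instead takes $\rm E_2 \cong \rm E_1 O_1 I_2$ and sets $K_{\rm E_1O_1I_2E_2} = (\sqrt{\sigma}^{-1}\otimes\mathds{1})\ketbra{\sqrt{R}}(\sqrt{\sigma}^{-1}\otimes\mathds{1})$, i.e.\ a rank-one Choi operator whose output keeps a full purification of $R$ in the memory; the POVM is then $M^i = (\sqrt{R}^{-1}\otimes\mathds{1}_{\rm O_2})(T^i)^T(\sqrt{R}^{-1}\otimes\mathds{1}_{\rm O_2}) + G^i$. Your appeal to Proposition~\ref{prop:single_copy_phys_char} for the pair $(\{T^i\}, R\otimes\mathds{1}_{\rm O_2})$ produces exactly this purification-plus-POVM, so the idea is right, but you still owe one consistency check that you currently gloss over: the purification of $R$ on $(\mathbf{I}\rm O_1)\rm E_2$ handed to you by that proposition must be reachable from the already-fixed state $\rho_{\rm I_1E_1}$ by a \emph{trace-preserving} map acting on $\rm E_1O_1$ only, leaving $\rm I_1$ untouched. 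This is where the second affine condition $\Tr_{\rm I_2}(R) = \sigma_{\rm I_1}\otimes\mathds{1}_{\rm O_1}$ is actually used (it guarantees $\Tr_{\rm I_2E_2}(K) = \mathds{1}_{\rm E_1O_1}$ on the support of $\sigma$, with the rank-deficient case handled by restricting $\rm E_1$ to $\textnormal{im}(\sigma)$); your write-up never invokes this condition in the ``if'' direction, which is a small but real gap to close.
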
 

It follows directly from Definition~\ref{def:unbounded_parallel} and Proposition~\ref{prop:adpative_charcterisation} that every parallel tester is at the same time an adaptive tester. The inclusion is also proper in the sense that there exist adaptive testers which are not parallel. One can explicitly provide examples of channel discrimination tasks in which adaptive strategies outperform all parallel schemes \cite{Bavaresco22JMP}. An example is provided by the discrimination of the unitary operators given by the square roots of Pauli matrices $\{\mathds{1}, \sqrt{\sigma_{x}}, \sqrt{\sigma_{y}}, \sqrt{\sigma_{z}}\}$. On the one hand, their corresponding unitary channels can be perfectly discriminated with two copies in the adaptive scenario. In Ref.~\cite{Bavaresco22JMP}, the authors make use of computer assisted proof methods to show that in parallel scenarios, the maximal success probability is upper bounded by $p_{\text{par}}\leq \frac{9571}{10000}$. Here, we present a simple proof that $p_{\text{par}}< 1$ without making use of any computational support.
\begin{Theorem} 
\label{prop:shijun}
    It is not possible to perfectly discriminate the ensemble given by the unitary operators $\{\mathds{1}, \sqrt{\sigma_{x}}, \sqrt{\sigma_{y}}, \sqrt{\sigma_{z}}\}$ using two-copy parallel testers. 
\end{Theorem}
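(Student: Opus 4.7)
The plan is to argue by contradiction, using only a subset of the orthogonality conditions implied by perfect discrimination.

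By Definition~\ref{def:unbounded_parallel}, perfect discrimination with a two-copy parallel tester requires a pure state $\ket{\psi}_{\mathbf{I}E}$ (WLOG pure by purification) on the joint system of $\mathbf{I} = \mathrm{I}_1\mathrm{I}_2$ and an auxiliary $E$, such that the four states $(U_i^{\otimes 2}\otimes \mathds{1}_E)\ket{\psi}$ are mutually orthogonal. Writing $\rho := \operatorname{Tr}_E\ketbra{\psi}$ for the reduced state on $\mathbf{I} \cong \mathbb{C}^2\otimes \mathbb{C}^2$, orthogonality is equivalent to $\operatorname{Tr}[\rho\,(U_i^\dagger U_j)^{\otimes 2}] = 0$ for every $i\neq j$. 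I would then aim for a contradiction using only the three subconstraints obtained by fixing $i = 1$:
\begin{equation}
\operatorname{Tr}\!\bigl[\rho\,\sqrt{\sigma_k}^{\otimes 2}\bigr] = 0,\qquad k\in\{x,y,z\}.
\end{equation}

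The key algebraic input is the compact spectral expression $\sqrt{\sigma_k} = P_k^+ + i\,P_k^-$, where $P_k^\pm := \tfrac{1}{2}(\mathds{1} \pm \sigma_k)$ are the projectors onto the $\pm 1$ eigenspaces of $\sigma_k$. Expanding the tensor square and separating Hermitian from anti-Hermitian parts immediately yields
\begin{equation}
\operatorname{Im}\!\bigl(\sqrt{\sigma_k}^{\otimes 2}\bigr) = P_k^+\otimes P_k^- + P_k^-\otimes P_k^+ =: \Pi_k^{\mathrm{opp}},
\end{equation}
which is the orthogonal projector onto the two-dimensional ``opposite $\sigma_k$-eigenvalue'' subspace (the $-1$-eigenspace of $\sigma_k\otimes\sigma_k$). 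Since $0$ is real, taking the imaginary part of each subconstraint gives $\operatorname{Tr}[\rho\,\Pi_k^{\mathrm{opp}}] = 0$, and positivity of $\rho$ and $\Pi_k^{\mathrm{opp}}$ then forces $\operatorname{supp}(\rho) \subset H_k := \ker\Pi_k^{\mathrm{opp}}$ for each $k\in\{x,y,z\}$.

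To close, a short Bell-basis calculation gives $H_x = \operatorname{span}\{\ket{\phi^+},\ket{\psi^+}\}$, $H_y = \operatorname{span}\{\ket{\phi^-},\ket{\psi^+}\}$, and $H_z = \operatorname{span}\{\ket{\phi^+},\ket{\phi^-}\}$. Their pairwise intersections are each one-dimensional, but their triple intersection is $\{0\}$; hence $\rho = 0$, contradicting $\operatorname{Tr}\rho = 1$. I expect the main technical step to be the identification of $\operatorname{Im}(\sqrt{\sigma_k}^{\otimes 2})$ with the opposite-eigenvalue projector $\Pi_k^{\mathrm{opp}}$; once that identity is in place, both the positivity argument and the triple-intersection check are short and elementary, and no computer-assisted reasoning is needed.
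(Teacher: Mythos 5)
Your proof is correct, and it takes a genuinely different route from the paper's. Both arguments use only the three orthogonality conditions that pair the identity with each square root, but the paper keeps the purified input $\ket{\psi}=\sum\alpha_{l,j,k}\ket{l,j,k}$ (environment included) and processes $\braket{\psi}{\psi_z}=0$, $\braket{\psi}{\psi_x}=0$, $\braket{\psi}{\psi_y}=0$ sequentially, each step annihilating or identifying coefficients until $\ket{\psi}=0$. You instead eliminate the environment at the outset by passing to the two-qubit marginal $\rho=\Tr_{\rm E}\ketbra{\psi}$, so that orthogonality reads $\Tr[\rho\,\sqrt{\sigma_k}^{\otimes 2}]=0$, and then use only the anti-Hermitian part of each condition: via $\sqrt{\sigma_k}=P_k^{+}+iP_k^{-}$ this is exactly the projector onto the $-1$-eigenspace of $\sigma_k\otimes\sigma_k$, and positivity forces $\operatorname{supp}(\rho)$ into the triple intersection $H_x\cap H_y\cap H_z=\{0\}$ of the $+1$-eigenspaces, which I have checked in the Bell basis exactly as you state. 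What this buys: it makes transparent that the auxiliary system is irrelevant beyond letting $\rho$ be an arbitrary two-qubit state; it shows that the real parts of the constraints (which the paper also computes) are not needed; and it replaces coefficient bookkeeping with a coordinate-free support argument. One convention to flag: your identity $\operatorname{Im}(\sqrt{\sigma_k}^{\otimes 2})=\Pi_k^{\mathrm{opp}}$ assumes the principal branch (eigenvalues $1$ and $i$); any other admissible choice of signs on the two eigenspaces only flips the sign of $\Pi_k^{\mathrm{opp}}$ and a global phase is irrelevant at the channel level, so the argument is insensitive to this (note the paper's displayed matrix for $\sqrt{\sigma_x}$ actually squares to $-\sigma_x$, an apparent sign typo that affects neither proof). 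Both arguments are elementary and computer-free; yours is the more structural of the two.
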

The proof of Theorem~\ref{prop:shijun} is presented in Appendix \ref{app:square_root_impossible}.
\section{Multi-copy channel discrimination with memory constraints and the role of classical memory}
\label{sec:multi_memory_const}
In analogy to the single-copy scenario, it is also important to consider the case of limited quantum memory for schemes involving multiple copies of the unknown channel. We will see that the role of quantum memory has to be analysed with special care in the investigation of adaptive strategies. The characterisation of tester elements depends on whether classical communication between consecutive measurements is permitted or not.
\subsection{Parallel schemes with memory constraints}
The similarity between parallel testers and single-copy testers carries over to the memory bounded scenario so that we define the parallel testers with limited quantum memory as follows.  
\begin{definition}[Memory-$d_{E}$ two-copy parallel tester]
    A collection of positive operators $\{T_{\bf{IO}}^{i}\}_{i=1}^{N} \subset \mathcal{L}(\mathcal{H}_{\rm I_{1}} \otimes\mathcal{H}_{\rm I_{2}} \otimes \mathcal{H}_{\rm O_{1}} \otimes \mathcal{H}_{\rm O_{2}})$ is a memory-$d_{\rm E}$ two-copy parallel tester if there is a quantum system $\rm{E}$ of dimension $d_{\rm E}$ such that 
    \begin{align}
        &T_{\bf{IO}}^{i} = \rho_{\rm \mathbf{I}E} * (M_{\rm E\mathbf{O}}^{i})^{T}  
    \end{align}
    where $\rho_{\rm \mathbf{I} \rm{E}}$ is s state and $\{M_{\rm \rm{E}\mathbf{O}}^{i}\}_{i=1}^{N}$ is a measurement. 
    \begin{center}
        \resizebox{0.35\textwidth}{!}{\begin{tikzpicture}[x=1cm, y=1cm]
\clip (-1,-.2) rectangle (4,2.4);

\filldraw [color=blue!60, fill=blue!5, very thick] (0,2.2) arc [start angle=90, end angle=270, x radius=.75, y radius=1.1]
node [pos=.5,xshift=0.1cm, right, black] {\Large{$\rho$}};
\draw[very thick,color=blue!60](0,0) -- (0,2.2) ;

\draw[->,thick](0,.3) --(1,.3) node [pos=.5, below]{$\rm{I_2}$};
\draw[->,thick](0,1.1)--(1,1.1) node [pos=.5, above]{$\rm{I_1}$};
\draw[->,thick](0,1.9)--(3,1.9) node [pos=.5, above]{$\rm{E} \cong \mathbb{C}^{d_{\rm{E}}}$};

\filldraw [color=red!60, fill=red!5, very thick]
(1,0) rectangle (2,.6) node [pos=0.5,black] {$\mathcal{C}$};
\filldraw [color=red!60, fill=red!5, very thick] 
(1,.8) rectangle (2,1.4) node [pos=0.5,black] {$\mathcal{C}$};

\draw[->,thick](2,.3) --(3,.3) node [pos=.5, below]{$\rm{O_2}$};
\draw[->,thick](2,1.1)--(3,1.1) node [pos=.5, above]{$\rm{O_1}$};

\filldraw [color=blue!60, fill=blue!5, very thick] (3,0) arc [start angle=-90, end angle=90, x radius=.75, y radius=1.1];
\draw[very thick,color=blue!60]
(3,0) -- node [right=0.01, black] {$M^{i}$}(3,2.2) ;

\end{tikzpicture}}
    \end{center}
\end{definition}
Notably, this definition is completely analogous to the one of single-copy testers and there are no new effects that would have to be discussed.

In the parallel scheme, it is known \cite{Taranto2024characterising} that classical communication can be modelled by convex combination of testers which has no practical effect due to the linearity of the optimisation objective \eqref{eq:multi_suc_prob}.
On the other hand, imposing memory restrictions in adaptive strategies is more subtle, as we will explore in the following. The reason for this is that the possibility of classical communication within the respective channel application has to be considered.  

\subsection{Adaptive schemes with memory constraints}
We first consider the case in which the environment systems $\rm{E}_1$ and $\rm{E}_2$ are bounded and where no additional classical memory is accessible within the test. Such an assumption has been made in Ref.~\cite{Vieira2024} for the study of temporal correlations in open system dynamics. We define the corresponding testers as follows.
\begin{definition}[Memory-($d_{\rm E_1},d_{\rm E_{2}})$ two-copy \blue{adaptive tester without additional classical memory}]
\label{def:memory_adaptive_tester_no_cc}
    A collection of positive operators $\{T_{\bf{IO}}^{i}\}_{i=1}^{N} \subset \mathcal{L}(\mathcal{H}_{\rm I_{1}} \otimes\mathcal{H}_{\rm O_{1}} \otimes \mathcal{H}_{\rm I_{2}} \otimes \mathcal{H}_{\rm O_{2}})$ is a memory-$(d_{\rm E_1},d_{\rm E_{2}})$ two-copy adaptive tester without additional classical memory if there are quantum systems $\rm{E}_{1}$ and $\rm{E}_{2}$ of dimensions $d_{\rm E_{1}}$ and $d_{\rm E_{2}}$ such that 
    \begin{align}
    \label{eq:ad_no_cc}
        &T_{\bf{IO}}^{i} = \rho_{\rm I_{1}E_{1}} * K_{\rm E_{1}O_{1}I_{2}E_{2}} * (M_{\rm E_{2} O_{2}}^{i})^{T} 
    \end{align}
    where $\rho_{\rm I_{1}E_{1}}$ is a state, $K_{\rm E_{1}O_{1}I_{2}E_{2}}$ is the Choi-matrix of a quantum channel $\mathcal{K}_{\rm E_{1}O_{1} \rightarrow I_{2}E_{2}}$ and $\{M_{\rm E_{2} O_{2}}^{i}\}_{i=1}^{N}$ is a measurement. The corresponding circuit is:
    \begin{center}
        \resizebox{0.45\textwidth}{!}{\begin{tikzpicture}[x=1cm,y=1cm]
\clip (-0.9,-.5) rectangle (7.7,2.5);

\filldraw [color=blue!60, fill=blue!5, very thick] 
(0,1.8) arc [start angle=90, end angle=270, x radius=.8, y radius=.9] 
node [pos=0.5,xshift=.1cm,right,black] {\Large{$\rho$}};
\draw[very thick,color=blue!60](0,0) -- (0,1.8) ;

\draw[->,thick](0,1.4) --(2.8,1.4) node [pos=.5, above]{$\rm{E_1} \cong \mathbb{C}^{d_{\rm{E_1}}}$};
\draw[->,thick](0,.4)--(1,.4) node [pos=.5, below]{$I_1$};

\filldraw [color=red!60, fill=red!5, very thick] 
(1,0) rectangle (1.8,.8) node [pos=0.5,black] {$\mathcal{C}$};

\draw[->,thick](1.8,.4) --(2.8,.4) node [pos=.5, below]{$\rm{O_1}$};

\filldraw [color=blue!60, fill=blue!5, very thick] 
(2.8,0) rectangle (3.8,1.8) node [pos=0.5,black] {$\mathcal{K}$};

\draw[->,thick](3.8,1.4) --(6.6,1.4) node [pos=.5, above]{$\rm{E_2} \cong \mathbb{C}^{d_{\rm{E_2}}}$};
\draw[->,thick](3.8,.4)--(4.8,.4) node [pos=.5, below]{$\rm{I_2}$};

\filldraw [color=red!60, fill=red!5, very thick] 
(4.8,0) rectangle (5.6,.8) node [pos=0.5,black] {$\mathcal{C}$};

\draw[->,thick](5.6,.4) --(6.6,.4) node [pos=.5, below]{$\rm{O_2}$};

\filldraw [color=blue!60, fill=blue!5, very thick] 
(6.6,0) arc [start angle=-90, end angle=90, x radius=.8, y radius=.9];
\draw[very thick,color=blue!60]
(6.6,1.8) -- node [right, black] {$M^{i}$}(6.6,0) ;

\end{tikzpicture}}
    \end{center}
\end{definition}
Quantum memory constraints without possible replacement by classical communication are a strong restriction. Notice that the tester definition in Eq.~\eqref{eq:ad_no_cc} implies that the action of the measurement $\{M_{\rm EO_{2}}^{i}\}_{i}$ on the system $\rm{E}\rm{O}_2$ is completely rigid and can not change due to any event within the test. 
We consequently obtain the following general bound.
\begin{Theorem}
\label{prop:adpt_gen_bound}
    Let $\{\frac{1}{N}, \mathcal{C}^{i}_{\rm I\rightarrow O}\}_{i=1}^{N}$ be a uniform channel ensemble. The optimal success probability $p$ for discriminating this ensemble using memory-($d_{\rm E_1},d_{\rm E_{2}})$ two-copy adaptive testers without additional classical memory is bounded as 
    \begin{equation}
        p \leq \min \left\{\frac{d_{\rm O} d_{\rm E_{2}}}{N}, 1\right\}.
    \end{equation}
\end{Theorem}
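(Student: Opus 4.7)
The plan is to express the success probability in terms of the quantum state produced just before the final measurement, and then bound each term with a simple Hölder-type inequality on Schatten norms.

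Concretely, I would unfold Definition~\ref{def:memory_adaptive_tester_no_cc} operationally. If the drawn channel is $\mathcal{C}^i$, then preparing $\rho_{\rm I_1 E_1}$, applying $\mathcal{C}^i$ on $\rm I_1$, then the channel $\mathcal{K}_{\rm E_1 O_1 \to I_2 E_2}$, and finally a second copy of $\mathcal{C}^i$ on $\rm I_2$, produces a normalized quantum state $\tau^i_{\rm E_2 O_2}$ on $\rm E_2 O_2$ (normalization follows from trace preservation of $\mathcal{C}^i$ and $\mathcal{K}$). By the operational meaning of the link product recalled after Definition~\ref{def:link_product}, the success probability \eqref{eq:multi_suc_prob} becomes
\begin{equation}
    p = \frac{1}{N} \sum_{i=1}^{N} \Tr\!\left(M^i_{\rm E_2 O_2}\, \tau^i_{\rm E_2 O_2}\right).
\end{equation}

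Since each $\tau^i$ is a density operator, $\|\tau^i\|_\infty \leq 1$, and together with $M^i \geq 0$ the matrix Hölder inequality yields
\begin{equation}
    \Tr(M^i \tau^i) \leq \|M^i\|_1\, \|\tau^i\|_\infty \leq \Tr(M^i).
\end{equation}
Summing over $i$ and using the POVM normalization $\sum_i M^i_{\rm E_2 O_2} = \mathds{1}_{\rm E_2 O_2}$ yields
\begin{equation}
    p \leq \frac{1}{N} \sum_{i=1}^{N} \Tr(M^i_{\rm E_2 O_2}) = \frac{\Tr(\mathds{1}_{\rm E_2 O_2})}{N} = \frac{d_{\rm E_2}\, d_{\rm O}}{N}.
\end{equation}
Combined with the trivial bound $p \leq 1$ (valid since $p$ is a probability), this proves the claim.

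The argument is essentially immediate; no semidefinite programming or duality is required. The only thing to verify carefully is that the link-product formulation of Definition~\ref{def:memory_adaptive_tester_no_cc} really matches the sequential physical composition of $\rho$, two uses of $\mathcal{C}^i$, $\mathcal{K}$, and the measurement, so that $\tau^i$ is indeed a bona fide normalized density operator. Note in particular that the auxiliary dimension $d_{\rm E_1}$ does not appear in the bound: the discriminating power is limited solely by the size of the Hilbert space on which the final POVM acts, hence only $d_{\rm E_2}$ and $d_{\rm O}$ enter.
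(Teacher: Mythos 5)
Your proof is correct and follows essentially the same route as the paper's: for fixed $\rho$ and $\mathcal{K}$ you reduce the task to discriminating a uniform ensemble of states on $\rm E_2O_2$, bound each term $\Tr(M^i\tau^i)$ by $\Tr(M^i)$, and sum using the POVM completeness relation to get $d_{\rm O}d_{\rm E_2}/N$. The only (immaterial) difference is that you obtain the per-term bound via the matrix H\"older inequality with $\|\tau^i\|_\infty\le 1$, whereas the paper uses Cauchy--Schwarz together with $\Tr(A^2)\le(\Tr A)^2$.
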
 
\begin{proof}
    The bound follows from the fact that for any fixed choice of $\rho_{\rm I_{1}E}$ and $\mathcal{K}_{\rm E_{1}O_{1} \rightarrow I_{2}E_{2}}$, finding the optimal measurement $\{M_{\rm E_{2}O_{2}}^{i}\}_{i=1}^{N}$ is a state discrimination task of a uniform ensemble of states $\{\widetilde{\rho}_{\rm E_2 O}^{i}\}_{i=1}^{N}$. For every measurement choice $M_{\rm E_2 O}^{i}$,  one then has 
    \begin{align}
        &\sum_{i=1}^{N} \Tr(\widetilde{\rho}^i M^i) \leq  \sum_{i=1}^{N} \sqrt{\Tr((\widetilde{\rho}^{i})^2) \Tr({(M^i)}^2)} \\
        &\leq \sum_{i=1}^{N} \Tr(M^i) = d_{\rm O} d_{\rm E_{2}}
    \end{align}
    where the Cauchy-Schwarz inequality and the inequality $\Tr(A^2)\leq \Tr(A)^2$ for positive operators $A$ have been used. This gives rise to the bound $p \leq \frac{d_{\rm O} d_{\rm E_{2}}}{N}$ for the uniform state ensemble.
\end{proof}

As a direct consequence of Theorem~\ref{prop:adpt_gen_bound}, one can observe the following for the memory-less discrimination of the clock-shift operators.
\begin{observation}
\label{prop:clock-shift:no_go}
    The maximal success probability $p$ for channel discrimination of the uniform ensemble of the $d^2$ qudit clock shift unitaries $\{X_d^i Z_d^j\}_{i,j=0}^{d-1}$ with memory-less adaptive testers without classical memory is upper-bounded by $\frac{1}{d}$. 
\end{observation}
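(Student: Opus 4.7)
The plan is to observe that this statement is an immediate specialization of Theorem~\ref{prop:adpt_gen_bound} to the particular channel ensemble in question, so essentially all the work is in identifying the right parameters and substituting them into the general bound.

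First, I would fix the parameters of the problem. The clock-shift operators $\{X_d^i Z_d^j\}_{i,j=0}^{d-1}$ act on a single qudit, so the input and output systems $\rm I$ and $\rm O$ both have dimension $d_{\rm I} = d_{\rm O} = d$. The ensemble contains $N = d^2$ channels and is assumed uniform, placing us exactly in the setting of Theorem~\ref{prop:adpt_gen_bound}. The hypothesis ``memory-less adaptive testers without classical memory'' forces the two auxiliary systems $\rm E_1$ and $\rm E_2$ in Definition~\ref{def:memory_adaptive_tester_no_cc} to be trivial, i.e., $d_{\rm E_1} = d_{\rm E_2} = 1$.

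Second, I would plug these values into Theorem~\ref{prop:adpt_gen_bound}, which yields
\begin{equation}
    p \leq \min\left\{\frac{d_{\rm O} \, d_{\rm E_2}}{N},\, 1\right\} = \min\left\{\frac{d \cdot 1}{d^2},\, 1\right\} = \frac{1}{d},
\end{equation}
as claimed. Since Theorem~\ref{prop:adpt_gen_bound} has already been established just above, there is no real obstacle here; the only subtlety is to verify that the clock-shift ensemble indeed satisfies the uniformity assumption of that theorem and to make explicit that ``memory-less'' in the adaptive two-copy context sets both internal memory dimensions to one. The argument is therefore a one-line corollary rather than an independent proof.
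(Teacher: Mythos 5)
Your proposal is correct and matches the paper exactly: the paper presents Remark~\ref{prop:clock-shift:no_go} as a direct consequence of Theorem~\ref{prop:adpt_gen_bound}, obtained by the same substitution $N = d^2$, $d_{\rm O} = d$, $d_{\rm E_2} = 1$ that you carry out. Nothing further is needed.
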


In particular, the optimal success probability decays to zero for $d \rightarrow \infty$. On the other hand, in the next section in Theorem~\ref{prop:perfect_clock_shift_discr}, we will see that a perfect discrimination is possible if classical memory is permitted and quantum memory remains inaccessible at the same time. Hence, classical memory can greatly enhance the performance in some channel discrimination tasks.

\subsection{The role of classical memory}
From a practical perspective, the memory restriction in Definition~\ref{def:memory_adaptive_tester_no_cc} in the adaptive setting of the preceding section is arguably too drastic. It might well be technically possible to feed-forward the first measurement result to the second measurement setting during the test, independent of whether entanglement is present in the state at some point or not. For this reason, we introduce the notion of adaptive testers with classical memory.

\subsubsection{Adaptive schemes with classical communication}
Technically, the possibility of classical communication in the setting of bounded quantum memory can be included by replacing the transformation channel $\mathcal{K}_{\rm E_{1}O_{1} \rightarrow I_{2}E_{2}}$ between the channel applications by a quantum instrument $\{\mathcal{K}_{\rm E_{1}O_{1} \rightarrow I_{2}E_{2}}^{j}\}_{j=1}^{L}$ whose classical outcome $j$ controls the final measurement $\{M^{i|j}_{\rm E_{2} O_{2}}\}_{i=1}^{N}$ where $j$ labels the measurement setting and $i$ labels the outcome. Similar constructions have been considered in Ref.~\cite{Nakahira21} in the context of process discrimination in general restricted scenarios. The precise definition that we assign to classical memory in adaptive schemes is the following. 
\begin{definition}[Memory-$(d_{\rm E_{1}},d_{\rm E_{2}})$ two-copy \blue{adaptive tester with classical memory}]
\label{def:adpt_tester_with_classical}
    A collection of positive operators $\{T_{\bf{IO}}^{i}\}_{i=1}^{N} \subset \mathcal{L}(\mathcal{H}_{\rm I_{1}} \otimes\mathcal{H}_{\rm O_{1}} \otimes \mathcal{H}_{\rm I_{2}} \otimes \mathcal{H}_{\rm O_{2}})$ is a memory-$(d_{\rm E_1},d_{\rm E_{2}})$ two-copy adaptive tester admitting classical memory if there are quantum systems $\rm{E}_{1}$ and $\rm{E}_{2}$ of dimensions $d_{\rm E_{1}}$ and $d_{\rm E_{2}}$ such that 
    \begin{align}
        T_{\bf{IO}}^{i} = \sum_{j=1}^{L} \rho_{\rm I_{1}E_{1}} * K_{\rm E_{1}O_{1}I_{2}E_{2}}^{j}* {(M_{\rm E_{2} O_{2}}^{i|j})}^{T} 
    \end{align}
    where $\rho_{\rm I_{1}E_{1}}$ is a state, $\{K^{j}_{\rm E_{1}O_{1}I_{2}E_{2}}\}_{j=1}^{L}$ are the Choi-matrices of elements of a quantum instrument $\{\mathcal{K}^{j}_{\rm E_{1}O_{1} \rightarrow I_{2}E_{2}}\}_{j=1}^{L}$ of size $L \in \mathbb{N}$ and $\{M_{\rm E_{2} O_{2}}^{i|j}\}_{i=1}^{N}$ are measurements for all $j$.
    \begin{center}
        \resizebox{0.45\textwidth}{!}{\begin{tikzpicture}[x=1cm,y=1cm]
\clip (-1.0,-.5) rectangle (7.7,2.5);

\filldraw [color=blue!60, fill=blue!5, very thick] 
(0,1.8) arc [start angle=90, end angle=270, x radius=.9, y radius=.9] 
node [pos=0.5,xshift=.1cm,right,black] {\Large{$\rho$}};
\draw[very thick,color=blue!60](0,0) -- (0,1.8) ;

\draw[->,thick](0,1.4) --(2.8,1.4) node [pos=.5, above]{$\rm{E_1} \cong \mathbb{C}^{d_{\rm{E_1}}}$};
\draw[->,thick](0,.4)--(1,.4) node [pos=.5, below]{$\rm{I_1}$};

\filldraw [color=red!60, fill=red!5, very thick] 
(1,0) rectangle (1.8,.8) node [pos=0.5,black] {$\mathcal{C}$};

\draw[->,thick](1.8,.4) --(2.8,.4) node [pos=.5, below]{$\rm{O_1}$};

\filldraw [color=blue!60, fill=blue!5, very thick] 
(2.8,0) rectangle (3.8,1.8) node [pos=0.5,black] {$\mathcal{K}^{j}$};

\draw [->, dashed, thick] (3.8,1.8) to [bend left=30] (6.6,1.8);

\draw[->,thick](3.8,1.4) --(6.6,1.4) node [pos=.5, above]{$\rm{E_2} \cong \mathbb{C}^{d_{\rm{E_2}}}$};
\draw[->,thick](3.8,.4)--(4.8,.4) node [pos=.5, below]{$\rm{I_2}$};

\filldraw [color=red!60, fill=red!5, very thick] 
(4.8,0) rectangle (5.6,.8) node [pos=0.5,black] {$\mathcal{C}$};

\draw[->,thick](5.6,.4) --(6.6,.4) node [pos=.5, below]{$\rm{O_2}$};

\filldraw [color=blue!60, fill=blue!5, very thick] 
(6.6,0) arc [start angle=-90, end angle=90, x radius=.9, y radius=.9];
\draw[very thick,color=blue!60]
(6.6,1.8) -- node [right, black, xshift=-0.1cm] {\small{$M^{i|j}$}} (6.6,0);

\end{tikzpicture}}
    \end{center}
    The amount of classical memory that is needed to implement such an adaptive tester can be quantified by $\log_2(L)$ where $L$ is the number of non-vanishing instrument elements. 
\end{definition} 

In the last subsection, we saw in Remark \ref{prop:clock-shift:no_go} how the discrimination success decayed for clock-shift operators and rising dimensions in the completely memory-less scenario. On the other extreme, by allowing for classical memory, a perfect discrimination is possible in all dimensions.

\begin{Theorem}
\label{prop:perfect_clock_shift_discr}
    The clock-shift-unitaries $\{X_d^i Z_d^j\}_{i,j=0}^{d-1}$, see Definition~\ref{def:important_channels}, can be perfectly discriminated with a two-copy adaptive tester without quantum memory but with classical memory.
\end{Theorem}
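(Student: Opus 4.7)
The plan is to construct an explicit memory-$(1,1)$ adaptive tester with classical memory that achieves success probability one. The strategy is the standard Bennett--Wiesner trick rewritten in the tester formalism: use the first call to learn the shift index $i$ in the computational basis, and the second call to learn the phase index $j$ in the conjugate (Fourier) basis, with a classical register carrying $i$ across the two calls so that the final tester label is $(i,j)$.

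Concretely, I would set $\rho_{I_1} = \ketbra{0}{0}$ and note that $Z_d^j\ket{0}=\ket{0}$, so the first call outputs $X_d^i\ket{0}=\ket{i}$ regardless of $j$. Between the two calls I would use the instrument $\mathcal{K}^{k}_{O_1\to I_2}(\rho)=\bra{k}\rho\ket{k}\,\ketbra{+}{+}$ with $k\in\{0,\dots,d-1\}$ and $\ket{+}:=\tfrac{1}{\sqrt{d}}\sum_{l}\ket{l}$. This collection is a valid quantum instrument whose non-destructive use requires $L=d$ classical outcomes and no quantum memory. After applying the second call on $\ket{+}$, a direct computation gives
\begin{equation}
X_d^i Z_d^j \ket{+} = \omega^{-ij}\,\ket{\psi_j}, \qquad \ket{\psi_j}:=\tfrac{1}{\sqrt{d}}\sum_{l}\omega^{lj}\ket{l},
\end{equation}
and the vectors $\{\ket{\psi_j}\}_{j=0}^{d-1}$ form the orthonormal Fourier basis. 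Thus a basis measurement on $O_2$ deterministically reveals $j$, and the overall global phase $\omega^{-ij}$ is irrelevant.

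To package this as a tester of the form in Definition~\ref{def:adpt_tester_with_classical}, I would define the classically controlled measurements
\begin{equation}
M^{(a,b)\mid k}_{O_2} := \delta_{a,k}\,\ketbra{\psi_b}{\psi_b},
\end{equation}
for each classical message $k$; these are valid POVMs because $\sum_{a,b}M^{(a,b)\mid k}=\sum_{b}\ketbra{\psi_b}{\psi_b}=\mathds{1}_{O_2}$. Plugging $\rho_{I_1}$, $\{\mathcal{K}^k\}$, and $\{M^{(a,b)\mid k}\}$ into the link-product expression and evaluating $\Tr\!\bigl[T^{(a,b)}_{\mathbf{IO}}\,(C^{(a,b)}\otimes C^{(a,b)})_{\mathbf{IO}}\bigr]$ yields $1$ for every $(a,b)$ by the two deterministic steps described above, establishing perfect discrimination.

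There is no real obstacle; the only point requiring attention is the bookkeeping for the formal definition, namely verifying that (i) the $\mathcal{K}^k$'s add to a trace-preserving channel on a trivial environment, and (ii) the final POVM family $\{M^{(a,b)\mid k}\}$ is normalised for every $k$ separately. Both are immediate from the construction, and the amount of classical memory used is the expected $\log_2(d)$ bits, matching what is required to transmit the first-call outcome $i\in\{0,\dots,d-1\}$ to the relabelling of the second-call measurement. This is consistent with, and saturates, the gap with Remark~\ref{prop:clock-shift:no_go}, where the purely memory-less adaptive bound was only $1/d$.
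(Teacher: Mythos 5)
Your proposal is correct and follows essentially the same strategy as the paper's proof: prepare $\ketbra{0}{0}$ and measure the first output in the computational basis to learn $i$ (since $Z_d^j$ fixes $\ket{0}$), then prepare the uniform superposition and measure the second output in the Fourier basis to learn $j$, feeding $i$ forward classically. The only difference is that you additionally carry out the (correct) bookkeeping to cast the strategy explicitly as a memory-$(1,1)$ adaptive tester with an entanglement-breaking instrument of $L=d$ outcomes, which the paper leaves implicit.
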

\begin{proof}
    We explicitly explain the strategy. First, the state $\rho= \ketbra{0}{0}$ is prepared. Since $Z_d^j$ leaves $\rho$ invariant for all $j$, the unitary channel with the unitary $X_d^i Z_d^j$ transforms $\rho$ into the state $\ketbra{i}{i}$ and $i$ is deterministically measured in a computational basis measurement. Let now $\ket{\psi_{i}} := \sum_{j=0}^{d-1} \omega^{i\cdot j} \ket{j}$ be the Fourier basis. Independent of the first measurement outcome, in the second run one prepares $\Tilde{\rho} = \ketbra{\psi_0}{\psi_0}$ so that $X_d^i Z_d^j \ket{\psi_0} = \frac{\omega^{-i}}{\sqrt{d}} \ket{\psi_j}$ which deterministically leads to the outcome $j$ in a Fourier basis measurement. By classically memorising the first measurement result $i$, all the $d^2$ clock-shift operators can hence be distinguished. 
\end{proof}
The statement of Theorem~\ref{prop:perfect_clock_shift_discr}
demonstrates how subtle the advantage of memory usage in quantum information processing, especially for the problem of channel discrimination, is. \new{We see that the amount of classical memory, quantified by $L$, may play an important role for the optimal discrimination success. In view of Theorem \ref{prop:adpt_gen_bound}, there are instances in which a sufficient amount of classical memory can compensate a lack of quantum memory quantified by the environment dimensions $d_{\rm E_{1}}$ and $d_{\rm E_{2}}$}. This demands for finding methods to efficiently characterise those memory restrictions.

Similar to single-copy testers, memory restricted adaptive testers admit a description in terms of constrained separable states. 
\subsection{Optimisation over memory-constrained adaptive testers as constrained separability problem}
Here, we show how the computation of maximal success probabilities in a channel discrimination task, using adaptive and memory restricted testers, can be cast as a constrained separability problem. First, with an argument similar to Remark \ref{obs:memory-less-reduction}, one can restrict to the quantum memory-less scenario, i.e., $d_{\rm E_1} = d_{\rm E_2} = 1$ with a suitable redefinition of the channel ensemble. The tester elements $T_{\bf{IO}}^{i}$ from Definition \ref{def:adpt_tester_with_classical} then take the form
\begin{equation}
    T_{\bf{IO}}^{i} = \sum_{j=1}^{L} \rho_{\rm I_1} \otimes K_{\rm O_1 I_2}^{j} \otimes M_{\rm O_2}^{i|j}
\end{equation}
with $i \in \{1,\dots,N\}$. With a permutation $P$ of the systems $\rm{N,I_1,I_2,O_1}$ and $\rm{O}_2$, we can write the corresponding tester $\underline{T}_{\rm N\bf{IO}} := \bigoplus_{i=1}^{N} T_{\rm \bf{IO}}^{i}$ as
\begin{equation}
    \underline{T}_{\rm N\bf{IO}} = P \left(\Pi\left[ \, \underline{\underline{M}}_{\rm LNO_{2}} \otimes  \underline{K}_{\rm L'O_{1}I_{2}} \otimes \rho_{\rm I_{1}} \right] \,  \right)P^{\dagger}
\end{equation}
where $\underline{\underline{M}}_{\rm LNO_{2}} := \bigoplus_{j=1}^{L} \bigoplus_{i=1}^{N} M_{\rm O_2}^{i|j}$, $\underline{K}_{\rm L'O_1 I_2} := \bigoplus_{j=1}^{L} K_{\rm O_1 I_2}^{j}$ and $\Pi$ is the projection
\begin{equation}
\label{eq:con_sep_adpt_projection}
    \Pi[\cdot] := \sum_{j=1}^{L}\Tr_{\rm LL'}[(\ketbra{jj}{jj}_{\rm LL'} \otimes \mathds{1}_{\rm N \mathbf{IO}})\; (\cdot)]
\end{equation}
that is needed for discarding all terms of the form $M_{\rm{O}_2}^{i|j} \otimes K_{\rm O_1 I_2}^{k} \otimes \rho_{\rm I_{1}}$ where $k \neq j$. 

Hence, the operator $\underline{T}_{\rm N\mathbf{IO}}$ can be seen as a tripartite fully separable state whose factors obey additional affine-linear constraints. Explicitly, these affine constraints are 
\begin{align}
     &\Tr_{\rm L'I_{2}}(\underline{K}_{\rm L' O_{1}I_{2}}) = \mathds{1}_{\rm O_1} \\
     &\Tr_{\rm LN}[(\ketbra{j}{j}_{\rm L} \otimes \mathds{1}_{\rm NO_{2}}) \underline{\underline{M}}_{\rm LNO_{2}}] = \mathds{1}_{\rm O_2} \; \forall j.
\end{align}
The case of no classical memory then simply corresponds to setting $L=1$. 

\section{Classically adaptive schemes}
\label{sec:class_adp}
Remarkably, the instrument $\{\mathcal{K}^j\}_{j}$ used in the proof of Theorem~\ref{prop:perfect_clock_shift_discr} is rather special in the sense that it is entanglement-breaking. This motivates a further restriction of adaptive tester to classically adaptive testers.
\subsection{Classically adaptive testers}
In contrast to general adaptive schemes, classically adaptive schemes describe the situation in which there is no quantum communication by a transfer of quantum states between the uses of the channel. The strategy can then be described by forcing the general instrument $\{\mathcal{K}^j_{\rm A \rightarrow B}\}_{j=1}^{L}$ to be entanglement-breaking or equivalently to be a measure-and-prepare instrument \cite{Horodecki2003}. Such instruments have elements of the form $\mathcal{K}_{\rm A \rightarrow B}^j(\cdot) = \Tr(F_{\rm A}^{j} \;\; \cdot) \sigma_{\rm B}^{j}$ where $\{F_{\rm A}^{j}\}_{j=1}^{L}$ is a measurement and $\sigma_{\rm B}^{j}$ are states. In terms of the Choi isomorphism, this implies that all Choi matrices $K_{\rm AB}^{j}$ are separable. 

Classically adaptive testers for which the instrument $\{\mathcal{K}_{\rm E_1 O_1 \rightarrow I_2 E_2}^j\}_{j=1}^{L}$ is forced to be entanglement breaking are, using single-copy testers, defined as follows. First, a single-copy tester is applied to the first copy of the channel. After that, a second tester which may depend on the first outcome is chosen to be applied on the second copy.
Such types of quantum testers also appeared in Ref.~\cite{Nakahira21} in the context of discrimination of quantum process in restricted scenarios.
\begin{definition}[Two-copy \blue{classically adaptive tester}]
\label{def:class_adp}
     A collection of positive operators $\{T_{\bf{IO}}^{i}\}_{i=1}^{N} \subset \mathcal{L}(\mathcal{H}_{\rm I_{1}} \otimes\mathcal{H}_{\rm O_{1}} \otimes \mathcal{H}_{\rm I_{2}} \otimes \mathcal{H}_{\rm O_{2}})$ is called a two-copy classically adaptive tester, if
\begin{align}\label{eq:two_copy_classically_adaptive_tester_def}
   T_{\bf{IO}}^{i} &= \sum_{j=1}^{L} R_{\rm I_{1}O_{1}}^{j} \otimes S_{\rm I_{2}O_{2}}^{i|j}
\end{align}
where $\{R_{\rm I_{1}O_{1}}^{i}\}_{i=1}^{N}$ and $\{S_{\rm I_{2}O_{2}}^{i|j}\}_{i=1}^{N}$ are single-copy testers for all $j$.  
\begin{center}
    \resizebox{0.45\textwidth}{!}
    {\begin{tikzpicture}[x=1cm,y=1cm]
\clip (-1.0,-.5) rectangle (8.8,2.5);

\filldraw [color=blue!60, fill=blue!5, very thick] 
(0,1.8) arc [start angle=90, end angle=270, x radius=.9, y radius=.9] 
node [pos=0.5,xshift=0.1cm,right,black] {\Large{$\rho$}};
\draw[very thick,color=blue!60](0,0) -- (0,1.8) ;

\draw[->,thick](0,1.4) --(2.8,1.4) node [pos=.5, above]{$\rm{E_1} \cong \mathbb{C}^{d_{\rm{E_1}}}$};
\draw[->,thick](0,.4)--(1,.4) node [pos=.5, below]{$\rm{I_1}$};

\filldraw [color=red!60, fill=red!5, very thick] 
(1,0) rectangle (1.8,.8) node [pos=0.5,black] {$\mathcal{C}$};

\draw[->,thick](1.8,.4) --(2.8,.4) node [pos=.5, below]{$\rm{O_1}$};

\filldraw [color=blue!60, fill=blue!5, very thick] 
(2.8,0) arc [start angle=-90, end angle=90, x radius=.75, y radius=.9];
\draw[very thick,color=blue!60]
(2.8,1.8) -- node [right,xshift=-0.05cm, black] {$F^j$}(2.8,0);

\draw[->,dashed,thick](3.6,.9) --(4.2,.9) node [pos=.5, below]{$j$};
\draw [->, dashed, thick] (5.1, 2.0) to [bend left=30] (7.8,1.8);

\filldraw [color=blue!60, fill=blue!5, very thick] 
(5,1.8) arc [start angle=90, end angle=270, x radius=.8, y radius=.9] 
node [pos=0.5,xshift=0.05cm,right,black] {\Large{$\sigma^j$}};
\draw[very thick,color=blue!60](5,0) -- (5,1.8) ;

\draw[->,thick](5,1.4) --(7.8,1.4) node [pos=.5, above]{$\rm{E_2} \cong \mathbb{C}^{d_{\rm{E_2}}}$};
\draw[->,thick](5,.4)--(6,.4) node [pos=.5, below]{$\rm{I_2}$};

\filldraw [color=red!60, fill=red!5, very thick] 
(6,0) rectangle (6.8,.8) node [pos=0.5,black] {$\mathcal{C}$};

\draw[->,thick](6.8,.4) --(7.8,.4) node [pos=.5, below]{$\rm{O_2}$};

\filldraw [color=blue!60, fill=blue!5, very thick] 
(7.8,0) arc [start angle=-90, end angle=90, x radius=.9, y radius=.9];
\draw[very thick,color=blue!60]
(7.8,1.8) -- node [right,xshift=-0.1cm, black] {\small{$M^{i|j}$}}(7.8,0) ;

\draw [dashed, color=blue!80, very thick]
(2.7,-0.2) -- (2.7,2) -- (5.1,2)--(5.1,-0.2) -- cycle node [pos=0.5, above]{\Large{$\mathcal{K}^{j}$}};

\end{tikzpicture}}
\end{center}
\end{definition}
It can be checked that every classically-adaptive tester is also an adaptive tester, but the opposite statement is not true. We derived the following relation between the sets of testers, see Fig.~\ref{fig:adaptive sets} for an illustration and Appendix \ref{sec:prop_cadpt} for the proof. 
\begin{Theorem}
\label{prop:parallel_vs_cadpt}
    The set of adaptive testers contains all parallel and all classically adaptive testers. \\
    There exist testers which are parallel and not classically adaptive and vice versa, as well as testers which are parallel and classically adaptive. 
\end{Theorem}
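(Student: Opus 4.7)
The plan is to establish the two inclusions first and then the strict separations. For the inclusions, a parallel tester realised by state $\rho_{\mathbf{I}\rm E}$ and measurement $\{M^i_{\rm E\mathbf{O}}\}$ admits an adaptive realisation by absorbing $\rm I_2$ into the first memory ($\rm E_1 := I_2 E$), taking $\mathcal{K}_{\rm E_1 O_1\to I_2 E_2}$ to be the channel that routes the buffered $\rm I_2$ onto the second input wire while promoting $\rm O_1$ into $\rm E_2 := O_1 E$, and keeping $M^i$ as the final measurement on $\rm E_2 O_2 = E O_1 O_2$; a direct link-product computation matches the resulting Choi operator to the original $T^i$. For a classically adaptive tester $T^i = \sum_j R^j_{\rm I_1 O_1}\otimes S^{i|j}_{\rm I_2 O_2}$, each component tester admits a Stinespring realisation, and these can be assembled into a measure-and-prepare instrument $\{\mathcal{K}^j_{\rm E_1 O_1\to I_2 E_2}\}_j$ whose classical outcome $j$ is stored in a flag register inside $\rm E_2$ to select the final measurement $M^{i|j}$. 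Marginalising the flag yields a deterministic channel $\mathcal{K} = \sum_j \mathcal{K}^j$ and thus fits Definition~\ref{def:adaptive_tester}.

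For the strict separations, the decisive observation is that every classically adaptive element $T^i = \sum_j R^j_{\rm I_1 O_1}\otimes S^{i|j}_{\rm I_2 O_2}$ is separable across the bipartition $\rm I_1 O_1 \mid I_2 O_2$. A parallel tester with trivial environment, input state $\ket{\phi^+}_{\rm I_1 I_2}$, and a Bell-basis measurement $\{M^i_{\rm O_1 O_2}\}$ produces elements proportional to $\ketbra{\phi^+}_{\rm I_1 I_2}\otimes\ketbra{\mathrm{Bell}_i}_{\rm O_1 O_2}$, which after regrouping the systems corresponds to the pure maximally entangled state $\tfrac{1}{2}\sum_{a,b}\ket{ab}_{\rm I_1 O_1}\ket{ab}_{\rm I_2 O_2}$ across the cut --- hence parallel but not classically adaptive. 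For the converse direction, Proposition~\ref{prop:single_copy_phys_char} applied on $\mathbf{I}\to\mathbf{O}$ forces every parallel tester to obey the rigid normalisation $\sum_i T^i = \sigma_{\mathbf{I}}\otimes\mathds{1}_\mathbf{O}$, while classically adaptive testers only need $\sum_i T^i = \sum_j R^j_{\rm I_1 O_1}\otimes\sigma^j_{\rm I_2}\otimes\mathds{1}_{\rm O_2}$ with branch-dependent $\sigma^j_{\rm I_2}$. Taking $R^j = \tfrac{1}{2}\mathds{1}_{\rm I_1}\otimes\ketbra{j}_{\rm O_1}$ and $\sigma^j_{\rm I_2} = \ketbra{j}_{\rm I_2}$ for $j\in\{0,1\}$ introduces the diagonal correlation $\sum_j \ketbra{jj}_{\rm O_1 I_2}$ between $\rm O_1$ and $\rm I_2$, incompatible with any product form $\sigma_{\rm I_1 I_2}\otimes\mathds{1}_{\rm O_1 O_2}$. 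The intersection of the two classes is trivially witnessed by any fully product tester of the form $T^i = \sigma_{\rm I_1}\otimes\mathds{1}_{\rm O_1}\otimes\rho_{\rm I_2}\otimes N^i_{\rm O_2}$.

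The main technical obstacle lies in the inclusion step: carefully tracking the auxiliary systems and system permutations induced by the link product so that the constructed adaptive implementation reproduces the target tester element on the nose. Once the $\rm I_1 O_1 \mid I_2 O_2$ separability invariant and the rigid normalisation of parallel testers are isolated, the strict-separation examples reduce to elementary matrix-element checks.
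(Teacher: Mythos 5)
Your proof is correct, and for the non-trivial part --- the strict separations --- it uses exactly the same two invariants and essentially the same witnesses as the paper: entanglement of a Bell-state parallel tester across the $\rm I_1 O_1 \mid I_2 O_2$ cut rules out classical adaptivity, and a classical $\rm O_1$--$\rm I_2$ correlation in $\sum_i T^i$ violates the rigid parallel normalisation $\sigma_{\bf I}\otimes\mathds{1}_{\bf O}$; your product-form example for the intersection is a special case of the paper's criterion that the branch states $\sigma^j_{\rm I_2}$ be independent of $j$. The only genuine difference is in the inclusions: where you build explicit physical realisations (rerouting wires for the parallel case, and assembling a measure-and-prepare instrument with a flag register that is then marginalised into a deterministic $\mathcal{K}$ for the classically adaptive case), the paper instead just computes $W_{\bf IO}=\sum_i T^i_{\bf IO}$ and checks the affine conditions of Proposition~\ref{prop:adpative_charcterisation}, which is shorter because it avoids tracking auxiliary systems and link-product bookkeeping entirely. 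Your constructive route buys a concrete circuit implementation; the paper's route buys brevity by leaning on the already-established SDP characterisation of adaptive testers. Both are valid, and your separation examples would need only the minor completion of writing out the POVMs $N^i_{\rm O_2}$ explicitly to match the paper's level of detail.
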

Furthermore, we show that there exist standard channel discrimination tasks for which parallel strategies strictly outperform classically adaptive ones. In particular, when discriminating the square root of qubit Pauli channels with two copies, we could show that no classically adaptive tester can outperform the optimal parallel strategy. \new{To show this, we relax the optimisation by demanding that all tester elements $T_{\bf{IO}}^{i}$ have a positive partial transpose in the bipartition $\rm{I_1}\rm{O_1}|\rm{I_2}\rm{O_2}$, which is obviously follows from Eq.~\eqref{eq:two_copy_classically_adaptive_tester_def}.}

On the other hand, there exist standard channel discrimination tasks for which classically adaptive strategies strictly outperform parallel ones. \new{This fact may be demonstrated by a seesaw optimisation giving rise to lower bounds to the optimal classically adaptive strategy.} By computing bounds to the respective constraint separability problems, we obtained the following.
\begin{Example}
    \label{ex:ca_vs_par}
    Classically adaptive protocols outperform parallel ones in the discrimination of two copies of the uniform ensemble of an amplitude damping $\mathcal{C}_{\rm{ad}, \frac{2}{3}}$ a bit-flip channel $\mathcal{C}_{\rm{bf}, \frac{1}{3}}$, see Definition~\ref{def:important_channels}, and the identity $\textnormal{id}_{2}$. Here, the maximal success probability in the parallel case is computed as $p_{\rm{par}} = 0.80697$ while $p_{\rm{ca}} \geq 0.8118$ holds for the optimal classically adaptive schemes. \new{This lower bound has be obtained by a seesaw optimisation over the set of classically adaptive testers.}

    In return, for the discrimination of square roots of Pauli operators, one obtains $p_{\rm{par}} = 0.9571$ for the optimal parallel strategy and $p_{\rm{ca}} \leq 0.8980$ for the optimal classically adaptive strategy, \new{where the upper bound has been obtained by demanding that all tester elements have a positive partial transpose in the bipartition $\rm{I_1}\rm{O_1}|\rm{I_2}\rm{O_2}$.}
\end{Example}

 \new{The public repository \cite{gitcode} contains the script ``parallel\_vs\_class\_adp\_demonstration.jl'' that allows for a verification of the numerical results presented in Example \ref{ex:ca_vs_par}.}

This shows that there is no hierarchy between parallel and classically adaptive channel discrimination strategies, already in the consideration of standard channel discrimination tasks. For the computation of success probabilities in the case of classically adaptive testers, we again employ a formulation in terms of constrained separable states. 

\subsection{Optimisation over classically adaptive testers as constrained separability problem}
The collection of classically adaptive tester elements as in Definition~\ref{def:class_adp} can be described by a block matrix $\underline{T}_{\rm N\bf{IO}} := \bigoplus_{i=1}^{N} T_{\bf{IO}}^{i}$ 
that can be decomposed as 
\begin{equation}
    \underline{T}_{\rm N\bf{IO}} = P (\Pi [\underline{\underline{S}}_{\rm LNI_{2}O_{2}} \otimes  \underline{R}_{\rm L'I_{1}O_{1}}])P^{\dagger}
\end{equation}
with a permutation $P$ of the systems $\rm{N, I_1, O_1, I_2}$ and $\rm{O_2}$ and $\Pi$ is the same projection as in Eq.~\eqref{eq:con_sep_adpt_projection}.

The block matrices are defined as $\underline{\underline{S}}_{\rm LNI_{2}O_{2}}:=\bigoplus_{j=1}^{L} \bigoplus_{i=1}^{N} S_{\rm I_2 O_2}^{i|j}$ and $\underline{R}_{\rm L'I_{1}O_{1}}:= \bigoplus_{j=1}^{L} R_{\rm I_1 O_1}^{j}$. As $\underline{\underline{S}}_{\rm LNI_{2}O_{2}}$ and $\underline{R}_{\rm L'I_{1}O_{1}}$ are positive, $\underline{T}_{\rm N\bf{IO}}$ corresponds to a separable state. In particular, it is constrained separable with respect to the affine constraints 
\begin{align}
    &\Tr_{\rm L'}(\Omega_{\rm O_{1}}(\underline{R}_{\rm L'I_{1}O_{1}})) = 0\\
    &\Tr_{\rm LN}[(\ketbra{j}{j}_{\rm L} \otimes \mathds{1}_{\rm NI_{2}O_{2}}) \; \Omega_{\rm O_{2}}(\underline{\underline{S}}_{\rm LNI_{2}O_{2}})] = 0
\end{align}
for all $j = 1,\dots,L$ where $\Omega_{\rm O}$ is the ``trace-and-replace'' map defined as
\begin{equation}
    \Omega_{\rm O}(X_{\rm AO}) := X_{\rm AO} -  \Tr_{\rm O}(X_{\rm AO}) \otimes \frac{\mathds{1}_{\rm O}}{d_{\rm O}}.
\end{equation}

\begin{figure}
    \centering
    \includegraphics[width=\linewidth]{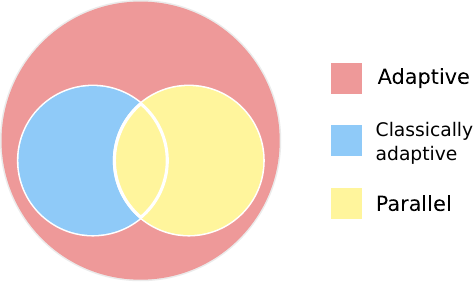}
    \caption{Venn diagram of the sets of different adaptive testers. Classically adaptive testers and parallel testers form intersecting subsets of the set of adaptive testers.} 
    \label{fig:adaptive sets}
\end{figure}

\section{Conclusion}
In this manuscript, we investigated the role of quantum memory in protocols for the task of channel discrimination. Our main contribution is the quantitative characterisation of quantum memory by constrained separability problems. In particular, we provided methods to compute optimal performances for arbitrary channel discrimination tasks in the memory restricted setting. Our analysis showed that the role of memory is subtle, especially in adaptive protocols, where a classical memory may sometimes suffice for an optimal discrimination success. Finally, we analysed classically adaptive schemes as a particular subset of the adaptive testers and explained their particular role in comparison to parallel schemes. \new{An interesting direction for future work is the employment of symmetry reductions in the SDP implementation, as discussed in Ref.~\cite{Vieira2024}, to further improve the scalability of our numerical techniques.} From a broader perspective, our results can be expected to be useful to determine the necessary memory for a quantum computation, or to design experiments to certify the dimension of a quantum memory. Interesting future topics are the study of memory limitations in deterministic quantum processes and in continuous parameter estimation tasks.

\section{Code availability}
\label{sec:code}
The source code used for our computations is available on a public repository \cite{gitcode}. \new{All the numerical values presented in Examples  \ref{ex:sep_vs_consep} and \ref{ex:ca_vs_par} and in Table \ref{tab:square_root} can be explicitly verified by execution of individual code scripts.}
\section*{Acknowledgements}

The authors would like to thank Otfried Gühne, Lucas Porto, Leonardo Vieira, and  Philip Taranto for helpful discussions.
The University of Siegen is kindly acknowledged for enabling our computation through the \texttt{OMNI} cluster. 
This work was supported by 
the Deutsche Forschungsgemeinschaft (DFG, German Research Foundation, project numbers 447948357 and 440958198), 
the Sino-German Center for Research 
Promotion (Project M-0294), 
the ERC (Consolidator Grant 683107/TempoQ), 
the German Ministry of Education 
and Research (Project QuKuK, BMBF 
Grant No. 16KIS1618K),
and the EIN~Quantum~NRW.

\clearpage

\onecolumn
\appendix

\section*{Appendix} 

\section{Realisation theorem for testers} 
\label{app:phys_impl}

In this appendix, we present a proof of Proposition \ref{prop:single_copy_phys_char} and Proposition \ref{prop:adpative_charcterisation} of the main text. An important tool for the proofs is the following identity. Let $\rm{A}$ and $\rm{B}$ be isomorphic quantum systems and let $\ket{X}_{\rm AB} = \sum_{i=1}^{d_{\rm A}} \ket{i} \otimes X \ket{i}$ where $X$ is some hermitian operator. 

Then, by definition of the link product it can be straightforwardly verified that the identity $\ketbra{X}_{\rm AB}*Y_{\rm BC}= \left(X_{\rm A}^T\otimes \mathds{1}_{\rm C}\right)\rho_{\rm AC} \left(X_{\rm A}^T\otimes \mathds{1}_{\rm C}\right)^\dagger$ holds. 

\begin{proposition*}
    Let $\{T_{\rm{IO}}^{i}\}_{i=1}^{N}$ be a collection of positive operators. Then, there exists a quantum state $\sigma_{I}$ such that $\sum_{i=1}^{N} T_{\rm{IO}}^{i} = \sigma_{\rm I} \otimes \mathds{1}_{\rm O}$ if and only if there exists a quantum state $\rho_{\rm IE}$ and a measurement $\{M_{\rm{EO}}^{i}\}_{i=1}^{N}$ such that $T_{\rm{IO}}^{i} = \rho_{\rm{IE}} * (M_{\rm{EO}}^{i})^{T}$
\end{proposition*}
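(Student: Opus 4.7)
The plan is to prove the equivalence in two directions, both by direct computation using the link product identity supplied at the start of the appendix.

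For the ``only if'' direction, assume $T_{\rm IO}^i = \rho_{\rm IE} * (M_{\rm EO}^i)^T$ for some state $\rho_{\rm IE}$ and POVM $\{M_{\rm EO}^i\}_{i=1}^N$. Summing over $i$ and using the POVM normalisation $\sum_i M_{\rm EO}^i = \mathds{1}_{\rm EO}$, and the linearity of the link product in the second argument, I obtain $\sum_i T_{\rm IO}^i = \rho_{\rm IE} * \mathds{1}_{\rm EO}$. Expanding this via Definition~\ref{def:link_product} gives $\Tr_{\rm E}(\rho_{\rm IE}^{T_{\rm E}}) \otimes \mathds{1}_{\rm O}$, and since partial transpose does not affect the partial trace over the transposed system, this equals $\Tr_{\rm E}(\rho_{\rm IE}) \otimes \mathds{1}_{\rm O} = \sigma_{\rm I} \otimes \mathds{1}_{\rm O}$ with $\sigma_{\rm I} := \Tr_{\rm E}(\rho_{\rm IE})$ a state.

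For the ``if'' direction, choose $\rm E$ isomorphic to $\rm I$ (so $d_{\rm E} = d_{\rm I}$, establishing also the dimension bound claimed in Proposition~\ref{prop:single_copy_phys_char}), and define the pure state $\rho_{\rm IE} := \ketbra{\sqrt{\sigma_{\rm I}}}$ with $\ket{\sqrt{\sigma_{\rm I}}}_{\rm IE} := \sum_{i} \ket{i}_{\rm I} \otimes \sqrt{\sigma_{\rm I}}\ket{i}_{\rm E}$. The reduced state on $\rm I$ is $\sigma_{\rm I}$. Assuming for a moment that $\sigma_{\rm I}$ is full rank, I define the candidate measurement effects by ``inverting'' the action of $\sqrt{\sigma_{\rm I}}$ on the $\rm E$ side:
\begin{equation*}
M_{\rm EO}^i := \bigl((\sqrt{\sigma_{\rm I}})^{-T}_{\rm E} \otimes \mathds{1}_{\rm O}\bigr)\, \widetilde{T}_{\rm EO}^i\, \bigl((\sqrt{\sigma_{\rm I}})^{-T}_{\rm E} \otimes \mathds{1}_{\rm O}\bigr),
\end{equation*}
where $\widetilde{T}_{\rm EO}^i$ denotes $T_{\rm IO}^i$ with system $\rm I$ relabelled as $\rm E$ (and a transpose applied so that the link-product identity reproduces $T_{\rm IO}^i$ without extra conjugations). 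Positivity of each $M_{\rm EO}^i$ is immediate from the sandwich form and $T_{\rm IO}^i\ge 0$. Normalisation $\sum_i M_{\rm EO}^i = \mathds{1}_{\rm EO}$ follows by plugging in the assumption $\sum_i T_{\rm IO}^i = \sigma_{\rm I} \otimes \mathds{1}_{\rm O}$: the sandwich collapses $\sqrt{\sigma_{\rm I}}^{-1}\sigma_{\rm I}\sqrt{\sigma_{\rm I}}^{-1} = \mathds{1}$. Finally, the claimed identity $T_{\rm IO}^i = \rho_{\rm IE} * (M_{\rm EO}^i)^T$ is a direct application of the formula $\ketbra{X}_{\rm AB} * Y_{\rm BC} = (X_{\rm A}^T \otimes \mathds{1}_{\rm C})\, Y \, (X_{\rm A}^T \otimes \mathds{1}_{\rm C})^\dagger$ stated at the start of the appendix, with $X = \sqrt{\sigma_{\rm I}}$.

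The main obstacle is that $\sigma_{\rm I}$ may not be full rank, in which case the inverse $(\sqrt{\sigma_{\rm I}})^{-1}$ used above is ill-defined. I expect to resolve this either by restricting the whole construction to $\mathrm{supp}(\sigma_{\rm I}) \subseteq \mathcal{H}_{\rm I}$ (and using the Moore--Penrose pseudo-inverse in place of the inverse, verifying that the sandwich still reproduces $T_{\rm IO}^i$ thanks to the support condition forced by $\sum_i T_{\rm IO}^i = \sigma_{\rm I}\otimes\mathds{1}_{\rm O}$, which implies that the range of each $T_{\rm IO}^i$ is contained in $\mathrm{supp}(\sigma_{\rm I}) \otimes \mathcal{H}_{\rm O}$), or by a perturbation argument replacing $\sigma_{\rm I}$ by $\sigma_{\rm I}^{(\varepsilon)} := (1-\varepsilon)\sigma_{\rm I} + \varepsilon \mathds{1}/d_{\rm I}$, carrying out the full-rank construction, and using closedness of the set of testers to take $\varepsilon \to 0$. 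These standard manoeuvres complete the proof.
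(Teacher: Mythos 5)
Your proposal is correct and follows essentially the same route as the paper: the forward direction by summing the link product against the POVM normalisation, and the converse via the purification $\ket{\sqrt{\sigma}}_{\rm IE}$ together with the sandwich $M_{\rm EO}^i \propto (\sqrt{\sigma}^{-1}\otimes\mathds{1})\,T^i\,(\sqrt{\sigma}^{-1}\otimes\mathds{1})$ and the identity $\ketbra{X}_{\rm AB}*Y_{\rm BC}=(X^T\otimes\mathds{1})Y(X^T\otimes\mathds{1})^\dagger$. The only (inessential) divergence is in the rank-deficient case: the paper keeps $d_{\rm E}=d_{\rm I}$, uses the Moore--Penrose inverse, and completes the POVM by adding positive operators $G^i_{\rm EO}$ summing to $\Pi_{\textnormal{im}(\sigma)^{\perp}}\otimes\mathds{1}_{\rm O}$, whereas you propose restricting $\rm E$ to $\mathrm{supp}(\sigma_{\rm I})$ or a perturbation-plus-closedness argument; all three fixes work.
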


\begin{proof}
    Assume first that $T_{\rm IO}^{i} = \rho_{\rm IE} * M_{\rm EO}^{i}$. One computes 
    \begin{align}
       \sum_{i=1}^{N} T_{\rm IO}^{i} &= \sum_{i=1}^{N} \rho_{\rm IE} * M_{\rm EO}^{i} \\
       &= \rho_{\rm IE} * \mathds{1}_{\rm EO} \\
       &= \Tr_{\rm E}(\rho_{\rm IE}) \otimes \mathds{1}_{\rm O}
    \end{align}
    which shows the first implication by defining $\sigma_{\rm I} := \Tr_{\rm E}(\rho_{\rm IE})$. 

    For the other direction, assume that $\sum_{i=1}^{N} T_{\rm IO}^{i} = \sigma_{\rm I} \otimes \mathds{1}_{\rm O}$ where $\sigma_{\rm I}$ is some quantum state. We choose the system $\rm{E}$ to be isomorphic to $\rm{I}$ and consider the pure state $\rho_{\rm IE} = \ketbra{\sqrt{\sigma}^{T}}_{\rm IE}$ via  
    \begin{equation}
        \ket{\sqrt{\sigma}^{T}}_{\rm IE} = \sum_{i=0}^{d_{I}-1} \ket{i} \otimes  \sqrt{\sigma}^{T} \ket{i}.
    \end{equation}
    Here the positive square root $\sqrt{X}$ is the operator $\sqrt{X}=\sum_i \sqrt{\alpha_i} \ketbra{\psi_i}$ where $X = \sum_i \alpha_{i} \ketbra{\psi_i}$ is a spectral decomposition of a hermitian operator $X$.
    The state $\rho_{\rm IE}$ is a purification of $\sigma_{\rm I}^{T}$ in the sense that $\Tr_{\rm E}(\rho_{\rm IE}) = \sigma_{\rm I}^{T}$. Next, we define a measurement $M_{\rm EO}^{i}$ by 
    \begin{equation}
        M_{\rm EO}^{i} = (\sqrt{\sigma}^{-1}_{\rm E} \otimes \mathds{1}_{\rm O}) \; T_{\rm EO}^{i} \; (\sqrt{\sigma}^{-1}_{\rm E} \otimes \mathds{1}_{\rm O}) + G_{\rm EO}^{i}
    \end{equation}
    where ${\sigma}^{-1}=\sum_{\alpha_i\neq0} \frac{1}{\alpha_i} \ketbra{\psi_i}$ denotes the Moore–Penrose inverse (inverse on $\textnormal{im}(\sigma)$) and where $G_{\rm EO}^{i}$ is an arbitrary collection of positive operators that sums up to the projection on  $\textnormal{im}(\sigma)^{\perp}$, i.e.,  $\sum_{i} G_{\rm EO}^{i} = \Pi_{\textnormal{im}(\sigma)^{\perp}}$.
    Then it can be simply calculated that $T_{\rm IO}^i = \rho_{\rm IE} * (M_{\rm EO}^{i})^{T}$ completing the proof. 
\end{proof}

We now present the proof of Proposition \ref{prop:adpative_charcterisation} from the main text.
\begin{proposition*}
    A collection of positive operators $\{T_{\bf{IO}}^{i}\}_{i=1}^{N}$  with $W_{\bf{IO}} := \sum_{i=1}^{N} T_{\bf{IO}}^{i}$ is a two-copy adaptive tester if and only if 
    \begin{align}
    W &= R_{\rm \mathbf{I} O_1} \otimes \mathds{1}_{\rm O_{2}} \\
    \Tr_{\rm I_{2}}(R_{\rm \mathbf{I} O_1}) &= \sigma_{\rm I_{1}} \otimes \mathds{1}_{\rm O_1} 
\end{align}
where $R_{\rm \mathbf{I} O_1}$ is a positive operator and $\sigma_{\rm I_{1}}$ is a state.
\end{proposition*}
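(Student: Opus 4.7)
Plan of proof. The strategy is to prove both implications directly: the ``only if'' part is a computation with the link product, and the ``if'' part extends the realisation trick from Proposition~\ref{prop:single_copy_phys_char} to the two-step adaptive setting.

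For the forward direction, starting from a tester element $T^{i}_{\bf{IO}} = \rho_{\rm I_1 E_1} * K_{\rm E_1 O_1 I_2 E_2} * (M^{i}_{\rm E_2 O_2})^T$, I would sum over $i$ and use $\sum_{i=1}^{N} M^{i}_{\rm E_2 O_2} = \mathds{1}_{\rm E_2 O_2}$ to rewrite $W_{\bf{IO}} = (\rho * K) * \mathds{1}_{\rm E_2 O_2}$. A direct application of Definition~\ref{def:link_product} yields the general identity $X_{\rm AE} * \mathds{1}_{\rm EB} = \Tr_{\rm E}(X) \otimes \mathds{1}_{\rm B}$, so that $W = R_{\mathbf{I}\rm O_1} \otimes \mathds{1}_{\rm O_2}$ with $R := \Tr_{\rm E_2}(\rho * K)$ positive. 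To verify the second constraint, I would compute $\Tr_{\rm I_2}(R) = \rho * \Tr_{\rm I_2 E_2}(K)$, invoke $\Tr_{\rm I_2 E_2}(K) = \mathds{1}_{\rm E_1 O_1}$ (the trace-preservation of the channel $\mathcal{K}$), and apply the same link-product identity once more to obtain $\Tr_{\rm I_2}(R) = \Tr_{\rm E_1}(\rho) \otimes \mathds{1}_{\rm O_1}$; setting $\sigma_{\rm I_1} := \Tr_{\rm E_1}(\rho)$ closes the forward direction.

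For the backward direction, I would build $\rho$, $K$ and $\{M^{i}\}$ explicitly. Following the proof of Proposition~\ref{prop:single_copy_phys_char}, I choose $\rm E_1 \cong \rm I_1$ and take $\rho_{\rm I_1 E_1}$ to be a purification of $\sigma_{\rm I_1}$ (up to transposition). The constraint $\Tr_{\rm I_2}(R) = \sigma_{\rm I_1} \otimes \mathds{1}_{\rm O_1}$ is then precisely what is needed to realise $R$, after conjugation by the Moore-Penrose pseudoinverse $\sqrt{\sigma}^{-T}$ on $\textnormal{supp}(\sigma)$, as the Choi matrix of a completely positive, trace-preserving map on the support of $\sigma$; introducing an auxiliary system $\rm E_2$ of sufficient dimension and performing a Stinespring dilation extends this to a full quantum channel with Choi matrix $K_{\rm E_1 O_1 I_2 E_2}$ satisfying $\Tr_{\rm E_2}(\rho * K) = R$ by construction.

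Finally, with $\rho$ and $K$ fixed, the operator $\tilde W_{\mathbf{I}\rm O_1 E_2} := \rho * K$ is positive with $\Tr_{\rm E_2}(\tilde W) = R$, so $W = \tilde W * \mathds{1}_{\rm E_2 O_2}$, and one more application of the measurement-realisation trick from Proposition~\ref{prop:single_copy_phys_char} suffices: defining $(M^{i}_{\rm E_2 O_2})^T$ via a Moore-Penrose pseudoinverse of $\tilde W$ on its support and completing arbitrarily to a POVM on the orthogonal complement produces positive operators that sum to $\mathds{1}_{\rm E_2 O_2}$ and satisfy $\rho * K * (M^{i})^T = T^{i}$. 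The main obstacle I expect is the bookkeeping of transpose conventions introduced at each application of the link product and the careful handling of the pseudoinverses on possibly degenerate supports; once these are controlled as in the single-copy case, the argument essentially reduces to stringing together a purification of $\sigma$, a Stinespring dilation encoded in $K$, and the single-copy measurement-realisation construction.
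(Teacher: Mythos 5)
Your proposal is correct and follows essentially the same route as the paper: a purification of $\sigma_{\rm I_1}$ for the initial state, a Stinespring-type dilation of $R$ conjugated by the pseudoinverse $\sqrt{\sigma}^{-1}$ for the Choi matrix of $\mathcal{K}$, and the single-copy pseudoinverse trick for the final POVM. The only difference is that you also write out the ``only if'' direction via the link-product identity $X_{\rm AE}*\mathds{1}_{\rm EB}=\Tr_{\rm E}(X)\otimes\mathds{1}_{\rm B}$, which the paper leaves implicit.
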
 

\begin{proof}
Similarly to the single-copy case, we start by setting the system $\rm{E}_1$ to be isomorphic to $\textnormal{im}(\sigma_{\rm I_{1}})$ and consider the pure state $\rho_{\rm I_1E_1} = \ketbra{\sqrt{\sigma}^{T}}_{\rm I_1E_1}$ via  
\begin{equation}
        \ket{\sqrt{\sigma}^{T}}_{\rm I_1E_1} = \sum_{i=0}^{d_{\rm I}-1} \ket{i} \otimes  \sqrt{\sigma}^{T} \ket{i}.
    \end{equation}
We now set $\rm{E}_2$ to be a system isomorphic to $\rm{E}_1 \rm{O}_1 \rm{I}_2 $ and define a quantum channel 
$\mathcal{K}_{\rm E_{1} O_{1} \rightarrow I_{2} E_{2}}$ via the Choi operator
\begin{equation}
    K_{\rm E_{1}O_{1}I_{2}E_{2}} = (\sqrt{\sigma}^{-1}_{\rm E_{1}} \otimes \mathds{1}_{\rm O_{1}I_{2}E_{2}}) \ketbra{\sqrt{R}}_{\rm E_{1}O_{1}I_{2}E_{2}} (\sqrt{\sigma}^{-1}_{\rm E_{1}} \otimes \mathds{1}_{\rm O_{1}I_{2}E_{2}})
\end{equation}
The operator $K_{\rm E_{1}O_{1}I_{2}E_{2}}$ is positive semidefinite and a direct calculation shows that $\Tr_{\rm I_2E_2}(K) = \mathds{1}_{\rm E_1O_1}$, hence $K_{\rm E_{1}O_{1}I_{2}E_{2}}$ is indeed the Choi operator of a quantum channel $\mathcal{K}_{\rm E_{1} O_{1} \rightarrow I_{2} E_{2}}$.

Finally, we define the POVM $\{M_{\rm E_2O_2}^{i}\}_{i=1}^{N}$ via
    \begin{equation}
        M_{\rm E_2O_2}^{i} := (\sqrt{R}^{-1}_{E_2} \otimes \mathds{1}_{\rm O_2}) \; (T_{\rm E_2O_2}^{i})^{T} \; (\sqrt{R}^{-1}_{\rm E_2} \otimes \mathds{1}_{\rm O_2}) + G_{\rm E_2O_2}^{i},
    \end{equation}
     where $G_{\rm E_2O_2}^{i}$ is an arbitrary collection of positive operators that sums up to the projection on  $\textnormal{im}(R)^{\perp} \otimes \mathcal{H}_{\rm O_{2}}$, i.e.,  $\sum_{i} G_{\rm EO}^{i} = \Pi_{\textnormal{im}(R)^{\perp}} \otimes \mathds{1}_{\rm O_{2}}$. Then it can be checked that $T_{\rm I_1O_1I_2O_2}^i = \rho_{\rm I_1E_2} * K_{\rm E_1O_1I_2E_2} * (M_{\rm E_2O_2}^{i})^{T}$  completing the proof.
\end{proof}

\section{Proof of Proposition \ref{prop:clock_shift_memory_dependence}}
\label{prop:heisenberg_weyl}
Here, we present the proof of Theorem~\ref{prop:clock_shift_memory_dependence} in the main text that stated the following.
\begin{Theorem*}
    The maximal success probability $p_{d, d_{\rm E}}$ in discrimination of the uniform ensemble given by the $d$-dimensional clock-shift operators 
    by memory-$d_{E}$ single-copy testers is given by 
    \begin{equation}
        p_{d, d_{\rm E}} = \min \left\{1,  \frac{d_{\rm E}}{d} \right\}.
    \end{equation}
\end{Theorem*}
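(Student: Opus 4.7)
The plan is to prove $p_{\textnormal{opt},d_{\rm E}}^{(d)} = \min\{1, d_{\rm E}/d\}$ by separately establishing the matching upper and lower bounds. For the upper bound, I would invoke a general dimension-counting observation: if $N$ equiprobable states live on a Hilbert space of total dimension $d'$, then any POVM $\{M^i\}_{i=1}^{N}$ achieves success probability at most $d'/N$, since $\Tr(M^i\sigma^i)\le \Tr(M^i)$ (because $\sigma^i\le\mathds{1}$) and $\sum_i \Tr(M^i) = d'$. Applied to the conditional post-channel states $(X_d^i Z_d^j\otimes \mathds{1}_{\rm E})\rho_{\rm IE}(X_d^i Z_d^j\otimes \mathds{1}_{\rm E})^{\dagger}$ living on $\mathcal{H}_{\rm I}\otimes \mathcal{H}_{\rm E}$, this yields $p \le d\,d_{\rm E}/d^2 = d_{\rm E}/d$, while the trivial bound $p\le 1$ handles the memory-rich case $d_{\rm E}\ge d$.

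For the lower bound, I would exhibit an explicit achievable protocol. When $d_{\rm E}\ge d$, the standard choice of the maximally entangled state of rank $d$ together with a generalised Bell basis measurement perfectly identifies the $d^2$ orthogonal Choi matrices, yielding $p = 1$ (this is the qudit analogue of the superdense coding Example). When $d_{\rm E}<d$, I embed a maximally entangled state of rank $d_{\rm E}$ into $\mathcal{H}_{\rm I}\otimes\mathcal{H}_{\rm E}$,
\begin{equation}
\ket{\phi}_{\rm IE} = \frac{1}{\sqrt{d_{\rm E}}}\sum_{k=0}^{d_{\rm E}-1}\ket{k}_{\rm I}\otimes\ket{k}_{\rm E},
\end{equation}
so that the action of $X_d^i Z_d^j$ on the input produces $\ket{\psi_{ij}}_{\rm OE} = \frac{1}{\sqrt{d_{\rm E}}}\sum_{k=0}^{d_{\rm E}-1}\omega^{kj}\ket{k\oplus i}_{\rm O}\otimes\ket{k}_{\rm E}$. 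For different values of $i$, these vectors live in mutually orthogonal $d_{\rm E}$-dimensional subspaces of $\mathcal{H}_{\rm O}\otimes\mathcal{H}_{\rm E}$. I then define $M^{ij} := \frac{d_{\rm E}}{d}\ketbra{\psi_{ij}}$, so that $\Tr(M^{ij}\ketbra{\psi_{ij}}) = d_{\rm E}/d$ by inspection and $p = d_{\rm E}/d$ once POVM completeness is established.

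The main obstacle is verifying $\sum_{ij} M^{ij} = \mathds{1}_{\rm OE}$: within each $i$-block one needs $(d_{\rm E}/d)\sum_{j=0}^{d-1}\ketbra{\psi_{ij}}$ to equal the projector onto that block, which reduces to the character identity $\sum_{j=0}^{d-1}\omega^{(k-k')j} = d\,\delta_{k,k'}$ for $k,k'\in\{0,\dots,d_{\rm E}-1\}\subseteq\{0,\dots,d-1\}$. Once this holds, the $d$ mutually orthogonal block-projectors sum to $\mathds{1}_{\rm OE}$, $\{M^{ij}\}$ is a bona fide POVM, and the explicit construction saturates the dimension-counting bound, closing the proof.
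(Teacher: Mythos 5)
Your proof is correct, but it takes a genuinely different route from the paper. The paper invokes the group-theoretic result of Chiribella et al.\ for discriminating a uniform ensemble of unitaries forming a projective representation, namely $p_s=\frac{1}{N}\sum_{\lambda}d_\lambda\min\{d_\lambda,m_\lambda\}$; it then verifies via the Schur orthogonality relations that the clock-shift operators form a single irreducible representation of dimension $d$, and uses the reduction of memory-$d_{\rm E}$ testers to memoryless discrimination of $\{X_d^iZ_d^j\otimes\mathds{1}_{d_{\rm E}}\}$ (one irrep, multiplicity $d_{\rm E}$) to read off $\min\{1,d_{\rm E}/d\}$. You instead prove the two bounds by hand: your dimension-counting upper bound $p\le d\,d_{\rm E}/N$ is valid (it is essentially the same counting argument the paper uses later in the adaptive setting of Theorem~\ref{prop:adpt_gen_bound}, though your one-line estimate $\Tr(M^i\sigma^i)\le\Tr(M^i)$ is more direct than the Cauchy--Schwarz route used there), and your rank-$d_{\rm E}$ maximally entangled state with the subnormalised rank-one POVM $M^{ij}=\frac{d_{\rm E}}{d}\ketbra{\psi_{ij}}$ does form a valid measurement --- the character sum closes each $i$-block to the projector $\sum_{k<d_{\rm E}}\ketbra{k\oplus i}\otimes\ketbra{k}$, and these $d$ orthogonal rank-$d_{\rm E}$ projectors sum to $\mathds{1}_{\rm OE}$. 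The trade-off: the paper's argument is shorter given the cited representation-theoretic machinery and generalises immediately to any group-covariant unitary ensemble, whereas yours is self-contained, needs nothing beyond a geometric series, and has the added benefit of exhibiting the explicit optimal state and measurement, which the group-theoretic formula does not directly provide.
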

\begin{proof}
    To start, in  \cite{Chiribella2005} it has been shown that if $\{p_i,U^i\}_{i=1}^N$ is an ensemble of unitary operations, $p_i=1/N$ (uniform distribution) and the unitary operators $U^i\in\mathcal{L}(\mathbb{C}^d)$ form a group up to a global phase (i.e., $U^i$ is a projective representation of a group), the maximal probability for discriminating the $N$ unitary from the ensemble $\{p_i,U^i\}_{i=1}^N$ in a single shot memory-less scenario is given by
    \begin{align}
        \label{eq:irrep_suc_prob}
        p_s=\frac{1}{N}\sum_{\lambda\in\text{irreps}} d_\lambda \min{\left\{d_\lambda, m_\lambda\right\}},
    \end{align}
    where $\lambda$ labels the irreducible representations (irreps) of the group formed by $\{U^i\}_{i=1}^N$ represented in $\mathbb{C}^d$ and $d_{\lambda}$ and $m_{\lambda}$ are the corresponding dimensions and multiplicities, respectively.

    Now, we notice that the clock shift operators $\{X_d^iZ_d^j\}_{i,j=0}^{d-1}$ form a group up to a global phase. Also, this group cannot be reduced on $\mathbb{C}^d$. To see that it is irreducible, notice that the character of the representation is given by $\chi(X_d^i Z_d^j)=\tr(X_d^i Z_d^j)$, and for $i=j=0$ we have $\chi(I_d)=d$; in all other cases we have $\chi(X_d^i Z_d^j)=0$, because all clock-shift operators that are not the identity are traceless. The Schur orthogonality relations~\cite[Theorem~5]{Serre1977} state that a representation of a finite group $G$ is irreducible if and only if the characters of it respect
    \begin{align}
        \sum_{g\in G} \frac{1}{\abs{G}} \chi(g)\overline{\chi(g)} =1.
    \end{align}
    For the particular case of the clock and shift operators it follows that
    \begin{align}
        \sum_{i,j\in\{0,\ldots,d-1\}} \frac{1}{d^2} \tr(X_d^i Z_d^j)\overline{\tr(X_d^i Z_d^j)} = 1,
    \end{align}
    hence the representation of the clock-shift operators in $\mathbb{C}^d$ is irreducible, and since it has dimension $d$, the multiplicity of this irrep is one. This is enough to show that in the memory-less case, we have that
    \begin{align}
        p_s=&\frac{1}{d^2} d \\
        =&\frac{1}{d}.
    \end{align}

    Now, we recall Remark \ref{obs:memory-less-reduction} that having a memory size of $d_{\rm E}$ is equivalent of discriminating the ensemble with operators $\{X_d^iZ_d^j\otimes I_{d_{\rm E}}\}_{i,j=0}^{d-1}$. We notice that the set $\{X_d^iZ_d^j\otimes I_{d_{\rm E}}\}_{i,j=0}^{d-1}$ is itself a projective representation of the clock-shift operators, with a single irrep of dimension $d$, and multiplicity $d_{\rm E}$. Hence, evaluating Eq.~\eqref{eq:irrep_suc_prob} yields
    \begin{align}
        p_s=&\frac{1}{d^2} d \min{\{d,d_{\rm E}\}} \\
        =&\frac{\min{\{d,d_{\rm E}\}}}{d} = \min \left\{1,  \frac{d_{{\rm E}}}{d} \right\}.
    \end{align}
\end{proof}

\section{Quantum memory and constrained separability problems}
\label{sec:con_sep}
\subsection{Constrained separable states}
Before we consider constrained separability problems as well as their ability to quantify quantum memory, let us recall the standard notion of quantum separability. A bipartite (mixed) quantum state $\rho_{\rm AB}$ is separable if it can be decomposed as 
\begin{equation} \label{eq:sep_deco}
    \rho_{\rm AB} = \sum_{\lambda} p_{\lambda} \; \rho_{\rm A}^{\lambda} \otimes \rho_{\rm B}^{\lambda}
\end{equation}
with a probability distribution $\{p_{\lambda}\}_{\lambda}$ and states $\{\rho_{\rm A}^{\lambda}\}_{\lambda}$ and $\{\rho_{\rm B}^{\lambda}\}_{\lambda}$ of the systems $\rm{A}$ and $\rm{B}$, respectively. Constrained separable states introduced in Ref.~\cite{Berta2021} obey a slightly modified definition compared to the usual notion of separability.
\begin{definition}[Constrained separable state] \label{def:con_sep}
    Let $\rho_{\rm AB}$ be a bipartite quantum state, $\Phi_{\rm A}: \mathcal{L}(\mathcal{H}_{A}) \rightarrow V_{\rm A}$ and $\Psi_{\rm B}: \mathcal{L}(\mathcal{H}_{\rm B}) \rightarrow V_{\rm B}$ be linear maps to finite-dimensional vector spaces $V_{\rm A}$ and $V_{\rm B}$ and let $a \in V_{\rm A}$ and $b \in V_{\rm B}$ be some constant vectors. Then, $\rho_{\rm AB}$ is constrained separable with respect to $\Phi_{\rm A}$, $\Psi_{\rm B}$, $a$ and $b$ if $\rho_{\rm AB}$ can be decomposed as 
    \begin{equation}
    \rho_{\rm AB} = \sum_{\lambda} p_{\lambda} \; \rho_{\rm A}^{\lambda} \otimes \rho_{\rm B}^{\lambda}
\end{equation}
where $\{p_{\lambda}\}_{\lambda}$ is a probability distribution, $\rho_{\rm A}^{\lambda} \geq 0$, $\rho_{\rm B}^{\lambda} \geq 0$, $\textnormal{Tr}(\rho_{\rm A}^{\lambda}) = \textnormal{Tr}(\rho_{\rm B}^{\lambda}) = 1$  and 
\begin{align}
    \Phi_{\rm A}(\rho_{\rm A}^{\lambda}) = a \\
    \Psi_{\rm B}(\rho_{\rm B}^{\lambda}) = b 
\end{align}
for all $\lambda$. 
\end{definition}
In words, constrained separable states are separable states for which the tensor factors in each summand of each separable decomposition obey an additional affine constraint.
When it follows from the context which constraining maps and constants are under consideration, we will just speak of constrained separable states without further reference.  

\subsection{Optimisation over constrained separable states}
An optimisation of a linear functional over some set of constrained separable states is a constrained separability problem \cite{Berta2021}. Every bipartite constrained separability problem is characterised by the data tuple $(F_{\rm AB}, \Phi_{\rm A}, \Psi_{\rm B}, a, b)$ and can be formulated as 
\begin{align}
\label{eq:csep_problem}
    \max_{\rho_{\rm AB}} &\Tr(F_{\rm AB} \, \rho_{\rm AB}) \\
    \textnormal{s.t.} \; &\rho_{\rm AB} = \sum_{\lambda} p_{\lambda} \; \rho_{\rm A}^{\lambda} \otimes \rho_{\rm B}^{\lambda} \\
    & \rho_{\rm A}^{\lambda} \geq 0, \; \Tr(\rho_{\rm A}^{\lambda}) = 1, \; \Phi_{\rm A}(\rho_{\rm A}^{\lambda}) = a \textnormal{ for all } \lambda \\
    & \rho_{\rm B}^{\lambda} \geq 0, \; \Tr(\rho_{\rm B}^{\lambda}) = 1, \; \Psi_{\rm B}(\rho_{\rm B}^{\lambda}) = b \textnormal{ for all } \lambda \\
    & p_{\lambda} \geq 0, \; \sum_{\lambda} p_{\lambda} = 1. 
\end{align}
Our strategy to find bounds or even exact results in some cases consists of calculating lower and upper bounds to the problem \eqref{eq:csep_problem} by semidefinite programs. If we find that lower and upper bounds coincide for a problem, we can conclude that we have found the optimum.

\subsection{Seesaw optimisation}
\label{sec:seesaw}
Lower bounds to the constrained separability problem \eqref{eq:csep_problem} can be obtained by a seesaw algorithm, that is a standard approach in bilinear or more generally biconvex optimisation problems \cite{Gorski2007}. By the linearity of the objective function, it is clear that the optimum will be obtained on an extreme point $\rho_{\rm A} \otimes \rho_{\rm B}$ of the set of constrained separable states. Furthermore, every evaluation $\Tr(F_{\rm AB} \; \rho_{\rm A} \otimes \rho_{\rm B})$ for fixed $\rho_{\rm A}, \rho_{\rm B}$ is a lower bound to the optimum of \eqref{eq:csep_problem}. The idea is now to fix a local state $\rho_{\rm A}$ that fulfils $\Phi_{\rm A}(\rho_{\rm A}) = a$ and to maximise the reduced linear functional over states $\rho_{\rm B}$ that fulfil $\Psi_{\rm B}(\rho_{\rm B}) = b$. This optimisation problem is a semidefinite program in standard form. In the next step, one takes the optimiser $\rho_{\rm B}$ and performs the analogous optimisation over $\rho_{\rm A}$. Both optimisation problems that are given by 
\begin{equation}
    \begin{split}
        &\textnormal{given} \; \rho_{\rm A}  \\
        &\max_{\rho_{\rm B}} \; \Tr[\Tr_{\rm A}(F_{\rm AB} (\rho_{\rm A} \otimes \mathds{1}_{\rm B})) \rho_{\rm B}] \\
        & \textnormal{s.t.} \; \rho_{\rm B} \geq 0, \Tr(\rho_{\rm B}) = 1, \Psi_{\rm B}(\rho_{\rm B}) = b \\
        &\Rightarrow \textnormal{Evaluate } \rho_{\rm B}
    \end{split}
    \quad\quad\quad\quad
    \begin{split}
        &\textnormal{given} \; \rho_{\rm B}  \\
        &\max_{\rho_{\rm A}} \; \Tr[\Tr_{\rm B}(F_{\rm AB} (\mathds{1}_{\rm A} \otimes \rho_{\rm B})) \rho_{\rm A}] \\
        & \textnormal{s.t.} \; \rho_{\rm A} \geq 0, \Tr(\rho_{\rm A}) = 1, \Phi_{\rm A}(\rho_{\rm A}) = a \\
        &\Rightarrow \textnormal{Evaluate } \rho_{\rm A}
    \end{split}
\end{equation}
are carried out alternately until convergence occurs. The limit point that is achieved by this procedure is \textbf{not} guaranteed to be optimal value. However, by sampling over many initialisations $\rho_{\rm A}$, one can often find a promising candidate for the optimiser and the optimum. Often, upper bounds can then be used to confirm the optimality.

But even in the absence of any upper bounds, the error in the approximation coming from the seesaw algorithm can be explicitly bounded, based on the geometry of the polytope that is spanned by the chosen initialisations. To make this precise, we need to define a few quantities. First, the setting of the constrained separability problems naturally motivates the definition of the associated constrained (local) state spaces $Y_{\rm A}$ given by the convex set of states of the form
\begin{equation}
     Y_{\rm A} := \{\rho_{\rm A} \geq 0, \Tr(\rho_{\rm A}) = 1, \Phi_{\rm A}(\rho_{\rm A}) = a\}.
\end{equation}
As mentioned above, in seesaw optimisation one selects or randomly samples states from $Y_{\rm A}$. The initialisations may be thought of points that span a polytope that is included in $Y_{\rm A}$ motivating the following definition of inner and outer polytopes of a constrained state space.   

\begin{definition}[Inner and outer polytope of constrained state space] 
    Let $Y_{\rm A}$ be a constrained state space. An inner polytope $\mathcal{V}_{\rm A}$ of $Y_{\rm A}$ is the convex hull of a finite subset of elements of $Y_{\rm A}$, i.e., $\mathcal{V}_{\rm A} = \textnormal{conv}(\{\widetilde{\rho}_{\rm A}^{\lambda}\}_{\lambda = 1}^{m})$ where for each $\lambda \in \{1,\dots,m\}$,  $\widetilde{\rho}_{\rm A}^{\lambda} \in Y_{\rm A}$. An outer polytope of a constrained state space $Y_{\rm A}$ is the convex hull $\mathcal{O}_{\rm A} = \textnormal{conv}(\{\sigma_{\rm A}^{\lambda}\}_{\lambda = 1}^{m})$ of hermitian operators $\sigma_{\rm A}^{\lambda} \in \mathcal{L}(\mathcal{H}_{\rm A})$, such that $\mathcal{O}_{\rm A}$ contains $Y_{\rm A}$, i.e., for each $\rho_{\rm A} \in Y_{\rm A}$ there is a probability distribution $\{q_{\lambda}\}_{\lambda=1}^{m}$ such that $\rho_{\rm A} = \sum_{\lambda = 1}^{m} q_{\lambda} \, \sigma_{\rm A}^{\lambda}$.
\end{definition}

Inner polytopes $\mathcal{V}_{\rm A}$ that lie in the constrained state space and outer polytopes $\mathcal{O}_{\rm A}$ that contain the state space give rise to lower and upper bounds $r_{\mathcal{V}_{\rm  A}}$ and $r_{\mathcal{O}_{\rm  A}}$ of the optimisation problem \eqref{eq:csep_problem} via

\begin{equation}
    \begin{split}
        r_{\mathcal{V}_{\rm  A}} = \max_{\lambda \in \{1,\dots,m\}} &\max_{\rho_{\rm B}} \; \Tr[\Tr_{\rm A}(F_{\rm AB} (\widetilde{\rho}_{\rm A}^{\lambda} \otimes \mathds{1}_{\rm B})) \rho_{\rm B}] \\
        & \textnormal{s.t.} \; \rho_{\rm B} \geq 0, \Tr(\rho_{\rm B}) = 1 \\
        &\Psi_{\rm B}(\rho_{\rm B}) = b 
    \end{split}
    \quad
    \begin{split}
        r_{\mathcal{O}_{\rm  A}} = \max_{\lambda \in \{1,\dots,m\}} &\max_{\rho_{\rm B}} \; \Tr[\Tr_{\rm A}(F_{\rm AB} (\sigma_{\rm A}^{\lambda} \otimes \mathds{1}_{\rm B})) \rho_{\rm B}] \\
        & \textnormal{s.t.} \; \rho_{\rm B} \geq 0, \Tr(\rho_{\rm B}) = 1 \\
        &\Psi_{\rm B}(\rho_{\rm B}) = b 
    \end{split}
\end{equation}
that can both be evaluated by finitely many semidefinite programs.

Inner and outer polytopes are therefore practical approximations of the constrained state space $Y_{\rm A}$. To quantify the quality of the approximation of inner polytopes, whose vertices correspond to a potential set of the initialisations of the seesaw optimisation, we define the approximation radius.

\begin{definition}[Approximation radius]
    Let $\mathcal{V}_{\rm A}$ be an inner polytope of a constrained state space $Y_{\rm A}$. Furthermore, let $\tau_{A}$ be some state that is in the interior of $\mathcal{V}_{\rm A}$. Then, the approximation radius $l_{\tau_{\rm A}}(\mathcal{V}_{\rm A})$ of $\mathcal{V}_{\rm A}$ with respect to the reference state $\tau_{\rm A}$ is defined as the solution of the optimisation problem 
    \begin{align}
        l_{\tau_{\rm A}}(\mathcal{V}_{\rm A}) := &\max_{t}  t \\
    &\textnormal{s.t} \; t \rho_{\rm A} + (1-t) \tau_{\rm A} \in \mathcal{V}_{\rm A} \; \forall \rho_{\rm A} \in Y_{\rm A}.
    \end{align}
    
\end{definition}

\new{
One way to compute the approximation radius in practice works by first enumerating the facets of the polytope $\mathcal{V}_{\rm A}$. For that, one has to find hermitian linear operators $B_{\rm A}^{i} \in \mathcal{L}(\mathcal{H}_{\rm A})$ and numbers $b_i \in \mathbb{R}$ for $i \in \{1,\dots,L\}$ such that 
\begin{equation}
    \mathcal{V}_{\rm A} = \bigcap_{i=1}^{L}\{\rho_{\rm A}: \Tr(B_{\rm A}^{i} \rho_{\rm A}) \leq b_i\}.
\end{equation}
Given the vertices of the polytope, the facet enumeration may be performed using the algorithm of the double description method \cite{Motzkin1953}. The approximation radius can then be computed by the minimisation over finitely many outcomes of semidefinite programs, one for each facet of the polytope, as follows.  
\begin{align}
        l_{\tau_{\rm A}}(\mathcal{V}_{\rm A}) = \, \underset{i \in \{1,\dots,L\}}{\min} \; &\min \; s \label{eq:app_radius_sdp}\\
    & \textnormal{w.r.t} \; s \in \mathbb{R}_{+}, \; \widetilde{\rho}_{\rm A}  \in \mathcal{L}(\mathcal{H}_{\rm A})  \nonumber\\
    &\textnormal{s.t} \; \Tr(B_{\rm A}^{i} \widetilde{\rho}_{\rm A}) \geq b_{i} - (1-s) \Tr(B_{\rm A}^{i} \tau_{\rm A}) \nonumber\\
    &\quad \;\widetilde{\rho}_{\rm A} \geq 0, \; \Tr(\widetilde{\rho}_{\rm A}) = s, \, \Phi_{\rm A}(\widetilde{\rho}_{\rm A}) = sa \nonumber.
\end{align}
}

The error bound in Theorem \ref{thm:see_saw_bound} of the main text further involves the quantity $f_{\tau_{\rm A}}$ defined as 

\begin{equation}
    f_{\tau_{\rm A}} = \min_{\rho_{\rm B}\in Y_{\rm B}} \Tr \left[\Tr_{\rm A}(F_{\rm AB} \, \tau_{\rm A} \otimes \mathds{1}_{\rm B}) \rho_{\rm B}\right]
\end{equation}
where the minimisation runs over all states $\rho_{\rm B}$ in the constrained state space $Y_{\rm B}$.
Importantly, $f_{\tau_{\rm A}}$ only depends on the reference state $\tau_{\rm A}$ and can be computed by a single SDP. We are now ready to prove Theorem \ref{thm:see_saw_bound}.

\begin{Theorem*}
    Let $(F_{\rm AB}, \Phi_{\rm A}, a, \Psi_{\rm B}, b)$ describe a constrained separability problem and let $\mathcal{V}_{\rm A} = \textnormal{conv}(\{\widetilde{\rho}_{\rm A}^{\lambda}\}_{\lambda=1}^{m})$ be an inner polytope of the constrained state space $Y_{\rm A}$, that has a non-zero approximation radius $l_{\tau_{\rm A}}(\mathcal{V}_{\rm A})$ with respect to some reference state $\tau_{\rm A}$. Then, for the polytope approximation $r_{\mathcal{V}_{\rm A}}$, we have the inequalities  
    \begin{equation}
        r_{\mathcal{V}_{\rm A}} \leq r_{\rm opt} \leq \frac{1}{l_{\tau_{\rm A}}(\mathcal{V}_{\rm A})} r_{\mathcal{V}_{\rm A}} + \frac{l_{\tau_{\rm A}}(\mathcal{V}_{\rm A})-1}{l_{\tau_{\rm A}}(\mathcal{V}_{\rm A})} f_{\tau_{\rm A}}
    \end{equation}
    In particular, the right-hand side of the inequality goes to $r_{\mathcal{V}_{\rm A}}$ if the approximation radius $l_{\tau_{\rm A}}(\mathcal{V}_{\rm A})$ goes to $1$.
\end{Theorem*}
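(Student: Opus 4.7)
The plan is to prove the lower bound trivially and derive the upper bound by turning the inner polytope $\mathcal{V}_{\rm A}$ into an outer polytope of $Y_{\rm A}$ via the approximation radius, then evaluating the objective on separable decompositions using this outer polytope representation.

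First I would observe that $r_{\mathcal{V}_{\rm A}}\leq r_{\rm opt}$ is immediate: each vertex $\widetilde{\rho}_{\rm A}^{\lambda}$ of $\mathcal{V}_{\rm A}$ lies in $Y_{\rm A}$, so the product state $\widetilde{\rho}_{\rm A}^{\lambda}\otimes\rho_{\rm B}$ is a valid constrained separable state whenever $\rho_{\rm B}\in Y_{\rm B}$, meaning the optimisation defining $r_{\mathcal{V}_{\rm A}}$ is a restriction of that defining $r_{\rm opt}$.

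For the upper bound, the key step is to unpack the definition of $l := l_{\tau_{\rm A}}(\mathcal{V}_{\rm A})$. By that definition, for every $\rho_{\rm A}\in Y_{\rm A}$ the state $l\rho_{\rm A}+(1-l)\tau_{\rm A}$ lies in $\mathcal{V}_{\rm A}$, hence admits a convex decomposition $\sum_{\mu}q_{\mu}\widetilde{\rho}_{\rm A}^{\mu}$ with a probability distribution $\{q_{\mu}\}$ (possibly depending on $\rho_{\rm A}$). Solving for $\rho_{\rm A}$ yields the outer-polytope expression
\begin{equation}
\rho_{\rm A} \;=\; \sum_{\mu} q_{\mu}\Bigl[\tfrac{1}{l}\,\widetilde{\rho}_{\rm A}^{\mu} \;-\; \tfrac{1-l}{l}\,\tau_{\rm A}\Bigr].
\end{equation}
Given an arbitrary constrained separable state $\rho_{\rm AB}=\sum_{\lambda}p_{\lambda}\rho_{\rm A}^{\lambda}\otimes\rho_{\rm B}^{\lambda}$, I would apply this identity to each local factor $\rho_{\rm A}^{\lambda}\in Y_{\rm A}$ and substitute into $\Tr(F_{\rm AB}\,\rho_{\rm AB})$. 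The contribution from the $\widetilde{\rho}_{\rm A}^{\mu}\otimes\rho_{\rm B}^{\lambda}$ terms is bounded above by $r_{\mathcal{V}_{\rm A}}$ since $\rho_{\rm B}^{\lambda}\in Y_{\rm B}$ and these are precisely the product states feasible for the $r_{\mathcal{V}_{\rm A}}$ program. The contribution from the $\tau_{\rm A}\otimes\rho_{\rm B}^{\lambda}$ term equals $\Tr[\Tr_{\rm A}(F_{\rm AB}\,\tau_{\rm A}\otimes\mathds{1}_{\rm B})\rho_{\rm B}^{\lambda}]$, which by definition of $f_{\tau_{\rm A}}$ is bounded below by $f_{\tau_{\rm A}}$; since this term enters with a negative coefficient $-\tfrac{1-l}{l}\leq 0$, the lower bound translates into the desired upper bound $-\tfrac{1-l}{l}f_{\tau_{\rm A}}=\tfrac{l-1}{l}f_{\tau_{\rm A}}$. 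Collecting the two contributions, taking a convex combination over $\mu$ and $\lambda$, and maximising over $\rho_{\rm AB}$ gives the claimed inequality. The limit $l\to 1$ is then immediate by inspection.

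The main potential obstacle is bookkeeping of signs: the coefficient $-\tfrac{1-l}{l}$ is non-positive for $l\in(0,1]$, so one must be careful to invoke $f_{\tau_{\rm A}}$ as a \emph{minimum} over $Y_{\rm B}$ (not a maximum) to flip the inequality correctly. A minor subsidiary point is justifying that the decomposition coefficients $\{q_{\mu}\}$ may indeed be chosen to be a probability distribution rather than arbitrary non-negative numbers; this follows because $l\rho_{\rm A}+(1-l)\tau_{\rm A}$ is itself a normalised state and $\mathcal{V}_{\rm A}$ is the convex hull of the $\widetilde{\rho}_{\rm A}^{\mu}$, so such a decomposition exists by Carathéodory. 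No other subtlety arises, since the argument only uses linearity of the objective and feasibility of each $\rho_{\rm B}^{\lambda}$ in $Y_{\rm B}$.
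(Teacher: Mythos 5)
Your proposal is correct and follows essentially the same route as the paper: the identity $\rho_{\rm A}=\sum_{\mu}q_{\mu}\bigl[\tfrac{1}{l}\widetilde{\rho}_{\rm A}^{\mu}-\tfrac{1-l}{l}\tau_{\rm A}\bigr]$ is exactly the paper's construction of the outer polytope with vertices $\sigma_{\rm A}^{\mu}=\tfrac{1}{l}(\widetilde{\rho}_{\rm A}^{\mu}-\tau_{\rm A})+\tau_{\rm A}$, and the subsequent term-by-term bounding (with the sign of $-\tfrac{1-l}{l}$ correctly forcing $f_{\tau_{\rm A}}$ to be a minimum) matches the paper's estimate. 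The only cosmetic difference is that you apply the decomposition to each local factor of a separable optimiser rather than stating the outer-polytope bound $r_{\mathcal{O}_{\rm A}}$ abstractly first.
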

\begin{proof}
    First for $\lambda = 1,\dots, m$, let us consider the operators
    $\sigma_{\rm A}^{\lambda}$ defined via 
    \begin{equation}
        \sigma_{\rm A}^{\lambda} := \frac{1}{l_{\tau_{\rm A}}(\mathcal{V}_{\rm A})}(\widetilde{\rho}_{\rm A}^{\lambda} - \tau_{\rm A}) + \tau_{\rm A}. 
    \end{equation}
    We now claim that $\mathcal{O}_{\rm A} = \textnormal{conv}(\{\sigma_{\rm A}^{\lambda}\}_{\lambda=1}^{m})$ is an outer polytope of $Y_{\rm A}$. To show this, let $\rho_{\rm A} \in Y_{\rm A}$. By definition of the approximation radius $l_{\tau_{\rm A}}(\mathcal{V}_{\rm A})$, we know that the state $\rho_{l, \rm{A}}$
    given by 
    \begin{equation}
    \label{eq:rho_lA}
        \rho_{l, \rm{A}} := l_{\tau_{\rm A}}(\mathcal{V}_{\rm A})\rho_{\rm A} + (1-l_{\tau_{\rm A}}(\mathcal{V}_{\rm A}))\tau_{\rm A}
    \end{equation}
    lies in the polytope $\mathcal{V}_{\rm A}$ which means that there is a decomposition $\rho_{l, \rm{A}} = \sum_{\lambda = 1}^{m} q_{\lambda} \, \widetilde{\rho}_{\rm A}^{\lambda}$ with some probability distribution $\{p_{\lambda}\}_{\lambda=1}^{m}$. Reordering Eq.~\eqref{eq:rho_lA} yields
    \begin{align}
        \rho_{\rm A} &= \frac{1}{l_{\tau_{\rm A}}(\mathcal{V}_{\rm A})} \rho_{l, \rm{A}} + \frac{l_{\tau_{\rm A}}(\mathcal{V}_{\rm A})-1}{l_{\tau_{\rm A}}(\mathcal{V}_{\rm A})} \tau_{\rm A} \\
        &= \sum_{\lambda = 1}^{m} q_{\lambda} \left(\frac{1}{l_{\tau_{\rm A}}(\mathcal{V}_{\rm A})}\widetilde{\rho}_{\rm A}^{\lambda} + \frac{l_{\tau_{\rm A}}(\mathcal{V}_{\rm A})-1}{l_{\tau_{\rm A}}(\mathcal{V}_{\rm A})} \tau_{\rm A}\right) \\
        &= \sum_{\lambda = 1}^{m} q_{\lambda} \, \sigma_{\rm A}^{\lambda}
    \end{align}
    from which follows that $\rho_{\rm A} \in \mathcal{O}_{\rm A}$,  implying that $\mathcal{O}_{\rm A}$ is an outer polytope. Hence, we obtain the bound
    \begin{align}
        r_{\rm opt} &\leq \max_{\lambda} \max_{\rho_{\rm B}} \Tr\left[\Tr_{\rm A}(F_{\rm AB} \, \sigma_{\rm A}^{\lambda} \otimes  \mathds{1}_{\rm B}) \rho_{\rm B}\right] \\
        &\leq \frac{1}{l_{\tau_{\rm A}}(\mathcal{V}_{\rm A})} \max_{\lambda} \max_{\rho_{\rm B}}\left( \Tr\left[\Tr_{\rm A}(F_{\rm AB} \, \widetilde{\rho}_{\rm A}^{\lambda}  \otimes  \mathds{1}_{\rm B}) \rho_{\rm B}\right] \right) + \frac{l_{\tau_{\rm A}}(\mathcal{V}_{\rm A})-1}{l_{\tau_{\rm A}}(\mathcal{V}_{\rm A})} \min_{\rho_{\rm B}} \left(\Tr\left[\Tr_{\rm A}(F_{\rm AB} \, \tau_{\rm A}  \otimes  \mathds{1}_{\rm B}) \rho_{\rm B}\right]\right) \label{eq:second_ineqaulity_seesaw_proof} \\
        &= \frac{1}{l_{\tau_{\rm A}}(\mathcal{V}_{\rm A})} r_{\mathcal{V}} + \frac{l_{\tau_{\rm A}}(\mathcal{V}_{\rm A})-1}{l_{\tau_{\rm A}}(\mathcal{V}_{\rm A})} f_{\tau_{\rm A}}.
    \end{align}
    The minimisation in the second term of \eqref{eq:second_ineqaulity_seesaw_proof} arises from the fact that the prefactor $\frac{l_{\tau_{\rm A}}(\mathcal{V}_{\rm A})-1}{l_{\tau_{\rm A}}}$ is always negative. 
\end{proof}

\subsection{Outer approximations}
\subsubsection{PPT-criterion}
\label{sec:ppt_criterion}
A simple and important tool to compute upper bounds to constrained separability problems is the well known PPT criterion \cite{PhysRevLett.77.1413, Horodecki1996}. Every bipartite separable state $\rho_{\rm AB}$ has a positive partial transpose (PPT) $\rho_{\rm AB}^{T_{\rm B}}$. Hence, the set of positive operators with a positive partial transpose forms an outer approximation of the separable operators and states which are not PPT are detected as entangled. In general, this approximation is not optimal in the sense that there exist entangled states which are PPT. 

This criterion for standard separability can be easily generalised for constrained separable states. One can state the following.
\begin{proposition}
    Let the state $\rho_{\rm AB}$ be constrained separable with respect to $(\Phi_{\rm A}, \Psi_{\rm B}, a, b)$. Then it holds that
    \begin{align}
        \rho_{\rm AB}^{T_{\rm B}} &\geq 0 \label{eq:con_ppt_a} \\
        (\textnormal{id}_{\rm A} \otimes \Psi_{\rm B})[\rho_{\rm AB}] &= \Tr_{\rm B}(\rho_{\rm AB}) \otimes b \label{eq:con_ppt_b}\\
        (\Phi_{\rm A} \otimes \textnormal{id}_{\rm B})[\rho_{\rm AB}] &= a \otimes \Tr_{\rm A}(\rho_{\rm AB}). \label{eq:con_ppt_c}
    \end{align}
\end{proposition}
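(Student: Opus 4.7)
The plan is to prove all three conditions directly from the defining constrained separable decomposition, exploiting linearity of the involved maps. Let $\rho_{\rm AB} = \sum_{\lambda} p_{\lambda} \, \rho_{\rm A}^{\lambda} \otimes \rho_{\rm B}^{\lambda}$ be a constrained separable decomposition in the sense of Definition~\ref{def:con_sep}, so that $\rho_{\rm A}^{\lambda}, \rho_{\rm B}^{\lambda} \geq 0$, $\Phi_{\rm A}(\rho_{\rm A}^{\lambda}) = a$ and $\Psi_{\rm B}(\rho_{\rm B}^{\lambda}) = b$ for every $\lambda$, with $\{p_{\lambda}\}$ a probability distribution.

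For the PPT condition \eqref{eq:con_ppt_a}, I would observe that the partial transpose is linear and that the transpose of a positive semidefinite operator is again positive semidefinite. Applying $T_{\rm B}$ to the above decomposition therefore yields
\begin{equation}
    \rho_{\rm AB}^{T_{\rm B}} = \sum_{\lambda} p_{\lambda} \, \rho_{\rm A}^{\lambda} \otimes (\rho_{\rm B}^{\lambda})^{T},
\end{equation}
which is a convex combination of positive semidefinite operators and hence itself positive semidefinite. This reproduces the standard argument underlying the PPT criterion and does not invoke the additional affine constraints.

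For the constraint-propagation identities \eqref{eq:con_ppt_b} and \eqref{eq:con_ppt_c}, I would use that $\Phi_{\rm A}$ and $\Psi_{\rm B}$ are linear maps and that the maps $\textnormal{id}_{\rm A} \otimes \Psi_{\rm B}$ and $\Phi_{\rm A} \otimes \textnormal{id}_{\rm B}$ act on a product as expected. For \eqref{eq:con_ppt_b}, applying $\textnormal{id}_{\rm A} \otimes \Psi_{\rm B}$ to each summand and using $\Psi_{\rm B}(\rho_{\rm B}^{\lambda}) = b$ gives
\begin{equation}
    (\textnormal{id}_{\rm A} \otimes \Psi_{\rm B})[\rho_{\rm AB}] = \sum_{\lambda} p_{\lambda} \, \rho_{\rm A}^{\lambda} \otimes b = \Bigl(\sum_{\lambda} p_{\lambda} \rho_{\rm A}^{\lambda}\Bigr) \otimes b = \Tr_{\rm B}(\rho_{\rm AB}) \otimes b,
\end{equation}
where in the last equality I use that the $\rho_{\rm B}^{\lambda}$ are normalised states so that $\Tr_{\rm B}(\rho_{\rm A}^{\lambda} \otimes \rho_{\rm B}^{\lambda}) = \rho_{\rm A}^{\lambda}$. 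The identity \eqref{eq:con_ppt_c} follows by the symmetric argument with the roles of $\rm A$ and $\rm B$ interchanged.

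No step here presents a real obstacle; the content is essentially that each of the three necessary conditions is preserved by convex combinations of product states whose factors individually satisfy them. The only point that requires a sentence of care is that the normalisation $\Tr(\rho_{\rm B}^{\lambda}) = 1$ and $\Tr(\rho_{\rm A}^{\lambda}) = 1$ (built into Definition~\ref{def:con_sep}) is what allows the partial traces $\Tr_{\rm B}(\rho_{\rm AB})$ and $\Tr_{\rm A}(\rho_{\rm AB})$ to appear on the right-hand sides of \eqref{eq:con_ppt_b} and \eqref{eq:con_ppt_c} instead of merely $\sum_{\lambda} p_{\lambda} \Tr(\rho_{\rm B}^{\lambda}) \rho_{\rm A}^{\lambda}$ and the analogous expression on the other side. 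With this observation, the proof amounts to three one-line calculations.
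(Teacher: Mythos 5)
Your proof is correct and is precisely the straightforward verification the paper has in mind: the paper states this proposition without proof, remarking only that the standard PPT criterion ``can be easily generalized'', and your three one-line calculations (linearity plus positivity of the transpose for \eqref{eq:con_ppt_a}, and linearity plus the per-$\lambda$ affine constraints and the normalisation $\Tr(\rho_{\rm A}^{\lambda})=\Tr(\rho_{\rm B}^{\lambda})=1$ for \eqref{eq:con_ppt_b} and \eqref{eq:con_ppt_c}) supply exactly the omitted argument. Your explicit remark about where the normalisation enters is a worthwhile point of care that the paper leaves implicit.
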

States $\rho_{\rm AB}$ that fulfil the conditions \eqref{eq:con_ppt_a}-\eqref{eq:con_ppt_c} hence form an outer approximation of the constrained separable states. Relaxing the constrained separability constraints to PPT enables an upper bound of the optimisation problem \eqref{eq:csep_problem}.

\subsubsection{Constrained symmetric extensions}
\label{sec:constrained_sym_extension}
A hierarchy of outer approximations of the set of constrained separable states is provided by so called $k-$constrained symmetric extendible states which generalise symmetric extendible states introduced in \cite{PhysRevA.69.022308}. The corresponding idea stems from the quantum de Finetti theorem that characterises exchangeable sequences of quantum states \cite{Caves2002, Christandl2007, PRXQuantum.3.010340}. The following characterisation of bipartite separable states is known. 
\begin{Theorem}[\cite{PhysRevA.69.022308, Christandl2007}]
\label{thm:dps_standard}
    Let $\rho_{\rm AB}$ be a bipartite quantum state such that there exists a sequence of states $\rho_{\textnormal{Sym}(\rm{A},k)\rm{B}}$ labelled by $k\in \mathbb{N}$ such that $\Tr_{\rm{A}^{k-1}}(\rho_{\textnormal{Sym}(\rm{A},k)\rm{B}}) = \rho_{\rm AB}$. Then every member of this sequence can be written as
    \begin{equation}
        \rho_{\textnormal{Sym}(\rm{A},k)\rm{B}} = \sum_{\lambda} p_{\lambda}^{(k)} \, {\rho_{\rm A}^{\lambda}}^{\otimes k} \otimes \rho_{\rm B}^{\lambda} 
    \end{equation}
    for some probability distribution $p_{\lambda}^{(k)}$ and quantum states $\rho_{A}^{\lambda}$ and $\rho_{B}^{\lambda}$. In particular, $\rho_{AB}$ is separable.
\end{Theorem}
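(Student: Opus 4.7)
My plan is to deduce this from the quantum de Finetti theorem with a spectator system $\rm B$. I would proceed in three main steps.

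First, I would upgrade the given sequence to a consistent tower. The hypothesis provides, for each $k$, a symmetric extension $\rho_{\textnormal{Sym}(\rm{A},k)\rm{B}}$ whose marginal on a single copy of $\rm{A}$ is $\rho_{\rm AB}$, but it does not explicitly require compatibility between different $k$. I would use a diagonal/compactness argument: since each state space $\mathcal{L}(\mathcal{H}_{\rm A}^{\otimes k}\otimes\mathcal{H}_{\rm B})$ is finite-dimensional and the set of density matrices is compact, I can pass to a subsequence along which, simultaneously for every $k$, the extensions converge, and by truncating obtain a consistent family $\{\tilde{\rho}^{(k)}_{\textnormal{Sym}(\rm{A},k)\rm{B}}\}_{k\in\mathbb{N}}$ with the nesting property
\begin{equation}
\Tr_{\rm{A}^{k'-k}}\bigl(\tilde{\rho}^{(k')}_{\textnormal{Sym}(\rm{A},k')\rm{B}}\bigr)=\tilde{\rho}^{(k)}_{\textnormal{Sym}(\rm{A},k)\rm{B}}\quad\text{for all }k'\geq k,
\end{equation}
and $\Tr_{\rm{A}^{k-1}}(\tilde{\rho}^{(k)})=\rho_{\rm AB}$. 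Permutation invariance is preserved by convex combinations and limits, so the tower remains symmetric in the $\rm A$-registers.

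Second, I would invoke the quantum de Finetti theorem in the form that handles a spectator system (Hudson–Moody; see also Caves–Fuchs–Schack and Christandl et al.): a state on $\rm{A}^{\infty}\otimes\rm{B}$ which is invariant under permutations of the $\rm A$-registers is a mixture of i.i.d.\ product states, i.e.\ admits an integral representation
\begin{equation}
\tilde{\rho}^{(k)}_{\textnormal{Sym}(\rm{A},k)\rm{B}}=\int d\mu(\sigma_{\rm A},\sigma_{\rm B})\,\sigma_{\rm A}^{\otimes k}\otimes\sigma_{\rm B}
\end{equation}
for a probability measure $\mu$ independent of $k$. Because $\mathcal{H}_{\rm A}$ and $\mathcal{H}_{\rm B}$ are finite-dimensional, by Carathéodory's theorem the integral collapses to a finite convex combination $\sum_{\lambda}p_{\lambda}(\rho_{\rm A}^{\lambda})^{\otimes k}\otimes\rho_{\rm B}^{\lambda}$, yielding the claimed form for every member of the extracted tower; setting $k=1$ gives separability of $\rho_{\rm AB}$.

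Third, I would argue that the decomposition for the \emph{original} $\rho_{\textnormal{Sym}(\rm{A},k)\rm{B}}$ follows from the one just obtained. Since permutation invariance forces a symmetric state on $\rm{A}^k\otimes\rm{B}$ to be determined, up to its free parameters, by data that are recovered by the tower construction at that level, and since at each fixed $k$ any member of the original sequence can be taken as the $k$-th element of a consistent tower (by re-running the compactness argument starting from that element), the representation transfers to it with possibly $k$-dependent weights $p_\lambda^{(k)}$ and local states.

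The main obstacle, I expect, is the first step: making the passage from "one extension per $k$" to "a single consistent infinite tower" fully rigorous while still concluding something about each original member of the sequence. A cleaner alternative, which is likely what the paper's appendix pursues, is to bypass the infinite de Finetti theorem entirely and use the \emph{finite} quantum de Finetti theorem quantitatively: each $\rho_{\textnormal{Sym}(\rm{A},k)\rm{B}}$ is trace-norm close to a mixture $\sum_{\lambda}p_{\lambda}^{(k)}(\rho_{\rm A}^{\lambda})^{\otimes k}\otimes\rho_{\rm B}^{\lambda}$, with error vanishing as the available extension length grows; combined with compactness and the hypothesis that extensions exist for all $k\in\mathbb{N}$, the error can be driven to zero, giving the exact representation in the limit and hence separability of $\rho_{\rm AB}$.
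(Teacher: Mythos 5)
The paper itself gives no proof of this statement --- it is imported verbatim from Refs.~\cite{PhysRevA.69.022308, Christandl2007} --- so there is no in-paper argument to compare against; I can only assess your attempt on its own terms. Your Steps 1 and 2 are the standard route and are sound: the diagonal/compactness extraction of a nested tower from the given sequence works (partial traces of Bose-symmetric states remain supported on the symmetric subspace, and the nesting relations pass to limits), the de Finetti theorem with a spectator system then yields an integral representation with a $k$-independent measure, and Carath\'eodory collapses it to a finite mixture. This already delivers everything the appendix actually uses downstream: $\rho_{\rm AB}$ is separable and, for every $k$, \emph{some} symmetric extension of the i.i.d.\ form exists.

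The genuine gap is your Step 3, and it cannot be repaired, because the clause ``every member of this sequence'' is false without a nesting hypothesis. Concretely, take $d_{\rm A}=2$, $\rho_{\rm AB}=\frac{\mathds{1}_{\rm A}}{2}\otimes\rho_{\rm B}$, set $\rho_{\textnormal{Sym}({\rm A},k){\rm B}}=\left(\frac{\mathds{1}_{\rm A}}{2}\right)^{\otimes k}\otimes\rho_{\rm B}$ for $k\neq 2$, and for $k=2$ take $\ketbra{\Psi}\otimes\rho_{\rm B}$ with $\ket{\Psi}=\frac{1}{\sqrt{2}}\left(\ket{01}+\ket{10}\right)$. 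Every member is supported on the symmetric subspace and has the required single-copy marginal, yet the $k=2$ member is pure and entangled across the two copies of $\rm{A}$, so it is not of the form $\sum_{\lambda}p_{\lambda}\,(\rho_{\rm A}^{\lambda})^{\otimes 2}\otimes\rho_{\rm B}^{\lambda}$. Moreover, being pure and entangled, it admits no symmetric extension to three copies at all, so your proposal to ``re-run the compactness argument starting from that element'' fails: an arbitrary member of a non-nested sequence need not embed into any consistent tower, and even when it does, the tower you reconstruct from the tails of the sequence has no reason to reproduce that particular member at level $k$. The theorem is correct --- and your Steps 1--2 prove it --- once one either adds the consistency condition $\Tr_{{\rm A}^{k'-k}}(\rho_{\textnormal{Sym}({\rm A},k'){\rm B}})=\rho_{\textnormal{Sym}({\rm A},k){\rm B}}$ for $k'\geq k$, as the cited references do, or weakens ``every member can be written'' to ``for every $k$ there exists an extension that can be written'', which is the form actually invoked in the proof of Theorem~\ref{thm:constrained_dps_convergence}.
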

The states $\rho_{\textnormal{Sym}(\rm{A},k)\rm{B}}$ are called symmetric extensions of $\rho_{\rm AB}$ of order $k$ and, if such an extension exists for some $k$, $\rho_{\rm AB}$ is said to be $k$-symmetric extendible. As such, the sets of $k$-symmetric extendible states for every $k \in \mathbb{N}$ serve as a converging hierarchy of outer approximations for the separable states.  
This concept admits a generalisation to constrained separable states in the following sense.

\begin{definition}[$k$-constrained symmetric extendible state]
    Let $\rho_{\rm AB}$ be a bipartite quantum state, $\Phi_{\rm A}: \mathcal{L}(\mathcal{H}_{\rm A}) \rightarrow V_{\rm A}$ and $\Psi_{\rm B}: \mathcal{L}(\mathcal{H}_{\rm B}) \rightarrow V_{\rm B}$ be linear maps, $V_{\rm A}$ and $V_{\rm B}$ be vector spaces, and let $a \in V_{\rm A}$ and $b \in V_{\rm B}$ be some constant vectors. Then $\rho_{\rm AB}$ is $k-$constrained symmetric extendible for $k \in \mathbb{N}$ with respect to $\Phi_{\rm A}$, $\Psi_{\rm B}$, $a$ and $b$ if there exists a quantum state $\rho_{\rm{A}^{k}\rm{B}}$ such that
    \begin{align} 
              \rho_{\rm AB} &= \Tr_{\rm{A}^{k-1}}(\rho_{\rm{A}^{k}\rm{B}}), \; \Tr(\rho_{\rm{A}^{k}\rm{B}})=1 \label{eq:extension} \\
              \rho_{\rm{A}^{k}\rm{B}} &= (U_{\sigma} \otimes \mathds{1}_{\rm B}) \rho_{\rm{A}^{k}\rm{B}} (U_{\sigma}^{\dagger} \otimes\mathds{1}_{\rm B}) \;\; \forall \sigma \in \mathcal{S}_{k} \label{eq:symmetry}\\
             \Tr_{\rm B}(\rho_{\rm{A}^k \rm{B}}) \otimes b &= (\textnormal{id}_{\rm{A}^{k}} \otimes \Psi_{\rm B})\rho_{\rm{A}^{k}\rm B} \label{eq:B_constrained}\\
             \Tr_{\rm AB}(\rho_{\rm{A}^{k}\rm{B}}) \otimes a &= (\textnormal{id}_{\rm{A}^{k-1}} \otimes \Phi_{\rm A}) \Tr_{\rm B}(\rho_{\rm{A}^k \rm{B}}) \label{eq:A_constrained}
    \end{align}
    where $\mathcal{S}_{k}$ is the symmetric group of $k$ elements and $\sigma \mapsto U_{\sigma}$ is its standard unitary representation on $(\mathbb{C}^{d_{\rm A}})^{\otimes k}$. 
\end{definition}
The conditions for $k$-constrained extendibility are semidefinite representable and the limit of $k \rightarrow \infty$ corresponds to the set of constrained separable states, which is formalised by the next theorem, first proven in Ref.~\cite{Berta2021} and generalised to arbitrary ordered vector spaces in \cite{aubrun2022monogamyentanglementcones}.
\begin{Theorem}[\cite{Berta2021}]
\label{thm:constrained_dps_convergence}
    A bipartite quantum state $\rho_{\rm AB}$ is constrained separable with respect to $\Phi_{\rm A}$, $\Psi_{\rm B}$, $a$ and $b$ if and only if it is $k-$constrained symmetric extendible for all $k \in \mathbb{N}$. 
\end{Theorem}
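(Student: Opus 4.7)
The easy direction---constrained separability implies $k$-constrained symmetric extendibility for every $k$---follows by a direct construction. Given a constrained separable decomposition $\rho_{\rm AB} = \sum_\lambda p_\lambda \rho_{\rm A}^\lambda \otimes \rho_{\rm B}^\lambda$, set
\begin{equation}
    \rho_{{\rm A}^k {\rm B}} := \sum_\lambda p_\lambda \, (\rho_{\rm A}^\lambda)^{\otimes k} \otimes \rho_{\rm B}^\lambda.
\end{equation}
The permutation symmetry \eqref{eq:symmetry} and the marginal condition \eqref{eq:extension} are manifest, and the affine constraints \eqref{eq:B_constrained}, \eqref{eq:A_constrained} follow by applying $\Psi_{\rm B}$ or $\Phi_{\rm A}$ summand-by-summand and using that each factor already satisfies $\Psi_{\rm B}(\rho_{\rm B}^\lambda) = b$ and $\Phi_{\rm A}(\rho_{\rm A}^\lambda) = a$.

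The converse direction is the substantive one, and I would attack it in three steps. Step one is to upgrade an arbitrary sequence of finite extensions $\{\rho_{{\rm A}^k{\rm B}}\}_{k\in\mathbb{N}}$ into a single, consistent infinite extension. Since the state spaces of ${\rm A}^n {\rm B}$ are compact and the defining conditions \eqref{eq:extension}--\eqref{eq:A_constrained} are closed, a diagonal extraction over $n$ produces subsequential limits of the $n$-marginals that assemble into an infinite exchangeable state $\rho_{{\rm A}^\infty {\rm B}}$ whose finite restrictions still satisfy symmetry and the two affine constraints. Step two is to apply the standard quantum de Finetti theorem to the ${\rm A}$ systems (with ${\rm B}$ playing the role of an ancillary register) to obtain a decomposition
\begin{equation}
    \rho_{{\rm A}^n {\rm B}} = \int_{S_{\rm A}} d\mu(\sigma_{\rm A}) \, \sigma_{\rm A}^{\otimes n} \otimes \tau_{\rm B}(\sigma_{\rm A})
\end{equation}
for a probability measure $\mu$ on the state space $S_{\rm A}$ of ${\rm A}$ and a measurable selection $\sigma_{\rm A} \mapsto \tau_{\rm B}(\sigma_{\rm A})$; taking $n=1$ already yields an unconstrained separable decomposition of $\rho_{\rm AB}$.

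Step three, and the main obstacle, is to promote this to a constrained separable decomposition by showing that the factors can be chosen to respect the affine constraints. Substituting the de Finetti form into \eqref{eq:B_constrained} yields, for every $n$,
\begin{equation}
    \int d\mu(\sigma_{\rm A}) \, \sigma_{\rm A}^{\otimes n} \otimes \left[\Psi_{\rm B}(\tau_{\rm B}(\sigma_{\rm A})) - b\right] = 0.
\end{equation}
Pairing with arbitrary product operators on ${\rm A}^n$ and varying both $n$ and those operators produces, via Stone--Weierstrass (polynomials in the matrix entries of $\sigma_{\rm A}$ are dense in $C(S_{\rm A})$), the identity $\int d\mu(\sigma_{\rm A}) f(\sigma_{\rm A}) [\Psi_{\rm B}(\tau_{\rm B}(\sigma_{\rm A})) - b] = 0$ for every continuous $f$; hence $\Psi_{\rm B}(\tau_{\rm B}(\sigma_{\rm A})) = b$ for $\mu$-almost every $\sigma_{\rm A}$. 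An analogous moment argument based on \eqref{eq:A_constrained}, after applying $\Phi_{\rm A}$ to one of the ${\rm A}$ systems, gives $\Phi_{\rm A}(\sigma_{\rm A}) = a$ $\mu$-a.e. Discarding the $\mu$-null set on which either constraint fails, the $n=1$ marginal is the desired constrained separable decomposition of $\rho_{\rm AB}$. The crux I anticipate is precisely this final upgrade from an integrated identity to a pointwise one: it is only because the extensions exist at \emph{all} orders $n$ that the symmetric powers $\sigma_{\rm A}^{\otimes n}$ are rich enough to separate $\mu$, and only then can the affine constraints genuinely be transferred from the multi-copy extensions onto the individual factors of the separable decomposition.
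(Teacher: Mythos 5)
Your proof is correct, and it follows the same overall architecture as the paper's: both directions reduce the constrained statement to the unconstrained symmetric-extension/de~Finetti theorem and then exploit the fact that extensions exist at \emph{all} orders to push the affine constraints from the global extension down onto the individual factors of the separable decomposition. The difference lies in how that final transfer is implemented. The paper first invokes Carath\'eodory's theorem to pass to a finite decomposition $\sum_{\lambda} p_{\lambda}^{(k)} (\rho_{\rm A}^{\lambda})^{\otimes k} \otimes \rho_{\rm B}^{\lambda}$ with pairwise linearly independent factors $\rho_{\rm A}^{\lambda}$, and then uses Lemma~\ref{lemma:linear_independence} (tensor powers of pairwise linearly independent vectors become linearly independent for large $k$) to construct dual operators $G^{\mu}_{{\rm A}^{k}}$ with $\Tr(G^{\mu}_{{\rm A}^{k}} (\rho_{\rm A}^{\lambda})^{\otimes k}) = \delta_{\mu\lambda}$; pairing these with \eqref{eq:B_constrained} and \eqref{eq:A_constrained} isolates each summand and yields $\Psi_{\rm B}(\rho_{\rm B}^{\mu}) = b$ and $\Phi_{\rm A}(\rho_{\rm A}^{\mu}) = a$ exactly. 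You instead keep the integral de~Finetti representation and separate the measure with polynomial moments via Stone--Weierstrass, concluding that the constraints hold $\mu$-almost everywhere; the paper's dual functionals $G^{\mu}$ are precisely the discrete counterpart of your separating polynomials $f$. Your route has the advantage of making explicit the passage from an arbitrary sequence of (possibly mutually inconsistent, non-product-form) finite extensions to a single infinite exchangeable state via compactness and diagonal extraction --- a point the paper elides by stating Theorem~\ref{thm:dps_standard} in a form in which every finite extension is already of exact product form. Conversely, the paper's argument is more elementary and self-contained (finite-dimensional linear algebra, no measure theory or measurable selections) and delivers a finite decomposition directly, whereas your version still needs one final application of Carath\'eodory to convert the almost-everywhere-constrained integral into the discrete convex combination required by Definition~\ref{def:con_sep}.
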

Similar to the unconstrained case, the convergence speed of the hierarchy can be improved by imposing additional PPT constraints on the extension $\rho_{\rm{A}^{k}\rm{B}}$. This works by adding the additional constraint
\begin{equation}
    (\rho_{\rm{A}^{k}\rm{B}})^{T_{\rm{A}^{l}}} \geq 0 \textnormal{ for all } l=1,\dots,k
\end{equation}
to the constraints \eqref{eq:extension}-\eqref{eq:A_constrained}.

Here, we present an alternative proof of the convergence of constrained symmetric extensions in Theorem \ref{thm:constrained_dps_convergence} that has first been proven in Ref.~\cite{Berta2021} based on information-theoretic constructions. Our proof follows a different logic and directly makes use of the standard symmetric extension Theorem \ref{thm:dps_standard}. 
Furthermore, we make use of the following result that is well known in theory of symmetric tensors, see for instance \cite[Lemma 4.4]{katz1982conjecture} or \cite[Corollary 4.4]{Comon2008}.
\begin{lemma}[\cite{katz1982conjecture,Comon2008}] 
\label{lemma:linear_independence}
    Let $v_{1},\dots,v_{r} \in \mathbb{C}^{n}$ be pairwise linearly independent vectors. Then for every $k \geq r - 1$, the vectors $v_{1}^{\otimes k},\dots, v_{r}^{\otimes k} \in \mathbb{C}^{n^{k}}$ are linearly independent.
\end{lemma}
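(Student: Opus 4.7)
The plan is to proceed by induction on $r$. The base case $r=1$ is immediate: pairwise linear independence forces $v_1 \neq 0$, so $v_1^{\otimes k} \neq 0$ is linearly independent as a singleton, for any $k \geq 0$.

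For the inductive step, assume the statement holds for any $r-1$ pairwise linearly independent vectors at every $k' \geq r-2$. Let $v_1,\dots,v_r$ be pairwise linearly independent, fix $k \geq r-1$, and suppose a dependence $\sum_{i=1}^{r} c_i\, v_i^{\otimes k} = 0$ holds. The key step is to find a linear functional $\phi \in (\mathbb{C}^n)^{*}$ with $\phi(v_r) = 0$ but $\phi(v_i) \neq 0$ for all $i < r$. Such a $\phi$ exists because $\{\phi : \phi(v_r)=0\}$ is a hyperplane $H$ in the dual space, while each condition $\phi(v_i)=0$ carves out a proper subspace of $H$: if it were all of $H$, then every functional vanishing on $v_r$ would also vanish on $v_i$, forcing $v_i \in \textnormal{span}(v_r)$ and contradicting pairwise linear independence. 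A finite union of proper subspaces cannot cover a vector space, so a suitable $\phi$ is available.

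Applying the contraction $\phi \otimes \textnormal{id}^{\otimes(k-1)}$ to the dependence yields
\begin{equation}
\sum_{i=1}^{r-1} c_i\, \phi(v_i)\, v_i^{\otimes(k-1)} = 0.
\end{equation}
Since $k-1 \geq r-2$, the inductive hypothesis guarantees that $v_1^{\otimes(k-1)}, \dots, v_{r-1}^{\otimes(k-1)}$ are linearly independent, hence $c_i\, \phi(v_i) = 0$ for every $i < r$. By construction each $\phi(v_i)$ is nonzero, forcing $c_i = 0$ for $i < r$. The original identity then collapses to $c_r v_r^{\otimes k} = 0$, and $v_r \neq 0$ gives $c_r = 0$, completing the induction.

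The main subtlety is the existence of the separating functional $\phi$; pairwise linear independence is precisely the hypothesis needed to peel vectors off one at a time via a single-tensor-slot contraction. The sharp bound $k \geq r-1$ then emerges naturally from the depth of the induction, since each inductive step consumes exactly one tensor factor.
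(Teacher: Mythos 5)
Your proof is correct, but it follows a genuinely different route from the paper's. The paper argues non-inductively: for each fixed index $i$ it builds, on the tensor power space itself, the functional $\psi = \bigl( \bigotimes_{j \neq i} \varphi_j \bigr) \otimes \xi^{\otimes (k-r+1)}$, where each $\varphi_j$ satisfies $\varphi_j(v_i)=1$, $\varphi_j(v_j)=0$ (supplied by a separate elementary lemma) and $\xi(v_i)=1$; this $\psi$ acts as a dual vector to $v_i^{\otimes k}$ and kills all the other $v_j^{\otimes k}$, so each coefficient is extracted in one shot. You instead induct on $r$, contracting a single tensor slot with one functional $\phi$ that vanishes on $v_r$ and is nonzero on all $v_i$ with $i<r$, thereby peeling off one vector per step; the existence of $\phi$ rests on the fact that a vector space over an infinite field is not a finite union of proper subspaces (each condition $\phi(v_i)=0$ cuts a proper subspace of the hyperplane $\{\phi : \phi(v_r)=0\}$, properness being exactly pairwise linear independence via the double-annihilator). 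Both arguments are elementary and both make the bound $k \geq r-1$ transparent — in the paper it is the number of tensor slots needed to host the $r-1$ separating functionals $\varphi_j$, in yours it is the depth of the induction, one slot consumed per step. Your separating-functional existence argument is slicker than the paper's hands-on Lemma on orthogonal functionals, at the cost of invoking the covering lemma for subspaces (harmless over $\mathbb{C}$). The only cosmetic quibble is the base case: for $r=1$ pairwise linear independence is vacuous and does not literally force $v_1 \neq 0$; one should either read the hypothesis as including nonvanishing of each $v_i$ (true automatically for $r \geq 2$) or start the induction at $r=2$ — the paper's proof has the same implicit assumption.
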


We will present an independent proof of Lemma~\ref{lemma:linear_independence}. This requires the following result that just states that if two vectors are linearly independent, then there is a functional that vanishes on one of them but not on the other. This is rather intuitive in the context of inner product spaces, nevertheless we will provide a formal proof. We will use $(\mathbb{C}^{n})^*$ to denote the dual vector space of $\mathbb{C}^{n}$.
\begin{lemma} \label{lemma:orthgonal_functional}
    Let $u, v \in \mathbb{C}^{n}$ be linearly independent vectors. Then there is a functional $\varphi \in (\mathbb{C}^{n})^*$ such that $\varphi(u) = 1$ and $\varphi(v) = 0$.
\end{lemma}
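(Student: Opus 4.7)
The statement is a routine finite-dimensional linear algebra fact, so the proof plan is short. The plan is to construct $\varphi$ explicitly by completing $\{u,v\}$ to a basis and taking the appropriate dual basis element.

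First I would note that since $u$ and $v$ are linearly independent, the set $\{u,v\}$ can be extended to a basis $\{u,v,w_1,\ldots,w_{n-2}\}$ of $\mathbb{C}^n$ (this uses only the basis extension theorem in finite dimensions). Then I would define $\varphi \in (\mathbb{C}^n)^*$ to be the unique linear functional specified on basis vectors by
\begin{equation}
\varphi(u) = 1, \qquad \varphi(v) = 0, \qquad \varphi(w_i) = 0 \text{ for } i = 1,\ldots,n-2.
\end{equation}
Since a linear map from $\mathbb{C}^n$ is determined uniquely by its values on a basis, this $\varphi$ is well-defined and has exactly the required properties.

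There is essentially no obstacle: the linear independence hypothesis is used precisely to ensure that specifying $\varphi(u) = 1$ and $\varphi(v) = 0$ is consistent (if $u$ and $v$ were proportional, no such $\varphi$ could exist) and to guarantee that $\{u, v\}$ can be extended to a basis. An equivalent one-line argument avoiding the basis extension is: the quotient map $\pi : \mathbb{C}^n \to \mathbb{C}^n / \mathrm{span}(v)$ sends $u$ to a nonzero vector (by linear independence), so by choosing any linear functional $\tilde{\varphi}$ on the quotient with $\tilde{\varphi}(\pi(u)) = 1$ and setting $\varphi = \tilde{\varphi} \circ \pi$, one obtains the desired functional.
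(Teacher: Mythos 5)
Your proof is correct, and it takes a genuinely different route from the paper's. You use the textbook-standard construction: extend $\{u,v\}$ to a basis of $\mathbb{C}^n$ and take the dual basis element associated with $u$ (or, equivalently, pass to the quotient $\mathbb{C}^n/\mathrm{span}(v)$). The paper instead avoids invoking the basis extension theorem and builds $\varphi$ by hand: it starts from a functional $\psi$ separating $u$ from $v$, runs a case analysis to reduce to the situation $0 \neq \psi(u) \neq \psi(v) \neq 0$, and then corrects $\psi$ by subtracting a suitable multiple of a second functional $\xi$ that separates the rescaled vector $u' = \frac{\psi(v)}{\psi(u)}u$ from $v$, yielding $\varphi' = \psi - \frac{\psi(v)}{\xi(v)}\xi$ with $\varphi'(v)=0$ and $\varphi'(u)\neq 0$. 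Your argument is considerably shorter and uses the linear independence hypothesis exactly where it matters (to guarantee the extension to a basis is possible, or that $\pi(u)\neq 0$ in the quotient); the paper's argument buys self-containedness at the cost of several case distinctions. Both are valid; yours is the more economical proof of this standard fact.
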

\begin{proof}
    Since $u \neq v$ then there is $\psi \in (\mathbb{C}^{n})^*$ such that $\psi(u) \neq \psi(v)$. We will now argue that we only need to address the case when $0 \neq \psi(u) \neq \psi(v) \neq 0$. If $\psi(v) = 0$, then $\varphi = \frac{\psi}{\psi(u)}$ is the functional existence of which we intend to prove. So we only need to treat the case when $\psi(v) \neq 0$.

    If $\psi(u) = 0$, then take any $\psi' \in (\mathbb{C}^{n})^*$ such that $\psi'(u) \neq 0$, such $\psi'$ must exist since $u \neq 0$, which itself follows from that $u$ and $v$ are linearly independent. We can now construct $\psi'' = \alpha \psi + \psi'$ where $\alpha \in \mathbb{C}$. We have $\psi''(u) = \psi'(u) \neq 0$, and we can choose $\alpha$ such that $\psi''(u) \neq \psi''(v)$, i.e., we can choose $\alpha$ such that $\psi'(u) \neq \alpha \psi(v) + \psi'(v)$, remember that we already argued that $\psi(v) \neq 0$. Finally, we can again only consider the case when $\psi''(v) \neq 0$ since if $\psi''(v) = 0$, then $\varphi = \frac{\psi''}{\psi''(u)}$ is the sought for functional.

    So let us consider the case when $0 \neq \psi(u) \neq \psi(v) \neq 0$. Denote $u' = \frac{\psi(v)}{\psi(u)} u$, then we have $\psi(u') = \psi(v)$, but since $u$ and $v$ are linearly independent we must have $u' \neq v$. But then there must exist a functional $\xi \in (\mathbb{C}^{n})^*$ such that $\xi(u') \neq \xi(v)$. Again if $\xi(v) = 0$, then we can construct the required functional $\varphi$ as above, so we only need to treat the case when $\xi(v) \neq 0$. We can now construct
    \begin{equation}
        \varphi' = \psi - \dfrac{\psi(v)}{\xi(v)} \xi.
    \end{equation}
    We have $\varphi'(v) = 0$ as requested. Moreover, we have
    \begin{equation}
        \varphi'(u) = \psi(u) - \dfrac{\psi(v)}{\xi(v)} \xi(u) = \psi(u) - \dfrac{\psi(u)}{\xi(v)} \xi(u') = \psi(u) \left( 1 - \dfrac{\xi(u')}{\xi(v)} \right).
    \end{equation}
    We already argued that $\psi(u) \neq 0$, moreover we have $\xi(u') \neq \xi(v)$ and so $1 - \frac{\xi(u')}{\xi(v)} \neq 0$. It follows that $\varphi'(u) \neq 0$. Thus, $\varphi = \frac{\varphi'}{\varphi'(u)}$ is the requested functional.
\end{proof}

\begin{proof}[Proof of Lemma~\ref{lemma:linear_independence}]
    Fix $k \geq r-1$ and $i \in \{1, \ldots, r\}$. Then for every $j \in \{1, \ldots, r\}$, $j \neq i$, there is a functional $\varphi_j \in (\mathbb{C}^{n})^*$ such that $\varphi_j(v_i) = 1$ and $\varphi_j(v_j) = 0$; the existence of such functional is due to Lemma~\ref{lemma:orthgonal_functional}. Moreover, let $\xi \in (\mathbb{C}^{n})^*$ be any functional such that $\xi(v_i) = 1$. Then we construct $\psi \in (\mathbb{C}^{n^{k}})^*$ as follows:
    \begin{equation}
        \psi = \left( \bigotimes_{j \in \{1, \ldots, r\}, j \neq i} \varphi_j \right) \otimes \xi^{\otimes (k - r + 1)}.
    \end{equation}
    Then we have $\psi(v_i^{\otimes k}) = 1$ and $\psi(v_j^{\otimes k}) = 0$ for $j \in \{1, \ldots, r\}$, $j \neq i$. Now assume there are $\alpha_1, \dots, \alpha_r \in \mathbb{C}$ such that $\sum_{n=1}^r \alpha_n v_n^{\otimes k} = 0$. Then we have
    \begin{equation}
        0 = \psi\left(\sum_{n=1}^r \alpha_n v_n^{\otimes k}\right) = \sum_{n=1}^r \alpha_n \psi(v_n^{\otimes k}) = \alpha_i.
    \end{equation}
    Using the same construction for every index, we get $\alpha_i = 0$ for every $i \in \{1, \ldots, r\}$ and thus the vectors are linearly independent.
\end{proof}

Now, we have prepared everything to prove Theorem \ref{thm:constrained_dps_convergence}.

\begin{proof}[Proof of Theorem~\ref{thm:constrained_dps_convergence}]
First, if $\rho_{\rm AB}$ is constrained separable, then it can be readily confirmed that it is $k$-constrained symmetric extendible.

Assume now that $\rho_{\rm AB}$ is $k$-constrained symmetric extendible for all $k\in \mathbb{N}$. This implies that $\rho_{\rm AB}$ is symmetric extendible in the usual sense that there exists a sequence of states $\rho_{\rm{A}^{k}\rm{B}}$ for $k \in \mathbb{N}$ such that $\rho_{\rm AB} = \Tr_{\rm{A}^{k-1}}(\rho_{\rm{A}^{k}\rm{B}})$. 
By Theorem \ref{thm:dps_standard}, we know that for every $k\in \mathbb{N}$ there exists a symmetric extension $\rho_{\rm{A}^{k}\rm{B}}$ of $\rho_{\rm AB}$ of the form
\begin{equation}
\label{eq:con_ext_proof}
    \rho_{\rm{A}^{k}\rm{B}} = \sum_{\lambda} p_{\lambda}^{(k)} \, {(\rho_{\rm A}^{\lambda})}^{\otimes k} \otimes \rho_{\rm B}^{\lambda}  
\end{equation}
which implies that $\rho_{\rm AB}$ can be written  as 
\begin{equation}
\label{eq:con_sep_proof}
    \rho_{\rm AB} = \sum_{\lambda} p_{\lambda}^{(k)} \, \rho_{\rm A}^{\lambda} \otimes \rho_{\rm B}^{\lambda}. 
\end{equation}
We now show that the existence of extensions of the form \eqref{eq:con_ext_proof} implies that $\rho_{\rm AB}$ is constrained separable, i.e., that $\Phi_{\rm A}(\rho_{\rm A}^{\lambda}) = a$ and $\Psi_{\rm B}(\rho_{\rm B}^{\lambda}) = b$ for all $\lambda$. First, by Carathéodory's theorem \cite{ECKHOFF1993}, it can without loss of generality be assumed that the sums in Eq.~\eqref{eq:con_ext_proof} and \eqref{eq:con_sep_proof} consist of finitely many summands and that the operators $\{\rho_{\rm A}^{\lambda}\}_{\lambda=1}^{n}$ with $n \in \mathbb{N}$ are pairwise linearly independent. Then, by Lemma \ref{lemma:linear_independence}, we have that $\{{(\rho_{\rm A}^{\lambda})}^{\otimes k}\}_{\lambda = 1}^{n}$ is linearly independent for $k \geq n-1$ which implies the existence of a set of operators $\{G_{\rm{A}^{k}}^{\lambda}\}_{\lambda = 1}^{n}$ such that $\Tr(G_{\rm{A}^{k}}^{\mu} {(\rho_{\rm{A}}^{\lambda})}^{\otimes k}) = \delta_{\mu \lambda}$. It is in the following assumed that $k\geq n$.

From the previous discussion, we have that 
\begin{align}
    p_{\mu}^{(k)} \, b &= \Tr_{\rm{A}^{k}}\left[(G_{\rm{A}^{k}}^{\mu} \otimes \mathds{1}_{\rm B}) \sum_{\lambda=1}^{n} p_{\lambda}^{(k)} \, {(\rho_{\rm A}^{\lambda})}^{\otimes k} \otimes b \right] \\
    &= \Tr_{\rm{A}^{k}}\left[(G_{\rm{A}^{k}}^{\mu} \otimes \mathds{1}_{\rm B}) \Tr_{\rm B}(\rho_{\rm{A}^{k}\rm{B}}) \otimes b\right]\\
    &\overset{\eqref{eq:B_constrained}}{=}\Tr_{\rm{A}^{k}}\left[(G_{\rm{A}^{k}}^{\mu} \otimes \mathds{1}_{\rm B}) (\textnormal{id}_{\rm{A}^{k}} \otimes \Psi_{\rm B}) \rho_{\rm{A}^{k}\rm{B}} \right]\\
    &= \Tr_{\rm{A}^{k}}\left[(G_{\rm{A}^{k}}^{\mu} \otimes \mathds{1}_{\rm{B}}) \sum_{\lambda=1}^{n} p_{\lambda}^{(k)} \, {(\rho_{\rm A}^{\lambda})}^{\otimes k} \otimes \Psi_{\rm B}(\rho_{\rm B}^{\lambda}) \right] \\
    &= p_{\mu} \, \Psi_{\rm B}(\rho_{\rm B}^{\mu})
\end{align}
for all $\mu = 1, \dots, n$ so that the operators $\rho_{\rm B}^{\lambda}$ fulfill the affine constraint. 

Let now $\{\widetilde{G}_{\rm{A}^{k-1}}^{\lambda}\}_{\lambda = 1}^{n}$ be a set of operators such that $\Tr(\widetilde{G}_{\rm{A}^{k-1}}^{\mu} {(\rho_{\rm A}^{\lambda})}^{\otimes k-1}) = \delta_{\mu \lambda}$. This leads to 
\begin{align}
    p_{\mu}^{(k)} \, a &= \Tr_{\rm{A}^{k-1}}\left[(\widetilde{G}_{\rm{A}^{k-1}}^{\mu} \otimes \mathds{1}_{\rm A}) \sum_{\lambda=1}^{n} p_{\lambda}^{(k)} \, {(\rho_{\rm A}^{\lambda})}^{\otimes (k-1)} \otimes a \right] \\
    &= \Tr_{\rm{A}^{k-1}}\left[(\widetilde{G}_{\rm{A}^{k-1}}^{\mu} \otimes \mathds{1}_{\rm A}) \Tr_{\rm AB}(\rho_{\rm{A}^{k}\rm{B}}) \otimes a\right]\\
    &\overset{\eqref{eq:A_constrained}}{=}\Tr_{\rm{A}^{k-1}}\left[(\widetilde{G}_{\rm{A}^{k-1}}^{\mu} \otimes \mathds{1}_{\rm A}) (\textnormal{id}_{\rm{A}^{k-1}} \otimes \Phi_{\rm A}) \Tr_{\rm B}(\rho_{\rm{A}^{k}\rm{B}}) \right]\\
    &= \Tr_{\rm{A}^{k-1}}\left[(\widetilde{G}_{\rm{A}^{k-1}}^{\mu} \otimes \mathds{1}_{\rm A})  \sum_{\lambda=1}^{n} p_{\lambda}^{(k)} \, {(\rho_{\rm A}^{\lambda})}^{\otimes k-1} \otimes \Phi_{\rm A}(\rho_{\rm A}^{\lambda}) \right] \\
    &= p_{\mu}^{(k)} \, \Phi_{\rm A}(\rho_{\rm A}^{\mu})
\end{align}
 for all $\mu = 1,\dots,n$ so that also the operators $\rho_{\rm B}^{\lambda}$ fulfil the affine constraints. This proves that $\rho_{\rm AB}$ is constrained separable. 
 \end{proof}

\subsection{Proof of Theorem \ref{prop:symmetric_ext_tester}}
\label{sec:mem_test_con_sep}
In this appendix, we prove the characterisation of memory-less testers by a hierarchy of semidefinite programs that is given in Theorem \ref{prop:symmetric_ext_tester} in the main text. 
\begin{theorem*}
    A single-copy tester $\{T_{\rm IO}^{i}\}_{i=1}^{N}$ is a convex combination of memory-less testers if and only if for all $k \in \mathbb{N}$ there exists a tester (not necessarily memory-less) $\{T_{\textnormal{Sym}(\rm{I},k) \rm{O}}^{i}\}_{i=1}^{N}$ such that $T_{\rm IO}^{i} = \Tr_{\rm{I}^{k-1}}(T_{\textnormal{Sym}(\rm{I},k) \rm{O}}^{i})$ where $\textnormal{Sym}(\rm{I},k)$ is the system that corresponds to the symmetric subspace of the Hilbert space of $k$ copies of the input system $\rm{I}$.
\end{theorem*}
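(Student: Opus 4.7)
The plan is to reduce the statement to the general convergence theorem for constrained symmetric extensions, Theorem~\ref{thm:constrained_dps_convergence}, via the block-matrix reformulation already used in Section~\ref{sec:single_copy}. To a candidate tester $\{T^{i}_{\rm IO}\}_{i=1}^{N}$ I associate the block-diagonal operator $\underline{T}_{\rm NIO} := \sum_{i=1}^{N} \ketbra{i}{i}_{\rm N} \otimes T^{i}_{\rm IO}$, rescaled by $1/d_{\rm O}$ to obtain a genuine bipartite quantum state $\widetilde{T}_{\rm NIO}$ on the systems $\rm{I}$ and $\rm{NO}$. A direct rewriting then shows that $\{T^{i}_{\rm IO}\}_{i=1}^{N}$ is a convex combination of memory-less testers if and only if $\widetilde{T}_{\rm NIO}$ is constrained separable across the bipartition $\rm{I}\,|\,\rm{NO}$, with trivial affine constraint on the $\rm{I}$ factor and the single affine constraint $\Psi_{\rm NO}(\rho_{\rm NO}) := \Tr_{\rm N}(\rho_{\rm NO}) = \mathds{1}_{\rm O}/d_{\rm O}$ on the $\rm{NO}$ factor.

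For the ``only if'' direction, starting from a decomposition $T^{i}_{\rm IO} = \sum_{\lambda} p_{\lambda}\, \rho^{\lambda}_{\rm I} \otimes (M^{i,\lambda}_{\rm O})^{T}$, I promote each $\rho^{\lambda}_{\rm I}$ to its $k$-fold tensor power and set
\begin{equation}
    T^{i}_{\textnormal{Sym}(\rm{I},k)\rm{O}} := \sum_{\lambda} p_{\lambda}\, (\rho^{\lambda}_{\rm I})^{\otimes k} \otimes (M^{i,\lambda}_{\rm O})^{T}.
\end{equation}
These operators are supported on the symmetric subspace, and summing over $i$ produces $\sigma_{\rm{I}^{k}} \otimes \mathds{1}_{\rm O}$ with $\sigma_{\rm{I}^{k}} := \sum_{\lambda} p_{\lambda} (\rho^{\lambda}_{\rm I})^{\otimes k}$ a valid state, so Proposition~\ref{prop:single_copy_phys_char} certifies that $\{T^{i}_{\textnormal{Sym}(\rm{I},k)\rm{O}}\}_{i=1}^{N}$ is a single-copy tester on $\textnormal{Sym}(\rm{I},k)\otimes \rm{O}$; tracing out $k-1$ copies of $\rm{I}$ recovers $T^{i}_{\rm IO}$ by construction.

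For the ``if'' direction I reverse this procedure. From the assumed extensions I form the block operator $\underline{T}_{\rm N\textnormal{Sym}(\rm{I},k)\rm{O}} := \sum_{i} \ketbra{i}{i}_{\rm N} \otimes T^{i}_{\textnormal{Sym}(\rm{I},k)\rm{O}}$, rescale by $1/d_{\rm O}$, and verify directly the defining conditions of a $k$-constrained symmetric extension of $\widetilde{T}_{\rm NIO}$: the marginal condition \eqref{eq:extension} holds by the assumed partial trace identity; the $\mathcal{S}_{k}$-symmetry \eqref{eq:symmetry} follows from support on the symmetric subspace; the constraint \eqref{eq:B_constrained} reduces to $\sum_{i} T^{i}_{\textnormal{Sym}(\rm{I},k)\rm{O}} = \sigma_{\rm{I}^{k}} \otimes \mathds{1}_{\rm O}$, which holds by Proposition~\ref{prop:single_copy_phys_char} applied to the extension; and \eqref{eq:A_constrained} is vacuous because no affine condition on $\rm{I}$ is imposed beyond trace normalisation. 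Theorem~\ref{thm:constrained_dps_convergence} then delivers constrained separability of $\widetilde{T}_{\rm NIO}$, and unwinding the block-matrix reformulation produces the desired decomposition into memory-less testers.

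The main care needed is in the setup phase: identifying the correct ambient constrained separability problem (the maps $\Phi_{\rm A}, \Psi_{\rm B}$ and constants $a, b$) so that ``constrained separable state on $\rm{I}\,|\,\rm{NO}$'' coincides exactly with ``convex combination of memory-less tester elements'' after the $1/d_{\rm O}$ rescaling, and then checking that the extensions furnished by the symmetric-subspace tester structure satisfy precisely these affine constraints. Once this dictionary is in place, the bulk of the work is outsourced to the convergence of the constrained symmetric extension hierarchy, Theorem~\ref{thm:constrained_dps_convergence}.
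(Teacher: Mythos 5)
Your argument is correct, but it is not the route the paper takes. The paper explicitly notes that the statement ``can be directly derived as a consequence of Theorem~\ref{thm:constrained_dps_convergence}'' --- which is exactly your strategy --- and then deliberately gives an alternative, more elementary proof. The paper first establishes Proposition~\ref{prop:sep_state_prep}: a tester is a convex combination of memory-less testers if and only if it can be written as $T_{\rm IO}^{i} = \rho_{\rm IE} * M_{\rm EO}^{i}$ with $\rho_{\rm IE}$ \emph{separable}. This pushes the entire separability structure onto the input state, so that only the standard unconstrained symmetric-extension theorem (Theorem~\ref{thm:dps_standard}) is needed; the extension tester is then obtained in one line by commuting $\Tr_{{\rm I}^{k-1}}$ through the link product, $T_{\rm IO}^{i} = \Tr_{{\rm I}^{k-1}}(\rho_{\textnormal{Sym}({\rm I},k){\rm E}}) * M_{\rm EO}^{i} = \Tr_{{\rm I}^{k-1}}(\rho_{\textnormal{Sym}({\rm I},k){\rm E}} * M_{\rm EO}^{i})$. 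Your approach instead builds the full dictionary to a constrained separability problem on ${\rm I}\,|\,{\rm NO}$ and invokes the heavier constrained convergence theorem; what this buys is uniformity --- it is the same template the paper uses for the adaptive and classically adaptive testers, where the measurement-side affine constraint is genuinely non-trivial and no reduction to an unconstrained separable input state is available. What the paper's route buys is economy and a cleaner physical reading (memory-less $=$ separable state preparation), at the price of being specific to the single-copy case.

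Two small points to tighten. First, in your ``only if'' direction you claim $\sum_{\lambda} p_{\lambda} (\rho_{\rm I}^{\lambda})^{\otimes k} \otimes (M_{\rm O}^{i,\lambda})^{T}$ is supported on the symmetric subspace; for mixed $\rho_{\rm I}^{\lambda}$ it is only permutation-invariant. This is harmless --- the paper's extendibility condition \eqref{eq:symmetry} is permutation invariance, and you may in any case refine each mixed $\rho_{\rm I}^{\lambda}$ into its pure components (each $\ketbra{\psi_{\mu}}{\psi_{\mu}} \otimes (M_{\rm O}^{i,\lambda})^{T}$ is still memory-less) to get genuine support on $\textnormal{Sym}({\rm I},k)$. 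Second, when you ``unwind'' the constrained separable decomposition of $\widetilde{T}_{\rm NIO}$ at the end, the factors $\rho_{\rm NO}^{\lambda}$ need not individually be block-diagonal in ${\rm N}$; one should apply the dephasing (pinching) channel on ${\rm N}$, which leaves $\widetilde{T}_{\rm NIO}$ invariant, preserves positivity and the constraint $\Tr_{\rm N}(\cdot) = \mathds{1}_{\rm O}/d_{\rm O}$, and makes each factor block-diagonal so that the POVMs $\{M_{\rm O}^{i,\lambda}\}_{i}$ can be read off. With these points made explicit the proof is complete.
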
 

On the one hand, this characterisation of memory-less testers can be directly derived as a consequence of Theorem \ref{thm:constrained_dps_convergence}.
Here we show an alternative proof of the theorem, in which we make use of the following characterisation of convex combinations of memory-less testers. 

\begin{proposition}
\label{prop:sep_state_prep}
    A  single-copy tester $\{T_{\rm IO}^{i}\}_{i=1}^{N}$ is a convex combination of memory-less testers if and only if there exists a separable state $\rho_{\rm IE}$ and a measurement $\{M_{\rm EO}^{i}\}_{i=1}^{N}$ such that $T_{\rm IO}^{i} = \rho_{\rm IE} * M_{\rm EO}^{i}$.
\end{proposition}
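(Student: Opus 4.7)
The plan is to prove both directions constructively, exploiting the linearity of the link product and the classical–quantum structure of separable states.

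For the forward direction, suppose $T_{\rm IO}^{i} = \sum_\lambda p_\lambda \, \rho_{\rm I}^\lambda \otimes (M_{\rm O}^{i,\lambda})^T$ is a convex combination of memory-less testers. I would introduce an auxiliary system $\rm{E}$ of dimension equal to the number of summands with orthonormal basis $\{\ket{\lambda}\}$, and set
\begin{align}
\rho_{\rm IE} &:= \sum_\lambda p_\lambda \, \rho_{\rm I}^\lambda \otimes \ketbra{\lambda}{\lambda}_{\rm E}, \\
M_{\rm EO}^{i} &:= \sum_\lambda \ketbra{\lambda}{\lambda}_{\rm E} \otimes M_{\rm O}^{i,\lambda}.
\end{align}
The state $\rho_{\rm IE}$ is manifestly separable (classical–quantum), and since $\{M_{\rm O}^{i,\lambda}\}_i$ is a POVM for every $\lambda$, summing $M_{\rm EO}^{i}$ over $i$ yields $\mathds{1}_{\rm EO}$. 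A direct computation from Definition~\ref{def:link_product}, using that the block-diagonal structure in $\rm{E}$ survives the partial trace, then gives $\rho_{\rm IE} * (M_{\rm EO}^{i})^T = T_{\rm IO}^{i}$, so up to transposing the POVM effects (which preserves the POVM property) we get the claimed decomposition.

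For the converse direction, suppose $T_{\rm IO}^{i} = \rho_{\rm IE} * M_{\rm EO}^{i}$ with $\rho_{\rm IE} = \sum_\lambda p_\lambda \, \rho_{\rm I}^\lambda \otimes \sigma_{\rm E}^\lambda$ separable and $\{M_{\rm EO}^{i}\}$ a POVM. By bilinearity of the link product and the fact that the link product of a tensor product state with an operator on $\rm{EO}$ factorises through the partial trace, I get
\begin{equation}
T_{\rm IO}^{i} = \sum_\lambda p_\lambda \, \rho_{\rm I}^\lambda \otimes \widetilde{M}_{\rm O}^{i,\lambda}, \quad \widetilde{M}_{\rm O}^{i,\lambda} := \Tr_{\rm E}\!\left[\left((\sigma_{\rm E}^\lambda)^T \otimes \mathds{1}_{\rm O}\right) M_{\rm EO}^{i}\right].
\end{equation}
Two things need to be checked to identify this as a convex mixture of memory-less testers: that each $\widetilde{M}_{\rm O}^{i,\lambda}$ is positive, and that $\sum_i \widetilde{M}_{\rm O}^{i,\lambda} = \mathds{1}_{\rm O}$. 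Positivity is the key technical step: writing $\sigma_{\rm E}^\lambda$ in its eigenbasis reduces $\widetilde{M}_{\rm O}^{i,\lambda}$ to a nonnegative combination of diagonal blocks of $M_{\rm EO}^{i}$, which are positive because $M_{\rm EO}^{i}\geq 0$. The normalisation condition is a short calculation: $\sum_i M_{\rm EO}^{i} = \mathds{1}_{\rm EO}$ and $\Tr(\sigma_{\rm E}^\lambda)=1$ together yield $\mathds{1}_{\rm O}$.

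The main obstacle I anticipate is the positivity check for $\widetilde{M}_{\rm O}^{i,\lambda}$, because the product $((\sigma_{\rm E}^\lambda)^T \otimes \mathds{1}_{\rm O})\, M_{\rm EO}^{i}$ is in general not self-adjoint, so one cannot simply invoke ``partial trace of a positive operator''. Expanding in the eigenbasis of $\sigma_{\rm E}^\lambda$ as sketched above cleanly bypasses this, and after that the rest of the proof is bookkeeping, identifying $\widetilde{M}_{\rm O}^{i,\lambda}$ (or its transpose, depending on convention) with the measurement in the memory-less form \eqref{eq:memory_less_tester_element}.
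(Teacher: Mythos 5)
Your proof is correct and follows essentially the same route as the paper: the classical--quantum state with block-diagonal measurement for the forward direction, and the identification $\widetilde{M}_{\rm O}^{i,\lambda} = \Tr_{\rm E}[((\sigma_{\rm E}^\lambda)^T \otimes \mathds{1}_{\rm O}) M_{\rm EO}^{i}]$ for the converse. Your explicit positivity check via the eigenbasis of $\sigma_{\rm E}^\lambda$ fills in a detail the paper leaves implicit, and your handling of the transpose convention is consistent with Definition~\ref{def:sc_tester}.
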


\begin{proof}
    Let us first assume that the tester $\{T_{\rm IO}^{i}\}_{i=1}^{N}$ is a convex combination of memory-less testers, i.e.,
    \begin{equation}
        T_{\rm IO}^{i} = \sum_{\lambda} p_{\lambda} \; \Tilde{\rho}_{\rm I}^{\lambda} \otimes \Tilde{M}_{\rm O}^{i|\lambda}
    \end{equation}
    where $\Tilde{\rho}_{\rm I}^{\lambda}$ are states and $\{\Tilde{M}_{\rm O}^{i|\lambda}\}_{i=1}^{N}$ are measurements for all $\lambda$. Then we define a state $\rho_{\rm IE}$ and a measurement $M_{\rm EO}^{i}$ as 
    \begin{align}
        \rho_{\rm IE} &= \sum_{\lambda} p_{\lambda} \; \Tilde{\rho}_{\rm I}^{\lambda} \otimes \ketbra{\lambda}_{\rm E} \\
        M_{\rm EO}^{i} &= \sum_{\lambda} \ketbra{\lambda}_{\rm E} \otimes \Tilde{M}_{\rm O}^{i|\lambda}
    \end{align}
    In particular, notice that $\rho_{\rm IE}$ is a separable state. Now a straightforward calculation reveals that  $T_{\rm IO}^{i} =  \rho_{\rm IE} * M_{\rm EO}^{i}$ finishing the first part of the proposition.

    Let us now assume that there is a separable state $\rho_{\rm IE}$ and some measurement $M_{\rm EO}^{i}$ such that $T_{\rm IO}^{i} = \rho_{\rm IE} * M_{\rm EO}^{i}$. Then one calculates
    \begin{align}
        T_{\rm IO}^{i} &= \rho_{\rm IE} * M_{\rm EO}^{i} \\
        &= \sum_{\lambda} p_{\lambda} \; \Tr_{\rm E}\left[(\rho_{\rm I}^{\lambda} \otimes (\rho_{\rm E}^{\lambda})^{T} \otimes \mathds{1}_{\rm O})(\mathds{1}_{\rm I} \otimes M_{\rm EO}^{i})\right] \\
        &= \sum_{\lambda} p_{\lambda} \; \rho_{\rm I}^{\lambda} \otimes \Tr_{\rm E}[((\rho_{\rm E}^{\lambda})^{T} \otimes \mathds{1}_{\rm O}) M_{\rm EO}^{i}] \\
        &= \sum_{\lambda} p_{\lambda} \; \rho_{\rm I}^{\lambda} \otimes \Tilde{M}_{\rm O}^{i|\lambda}
    \end{align}
    where $\Tilde{M}_{\rm O}^{i|\lambda} := \Tr_{\rm E}[((\rho_{\rm E}^{\lambda})^{T} \otimes \mathds{1}_{\rm O}) M_{\rm EO}^{i}]$ are POVM elements on the system $\rm{O}$ for every $\lambda$ which implies that $\{T_{\rm IO}^{i}\}_{i=1}^{N}$ is a convex combination of memory-less testers.
\end{proof} 

\begin{proof}[Proof of Theorem \ref{prop:symmetric_ext_tester}]
    By Proposition \ref{prop:sep_state_prep}, convex combinations of memory-less testers are exactly given by $T_{\rm IO}^{i} = \rho_{\rm IE} * M_{\rm EO}^{i}$ where $\rho_{\rm IE}$ is a separable state. By Theorem \ref{thm:dps_standard}, $\rho_{\rm IE}$ is a separable if and only if there exists symmetric extensions $\rho_{\textnormal{Sym}(\rm{I},k)\rm{E}}$ for all $k\in N$ of $\rho_{\rm IE}$. Hence, one has 
\begin{align}
    T_{\rm IO}^{i} &= \rho_{\rm IE} * M_{\rm EO}^{i} \\
    &= \Tr_{\rm{I}^{k-1}}(\rho_{\textnormal{Sym}(\rm{I},k)\rm{E}}) * M_{\rm EO}^{i} \\
    & = \Tr_{\rm{I}^{k-1}}(\rho_{\textnormal{Sym}(\rm{I},k)\rm{E}} * M_{\rm EO}^{i}) \\
    &= \Tr_{\rm{I}^{k-1}}(T_{\textnormal{Sym}(\rm{I},k) \rm{O}}^{i})
\end{align}
where $\{T_{\textnormal{Sym}(\rm{I},k) \rm{O}}^{i}\}_{i=1}^{N}$ is a general single-copy tester with input system $\textnormal{Sym}(\rm{I},k)$ and output system $\rm{O}$ which completes the proof.
\end{proof}

\section{Proof of Theorem \ref{prop:shijun}}
\label{app:square_root_impossible}
\begin{proposition*}
    It is not possible to perfectly discriminate the ensemble given by the unitary operators $\{\mathds{1}, \sqrt{\sigma_{x}}, \sqrt{\sigma_{y}}, \sqrt{\sigma_{z}}\}$, explicitly defined by 
    \begin{equation*}
\mathds{1} =
\begin{pmatrix} 1&0\\0&1 \end{pmatrix}
\qquad
\sqrt{\sigma_x} =
\frac{e^{i\pi/4}}{\sqrt{2}}\begin{pmatrix} 1&i\\i&1 \end{pmatrix}
\qquad
\sqrt{\sigma_y} = \frac{e^{i\pi/4}}{\sqrt{2}}\begin{pmatrix} 1&-1\\1&1 \end{pmatrix}
\qquad
\sqrt{\sigma_z} = \begin{pmatrix} 1&0\\0&i \end{pmatrix}
\end{equation*}
using two-copy parallel testers. 
\end{proposition*}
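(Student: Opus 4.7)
The plan is to derive a contradiction from only three of the six pairwise orthogonality conditions implied by a perfect two-copy parallel tester. A perfect parallel strategy requires a pure state $\ket{\eta}_{I_1 I_2 E}$ (with some auxiliary system $E$) such that the four output vectors $(U_i \otimes U_i \otimes \mathds{1}_E)\ket{\eta}$, for $i \in \{0,1,2,3\}$, are pairwise orthogonal. My first step is to show that this requirement depends only on the two-qubit reduced state $\rho := \Tr_E(\ket{\eta}\!\bra{\eta})$ via the condition
\begin{equation*}
\Tr\bigl[\rho \, (U_i^\dagger U_j)^{\otimes 2}\bigr] = 0 \quad \text{for every pair } i \neq j,
\end{equation*}
which follows directly from $\bra{\eta}(U_i^\dagger U_j)^{\otimes 2} \otimes \mathds{1}_E \ket{\eta} = \Tr[\rho (U_i^\dagger U_j)^{\otimes 2}]$. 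This step eliminates the dependence on $E$.

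Next, I restrict to the three conditions coming from the pairs $(U_0, U_1)$, $(U_0, U_2)$ and $(U_0, U_3)$, which give $\Tr[\rho \sqrt{\sigma_k}^{\otimes 2}] = 0$ for $k \in \{x, y, z\}$. A direct calculation shows that each $\sqrt{\sigma_k}$ has spectrum $\{1, i\}$, so $\sqrt{\sigma_k}^{\otimes 2}$ has spectrum $\{1, i, i, -1\}$ with the eigenvalue $i$ occupying a two-dimensional subspace. Writing the trace in the eigenbasis of $\sqrt{\sigma_k}^{\otimes 2}$ and separating real and imaginary parts, the vanishing imaginary part forces the diagonal weight of $\rho$ on the $i$-eigenspace to be zero. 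Since $\rho \geq 0$, diagonal entries equal to zero force the corresponding rows and columns to vanish as well, and hence $\rho$ must be supported entirely on the two-dimensional $\pm 1$-eigenspace of $\sqrt{\sigma_k}^{\otimes 2}$.

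The final step is to identify these three two-dimensional subspaces explicitly in the Bell basis $\{\ket{\phi^{\pm}}, \ket{\psi^{\pm}}\}$. Using the eigenvectors of $\sigma_x, \sigma_y, \sigma_z$, a short change-of-basis computation shows that the $\pm 1$-eigenspaces of $\sqrt{\sigma_x}^{\otimes 2}, \sqrt{\sigma_y}^{\otimes 2}, \sqrt{\sigma_z}^{\otimes 2}$ are $\textnormal{span}(\ket{\phi^+}, \ket{\psi^+})$, $\textnormal{span}(\ket{\phi^-}, \ket{\psi^+})$ and $\textnormal{span}(\ket{\phi^+}, \ket{\phi^-})$, respectively. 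A vector belonging to the first two subspaces must be proportional to $\ket{\psi^+}$, which is orthogonal to the third subspace; hence the triple intersection is trivial. This forces $\rho = 0$, contradicting $\Tr(\rho) = 1$.

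The main obstacle is structural rather than technical: the three $\pm 1$-eigenspaces have nontrivial one-dimensional pairwise intersections (each pair shares exactly one Bell state), so the argument crucially depends on the geometric fact that these three shared directions are all distinct, which makes the triple intersection trivial. Verifying this noncoincidence relies on the explicit change-of-basis computation above, and a different choice of three orthogonality conditions could a priori fail to yield a contradiction without invoking additional pairs.
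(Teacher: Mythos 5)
Your proof is correct, and at its core it uses the same mechanism as the paper's: only the three orthogonality conditions against the identity are invoked, and in each one positivity forces a sum of nonnegative terms (the imaginary part of the overlap) to vanish. What you do differently is to trace out the environment immediately and phrase each condition as $\Tr[\rho\,\sqrt{\sigma_k}^{\otimes 2}]=0$ for the reduced two-qubit state $\rho$, so that the vanishing imaginary part confines the support of $\rho$ to the two-dimensional $\{+1,-1\}$-eigenspace of $\sqrt{\sigma_k}^{\otimes 2}$; the contradiction is then the purely geometric fact that the three subspaces $\textnormal{span}(\ket{\phi^+},\ket{\psi^+})$, $\textnormal{span}(\ket{\phi^-},\ket{\psi^+})$ and $\textnormal{span}(\ket{\phi^+},\ket{\phi^-})$ intersect trivially. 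The paper instead keeps the purification amplitudes $\alpha_{l,j,k}$ explicit and applies the three conditions sequentially, each one shrinking the set of surviving amplitudes until all vanish. The two arguments are logically equivalent, but your version treats the three constraints symmetrically and makes transparent why three of the six orthogonality conditions already suffice (each pair of eigenspaces shares exactly one Bell state, and the three shared Bell states are distinct), at the price of the extra change-of-basis computation identifying the eigenspaces. The individual claims all check out: each $\sqrt{\sigma_k}$ as defined has spectrum $\{1,i\}$, so $\sqrt{\sigma_k}^{\otimes 2}$ has a two-dimensional $i$-eigenspace, the eigenspace identifications in the Bell basis are correct, and zero diagonal entries of a positive semidefinite matrix do force the corresponding rows and columns to vanish.
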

\begin{proof}
Suppose, the unitary operators $\{\mathds{1}, \sqrt{\sigma_{x}}, \sqrt{\sigma_{y}}, \sqrt{\sigma_{z}}\}$ can be perfectly discriminated  using two-copy parallel testers. Then, since the environment system $\rm{E}$ is unlimited, every input state can be purified and there must exist a pure state $\ket{\psi} = \sum_{l=0}^{d_{\rm E}-1}\sum_{j,k=0}^{1} \alpha_{l,j,k} \ket{l,j,k}$ such that the output states $\ket{\psi_{a}} := (\mathds{1}_{\rm E} \otimes \sqrt{\sigma_{a}} \otimes \sqrt{\sigma_{a}})\ket{\psi}$ are pairwise orthogonal.

We now show that the collection of orthogonality conditions on the states $\ket{\psi_{a}}$ contradict the existence of $\ket{\psi}$.
    
We start with $\ket{\psi}$ and $\ket{\psi_{z}}$,
    \begin{align}
 \bra{\psi} \ket{\psi_{z}} 
    &= \sum \alpha_{l',j',k'}^\ast \; \alpha_{l,j,k} \braket{l'}{l} \mel{j'}{\sqrt{\sigma_{z}}}{j} \mel{k'}{\sqrt{\sigma_{z}}}{k}  \\
    &= \sum_{l} \bigl( \alpha_{l,0,0}^\ast \, \alpha_{l,0,0} - \alpha_{l,1,1}^\ast \, \alpha_{l,1,1}\bigr) + i \bigl( \alpha_{l,0,1}^\ast \, \alpha_{l,0,1} + \alpha_{l,1,0}^\ast \, \alpha_{l,1,0}\bigr) \\
    &= \left[ \sum_{l}  \bigl(\abs{\alpha_{l,0,0}}^2 - \abs{\alpha_{l,1,1}}^2 \bigr) \right] + i \left[ \sum_{l} \bigl( \abs{\alpha_{l,0,1}}^2 + \abs{\alpha_{l,1,0}}^2 \bigr) \right] \\
    &= 0
    \end{align}
Thus, one has $\alpha_{l,1,0} = \alpha_{l,0,1} = 0$ for all $l = 0,\dots,d_{\rm E}-1$. Hence, the state reduces to $\ket{\psi} = \sum_l \alpha_{l,0,0}\ket{l,0,0}+ \alpha_{l,1,1}\ket{l,1,1}$.

Next, we consider the orthogonality of $\ket{\psi}$ and $\ket{\psi_{x}}$
\begin{align}
 \bra{\psi} \ket{\psi_{x}}
   &=  \sum \alpha_{l',k',k'}^\ast \; \alpha_{l,k,k} \braket{l'}{l} \mel{k'}{\sqrt{\sigma_{x}}}{k} \mel{k'}{\sqrt{\sigma_{x}}}{k}  \\
    &= \sum \alpha_{l,k',k'}^\ast \; \alpha_{l,k,k} \mel{k'}{\sqrt{\sigma_{x}}}{k}^2 \\
    &= \frac{i}{2} \sum_{l}  \bigl(\abs{\alpha_{l,0,0}}^2 + \abs{\alpha_{l,1,1}}^2 - \alpha_{l,0,0}\alpha_{l,1,1}^\ast - \alpha_{l,1,1}\alpha_{l,0,0}^\ast)  \\
    &= \frac{i}{2} \sum_{l} \abs{\alpha_{l,0,0} - \alpha_{l,1,1}}^2 \\
    &= 0
\end{align}
so that $\alpha_{l,0,0} = \alpha_{l,1,1} =: \alpha_{l}$ for all $l$.

Finally, for $\ket{\psi}$ and $\ket{\psi_{y}}$,
\begin{align}
\bra{\psi} \ket{\psi_{y}}
   &=  \sum \alpha_{l'}^\ast \; \alpha_{l} \braket{l'}{l} \mel{k'}{\sqrt{\sigma_{y}}}{k} \mel{k'}{\sqrt{\sigma_{y}}}{k}  \\
    &= \sum \abs{\alpha_{l}}^2 \mel{k'}{\sqrt{\sigma_{y}}}{k}^2 \\
    &= 2i \sum_{l} \abs{\alpha_{l}}^{2} \\
    &= 0
\end{align}
It follows that $\alpha_{l} = 0$ and $\ket{\psi} = 0$ which is a contradiction. 
\end{proof}

\section{Proof of Theorem \ref{prop:parallel_vs_cadpt}}
\label{sec:prop_cadpt}
\begin{Theorem*}
    Every classically adaptive tester and every parallel tester is an adaptive tester.
    There exist testers which are parallel and not classically adaptive and vice versa, as well as testers which are parallel and classically adaptive. 
\end{Theorem*}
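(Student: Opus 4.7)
The plan is to decompose the proof into three essentially independent parts: (i) the two inclusions, namely that every parallel tester and every classically adaptive tester is an adaptive tester; (ii) the existence of a tester that is simultaneously parallel and classically adaptive; and (iii) the existence of testers witnessing that neither of the sets of parallel and classically adaptive testers is contained in the other. Parts (i) and (ii) require only explicit circuit-level constructions, while part (iii) I would settle by invoking the numerical separations already produced in Section~\ref{sec:class_adp} via the constrained-separability methods of Section~\ref{sec:constraine_sep_main}.

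For the parallel inclusion I would take a parallel tester $T^i_{\bf{IO}} = \rho_{\bf{I}E} * (M^i_{E\bf{O}})^T$ and set $E_1 := I_2 E$ and $E_2 := O_1 E$; the intermediate channel $\mathcal{K}_{E_1 O_1 \to I_2 E_2}$ is then the unitary permutation of wires $I_2 E O_1 \to I_2 O_1 E$, and the final POVM is the original parallel POVM acting on $O_1 E O_2$. Substituting into the adaptive link product reproduces $T^i_{\bf{IO}}$. For the classically adaptive inclusion I would start with $T^i = \sum_{j=1}^{L} R^j_{I_1 O_1} \otimes S^{i|j}_{I_2 O_2}$ and invoke Proposition~\ref{prop:single_copy_phys_char} to realize $R^j = \rho_0 * (F^j)^T$ with a state $\rho_0$ on $I_1 E_1$ and a POVM $\{F^j\}$ on $E_1 O_1$, and $S^{i|j} = \sigma_j * (N^{i|j})^T$ with states $\sigma_j$ on $I_2 E_2'$ and POVMs $\{N^{i|j}\}_i$ on $E_2' O_2$. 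Enlarging the second memory to $E_2 := L \otimes E_2'$ with a classical register $L$ of dimension $L$, I would define the entanglement-breaking channel $\mathcal{K}(X) = \sum_j \Tr(F^j X)\, \sigma_j \otimes \ketbra{j}{j}_L$ and the measurement $M^i_{E_2 O_2} = \sum_j \ketbra{j}{j}_L \otimes N^{i|j}_{E_2' O_2}$. Trace preservation of $\mathcal{K}$ is immediate from $\sum_j F^j = \mathds{1}_{E_1 O_1}$, and a direct link-product computation then shows that $\rho_0 * K * (M^i)^T$ recovers $T^i$.

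For the intersection in (ii) it suffices to exhibit a product tester: pick any state $\rho_{I_1}$ and any single-copy tester $\{S^i_{I_2 O_2}\}_{i=1}^{N}$ and define $T^i := \rho_{I_1} \otimes \mathds{1}_{O_1} \otimes S^i_{I_2 O_2}$. This is classically adaptive by choosing $L = 1$ with $R^{1} = \rho_{I_1} \otimes \mathds{1}_{O_1}$ and $S^{i|1} = S^i$; it is parallel because, writing $S^i = \sigma_{I_2} * (\widetilde{M}^i_{O_2})^T$ via Proposition~\ref{prop:single_copy_phys_char}, we have $T^i = (\rho_{I_1} \otimes \sigma_{I_2}) \otimes (\mathds{1}_{O_1} \otimes (\widetilde{M}^i_{O_2})^T)$, which matches Definition~\ref{def:unbounded_parallel} with trivial memory.

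For the strict non-inclusions in (iii) I would appeal to the two numerical separations reported in Section~\ref{sec:class_adp}. Discrimination of the square roots of the Pauli channels gives $p_{\mathrm{par}} = 0.9571$ and $p_{\mathrm{ca}} \leq 0.8980$, which forces any optimal parallel tester for this task to lie outside the classically adaptive set; symmetrically, the ensemble $\{\mathrm{id}_2, \mathcal{C}_{\mathrm{ad}, 2/3}, \mathcal{C}_{\mathrm{bf}, 1/3}\}$ gives $p_{\mathrm{par}} = 0.80697$ and $p_{\mathrm{ca}} \geq 0.8118$, exhibiting a classically adaptive tester outside the parallel set. The hard part here is not the conceptual argument but the certification of the gap, which requires rigorous upper bounds for the outperformed side (obtained from the PPT-constrained symmetric extension hierarchy of Section~\ref{sec:constraine_sep_main}) together with matching lower bounds from seesaw optimisation.
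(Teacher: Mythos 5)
Your proposal is correct, and parts (i) and (ii) are essentially in the spirit of the paper's proof, though you argue at the circuit level where the paper argues algebraically: for the inclusion of classically adaptive testers the paper simply computes $W_{\bf IO}=\sum_i T^i_{\bf IO}=G_{{\bf I}{\rm O_1}}\otimes\mathds{1}_{\rm O_2}$ with $\Tr_{\rm I_2}(G)=\rho_{\rm I_1}\otimes\mathds{1}_{\rm O_1}$ and invokes the SDP characterisation of Proposition~\ref{prop:adpative_charcterisation}, whereas you explicitly build the measure-and-prepare channel $\mathcal{K}(X)=\sum_j\Tr(F^jX)\,\sigma_j\otimes\ketbra{j}{j}_{\rm L}$ and the block-diagonal POVM; both are valid, and your construction has the merit of actually exhibiting the adaptive realisation (the paper dismisses the parallel inclusion with ``directly seen from the definitions''). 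The genuine divergence is in part (iii). The paper settles both strict non-inclusions with elementary, computation-free examples: the parallel tester $T^i_{\bf IO}=\ketbra{\phi^1}_{\bf I}\otimes\ketbra{\phi^i}_{\bf O}$ built from Bell states is entangled across the bipartition $\rm I_1O_1|I_2O_2$, while every classically adaptive tester is by construction separable across that cut; and the classically adaptive tester $T^i_{\bf IO}=\sum_{j}\ketbra{0}_{\rm I_1}\otimes\ketbra{j}_{\rm O_1}\otimes\ketbra{j}_{\rm I_2}\otimes\ketbra{i}_{\rm O_2}$ visibly violates the parallel normalisation $\sum_iT^i=\sigma_{\bf I}\otimes\mathds{1}_{\bf O}$. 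Your route via the performance gaps ($p_{\rm par}=0.9571$ vs.\ $p_{\rm ca}\leq 0.8980$ and $p_{\rm par}=0.80697$ vs.\ $p_{\rm ca}\geq 0.8118$) is logically sound and proves something operationally stronger, but it outsources the existence proof to certified SDP bounds from the hierarchy and the seesaw, which is considerably heavier machinery than needed and, strictly speaking, requires extracting rigorous certificates from floating-point solver output. If you want a self-contained proof of the theorem as stated, the two structural examples above (separability across $\rm I_1O_1|I_2O_2$ as the obstruction in one direction, the normalisation constraint as the obstruction in the other) are the cleaner choice; the numerical separations are better kept as the independent statement about discrimination power that the paper records in its Example.
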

\begin{proof}
 We first show that every classically adaptive tester is an adaptive tester. Let the two-copy tester $\{T_{\rm \bf{IO}}^{i}\}_{i=1}^{N}$ be given by
\begin{align}
    \label{eq:ca_decomposition}
   T_{\rm \bf{IO}}^{i} &= \sum_{j=1}^{L} R_{\rm I_{1}O_{1}}^{j} \otimes S_{\rm I_{2}O_{2}}^{i|j}
\end{align}
where $\{R_{\rm I_{1}O_{1}}^{j}\}_{j=1}^{L}$ is a single-copy tester and $\{S_{\rm I_{2}O_{2}}^{i|j}\}_{i=1}^{N}$ are single-copy testers for all $j \in \{1,\dots,L\}$. We show that $\{T_{\rm \bf{IO}}^{i}\}_{i=1}^{N}$ satisfies the definition of an adaptive tester. From the definition of single-copy testers, one computes
\begin{align}
    W_{\rm \bf{IO}} &= \sum_{i=1}^{N} T_{\rm \bf{IO}}^{i} \\
    &= \sum_{j=1}^{L} R_{\rm I_{1}O_{1}}^{j} \otimes \sum_{i=1}^{N} S_{\rm I_{2}O_{2}}^{i|j} \\
    &= \sum_{j=1}^{L} R_{\rm I_{1}O_{1}}^{j} \otimes \sigma_{\rm I_{2}}^{j} \otimes \mathds{1}_{\rm O_{2}} \\
    &= G_{\rm \textbf{I}O_{1}} \otimes \mathds{1}_{\rm O_{2}}
\end{align}
with $G_{\rm \textbf{I}O_{1}} := \sum_{j=1}^{L} R_{\rm I_{1}O_{1}}^{j} \otimes \sigma_{\rm I_{2}}^{j}$ and 
\begin{align}
    \Tr_{\rm I_{2}}(G_{\rm \textbf{I}O_{1}})&=  \sum_{j=1}^{L} R_{\rm I_{1}O_{1}}^{j}\\
    &= \rho_{\rm I_{1}} \otimes \mathds{1}_{\rm O_{1}}
\end{align}
which together with Proposition \ref{prop:adpative_charcterisation} directly implies that $\{T_{\rm \bf{IO}}^{i}\}_{i=1}^{N}$ is adaptive.
That every parallel tester is an adaptive tester can also directly be seen from the definitions.

To see that the sets of classically adaptive and parallel testers have a non-empty intersection, we consider a special subset of classically adaptive testers. We assume that the testers $\{S_{\rm I_{2}O_{2}}^{i|j}\}_{i=1}^{N}$ for each $j\in \{1,\dots,L\}$ from the decomposition \eqref{eq:ca_decomposition} have the property that $\sum_{i=1}^{N} S_{\rm I_{2}O_{2}}^{i|j} = \sigma_{\rm I_{2}} \otimes \mathds{1}_{\rm O_{2}}$ with a state $\sigma_{\rm I_{2}}$ independent of $j$. Now, one calculates
\begin{align}
    \sum_{i=1}^{N} T_{\rm \bf{IO}}^{i} &= \sum_{j=1}^{L} R_{\rm I_{1}O_{1}}^{j} \otimes \sigma_{\rm I_{2}} \otimes \mathds{1}_{\rm O_{2}} \\
    &= \rho_{\rm I_{1}} \otimes \sigma_{\rm I_{2}} \otimes \mathds{1}_{\rm O_{1}O_{2}}
\end{align}
so that $\{T_{\textbf{I}\textbf{O}}^{i}\}_{i=1}^{N}$ satisfies the definition of parallel testers.

It remains to be shown that there are parallel testers which are not classically adaptive and vice versa. To this end, we provide an example for both cases.

Let first $\{\ket{\phi^{i}}\}_{i=1}^{4}$ be the usual two-qubit Bell states and consider the parallel tester $\{T_{\textbf{I}\textbf{O}}^{i}\}_{i=1}^{N}$ given by 
\begin{equation}
    T_{\rm \textbf{I}\textbf{O}}^{i} = \ketbra{\phi^{1}}_{\rm \textbf{I}} \otimes \ketbra{\phi^{i}}_{\rm \textbf{O}}.
\end{equation}
Notice that each tester element $T_{\rm \textbf{I}\textbf{O}}^{i}$ is entangled in the bipartition $\rm{I_{1}O_{1}|I_{2}O_{2}}$ which shows that the tester can not be classically adaptive as these are necessarily separable in that bipartition. 

Next consider the tester $\{T_{\textbf{I}\textbf{O}}^{i}\}_{i=1}^{N}$ given by 
\begin{equation}
    T^{i}_{\rm \textbf{I}\textbf{O}} = \sum_{j=0}^{1} \ketbra{0}_{\rm I_{1}} \otimes \ketbra{j}_{\rm O_{1}} \otimes \ketbra{j}_{\rm I_{2}} \otimes \ketbra{i}_{\rm O_{2}}.
\end{equation}
It can be readily confirmed that 
\begin{equation}
    \sum_{i=0}^{1} T^{i}_{\rm \textbf{I}\textbf{O}} \neq \Tr_{\textbf{\rm O}}\left(\sum_{i=0}^{1} T^{i}_{\rm \textbf{I}\textbf{O}}\right) \otimes \frac{\mathds{1}_{\rm \textbf{O}}}{d_{\rm \textbf{O}}}
\end{equation}
so that the tester is not parallel. 
\end{proof} 

\bibliographystyle{quantum}
\bibliography{ref}
\end{document}